\newtheorem{thm}{Theorem}
\newtheorem*{thm*}{Theorem}
\newtheorem{theorem}[thm]{Theorem}
\newtheorem*{theorem*}{Theorem}
\newtheorem{prop}[thm]{Proposition}
\newtheorem*{prop*}{Proposition}
\newtheorem{proposition}[thm]{Proposition}
\newtheorem*{proposition*}{Proposition}
\newtheorem{lemma}[thm]{Lemma}
\newtheorem*{lemma*}{Lemma}
\newtheorem{corollary}[thm]{Corollary}
\newtheorem*{cor*}{Corollary}
\newtheorem*{cj*}{Conjecture}
\newtheorem*{Def*}{Definition}
\newtheorem*{question*}{Question}
\newtheorem*{problem*}{Problem}
\def\thmhead@plain#1#2#3{%
  \thmname{#1}\thmnumber{\@ifnotempty{#1}{ }\@upn{#2}}%
  \thmnote{ {\the\thm@notefont#3}}}
\let\thmhead\thmhead@plain
\theoremstyle{definition}
\newtheorem*{rem*}{Remark}
\newcommand{\bb}{\begin{equation}\begin{aligned}\hspace{0pt}}
\newcommand{\bbb}{\begin{equation*}\begin{aligned}}
\newcommand{\ee}{\end{aligned}\end{equation}}
\newcommand{\eee}{\end{aligned}\end{equation*}}
\newcommand{\ketbra}[1]{\ket{#1}\!\bra{#1}}
\newcommand{\ketbraa}[2]{\ket{#1}\!\bra{#2}}
\newcommand{\sumno}{\sum\nolimits}
\newcommand{\e}{\varepsilon}
\renewcommand{\epsilon}{\varepsilon}
\newcommand{\dd}{\mathrm{d}}
\newcommand{\ve}{\varepsilon}
\DeclareMathOperator{\Tr}{Tr}
\DeclareMathAlphabet{\pazocal}{OMS}{zplm}{m}{n}
\DeclareMathOperator{\supp}{supp}
\newcommand{\lsmatrix}{\left(\begin{smallmatrix}}
\newcommand{\rsmatrix}{\end{smallmatrix}\right)}
\newcommand{\deff}[1]{\textbf{\emph{#1}}}
\newcommand\xxrightarrow[2][]{\mathrel{%
  \setbox2=\hbox{\stackon{\scriptstyle#1}{\scriptstyle#2}}%
  \stackunder[5pt]{%
    \xrightarrow{\makebox[\dimexpr\wd2\relax]{$\scriptstyle#2$}}%
  }{%
   \scriptstyle#1\,%
  }%
}}
\newcommand{\tends}[2]{\xxrightarrow[\! #2 \!]{\mathrm{#1}}}
\newcommand*\rel@kern[1]{\kern#1\dimexpr\macc@kerna}
\newcommand*\widebar[1]{%
  \begingroup
  \def\mathaccent##1##2{%
    \rel@kern{0.8}%
    \overline{\rel@kern{-0.8}\macc@nucleus\rel@kern{0.2}}%
    \rel@kern{-0.2}%
  }%
  \macc@depth\@ne
  \let\math@bgroup\@empty \let\math@egroup\macc@set@skewchar
  \mathsurround\z@ \frozen@everymath{\mathgroup\macc@group\relax}%
  \macc@set@skewchar\relax
  \let\mathaccentV\macc@nested@a
  \macc@nested@a\relax111{#1}%
  \endgroup
}
\definecolor{Blues5seq1}{RGB}{239,243,255}
\definecolor{Blues5seq2}{RGB}{189,215,231}
\definecolor{Blues5seq3}{RGB}{107,174,214}
\definecolor{Blues5seq4}{RGB}{49,130,189}
\definecolor{Blues5seq5}{RGB}{8,81,156}
\definecolor{Greens5seq1}{RGB}{237,248,233}
\definecolor{Greens5seq2}{RGB}{186,228,179}
\definecolor{Greens5seq3}{RGB}{116,196,118}
\definecolor{Greens5seq4}{RGB}{49,163,84}
\definecolor{Greens5seq5}{RGB}{0,109,44}
\definecolor{Reds5seq1}{RGB}{254,229,217}
\definecolor{Reds5seq2}{RGB}{252,174,145}
\definecolor{Reds5seq3}{RGB}{251,106,74}
\definecolor{Reds5seq4}{RGB}{222,45,38}
\definecolor{Reds5seq5}{RGB}{165,15,21}
\let\nc\newcommand
\renewcommand{\bar}{\;\rule{0pt}{9.5pt}\right|\;}
\nc{\lset}{\left\{\left.}
\nc{\rset}{\right\}}
\nc{\lsetr}{\left\{\,}
\nc{\rsetr}{\right.\right\}}
\nc{\barr}{\;\rule{0pt}{9.5pt}\left|\;}
\newcommand{\wt}{\widetilde}
\let\textleq\relax
\let\texteq\relax
\newcommand{\texteq}[1]{\stackrel{\mathclap{\mbox{\scriptsize#1}}}{=}}
\newcommand{\textleq}[1]{\stackrel{\mathclap{\mbox{\scriptsize#1}}}{\leq}}
\renewenvironment{boxed}[1][white]%
  {\expandafter\ifstrequal\expandafter{#1}{filled}{\begin{tcolorbox}[colback=MidnightBlue!70!black!70!TealBlue!2!white,colframe=MidnightBlue!70!black!70!TealBlue!30!white,breakable=false,enhanced,left=5.75pt,right=5.75pt,grow sidewards by=10pt]}{\begin{tcolorbox}[colback=white,colframe=gray!15,breakable,enhanced,left=5.75pt,right=5.75pt,grow sidewards by=10pt]}}%
  {\end{tcolorbox}}
\nc{\R}{\mathcal{R}}
\nc{\W}{\mathcal{W}}
\nc{\T}{\mathcal{T}}
\nc{\B}{\mathcal{B}}
\nc{\C}{\mathcal{C}}
\nc{\U}{\mathcal{U}}
\nc{\E}{\mathcal{E}}
\nc{\EE}{\mathscr{E}}
\nc{\K}{\mathcal{K}}
\nc{\V}{\mathcal{V}}
\nc{\X}{\mathcal{X}}
\nc{\F}{\mathcal{F}}
\nc{\G}{\mathcal{G}}
\nc{\D}{\mathcal{D}}
\nc{\Y}{\mathcal{Y}}
\nc{\M}{\mathcal{M}}
\nc{\N}{\mathcal{N}}
\nc{\I}{\mathcal{I}}
\nc{\Q}{\mathbb{Q}}
\nc{\RR}{\mathbb{R}}
\nc{\CC}{\mathbb{C}}
\nc{\HH}{\mathbb{H}}
\nc{\MM}{\mathbb{M}}
\nc{\NN}{\mathbb{N}}
\nc{\DD}{\mathbb{D}}
\nc{\id}{\mathbbm{1}}
\nc{\idc}{\mathrm{id}}
\nc{\norm}[2]{\left\lVert#1\right\rVert_{\,#2}}
\nc{\proj}[1]{\ket{#1}\!\bra{#1}}
\nc{\lnorm}[2]{\left\lVert#1\right\rVert_{\ell_{#2}}}
\let\oldproofname\proofname
\renewcommand{\proofname}{\rm\bf{\oldproofname}}
\renewenvironment{proof}[1][\proofname]{\par
\pushQED{\qed}%
\normalfont \topsep6\p@\@plus6\p@\relax
\trivlist
\item\relax
{\bfseries  
#1\@addpunct{.}}\hspace\labelsep\ignorespaces 
}{%
\popQED\endtrivlist\@endpefalse
}
\nc{\rhos}{\rho'}
\nc{\gm}{\mathbin\#}
\newcommand{\sPnorm}{P,=}
\newcommand{\sPsub}{P,\leq}
\newcommand{\sTnorm}{T,=}
\newcommand{\sTsub}{T,\leq}
\newcommand{\smoothing}[2]{#2,\,#1}
\NewDocumentCommand{\s}{m m O{\ve}}{%
 \lowercase{\def\sdist{#1}}
 \lowercase{\def\snorm{#2}}
 \IfEqCase{\sdist}{%
  {t}{%
   \IfEqCase{\snorm}{%
    {norm}{ \smoothing{\sTnorm}{#3} }%
    {n}{ \smoothing{\sTnorm}{#3} }%
    {sub}{ \smoothing{\sTsub}{#3}}%
    {s}{ \smoothing{\sTsub}{#3}}%
   }%
  }%
  {p}{%
   \IfEqCase{\snorm}{%
    {norm}{ \smoothing{\sPnorm}{#3} }%
    {n}{ \smoothing{\sPnorm}{#3} }%
    {sub}{ \smoothing{\sPsub}{#3}}%
    {s}{ \smoothing{\sPsub}{#3}}%
   }%
  }%
 }%
}%
\NewDocumentCommand{\Dmax}{m m O{\ve}}{%
 D_{\max}^{\s{#1}{#2}[#3]}
}%
\def\@sect@ltx#1#2#3#4#5#6[#7]#8{%
    \@ifnum{#2>\c@secnumdepth}{%
        \def\H@svsec{\phantomsection}%
        \let\@svsec\@empty
    }{%
        \H@refstepcounter{#1}%
        \def\H@svsec{%
            \phantomsection
        }%
        \protected@edef\@svsec{{#1}}%
        \@ifundefined{@#1cntformat}{%
            \prepdef\@svsec\@seccntformat
        }{%
            \expandafter\prepdef
            \expandafter\@svsec
            \csname @#1cntformat\endcsname
        }%
    }%
    \@tempskipa #5\relax
    \@ifdim{\@tempskipa>\z@}{%
        \begingroup
        \interlinepenalty \@M
        #6{%
            \@ifundefined{@hangfrom@#1}{\@hang@from}{\csname @hangfrom@#1\endcsname}%
            {\hskip#3\relax\H@svsec}{\@svsec}{#8}%
        }%
        \@@par
        \endgroup
        \@ifundefined{#1mark}{\@gobble}{\csname #1mark\endcsname}{#7}%
        \addcontentsline{toc}{#1}{%
            \@ifnum{#2>\c@secnumdepth}{%
                \protect\numberline{}%
            }{%
                \protect\numberline{\csname the#1\endcsname}%
            }%
            #7}
    }{%
        \def\@svsechd{%
            #6{%
                \@ifundefined{@runin@to@#1}{\@runin@to}{\csname @runin@to@#1\endcsname}%
                {\hskip#3\relax\H@svsec}{\@svsec}{#8}%
            }%
            \@ifundefined{#1mark}{\@gobble}{\csname #1mark\endcsname}{#7}%
            \addcontentsline{toc}{#1}{%
                \@ifnum{#2>\c@secnumdepth}{%
                    \protect\numberline{}%
                }{%
                    \protect\numberline{\csname the#1\endcsname}%
                }%
                #8}%
        }%
    }%
    \@xsect{#5}}%
\begin{document}
\count\footins = 1000

\title{Tight relations and equivalences between smooth relative entropies}

\author{Bartosz Regula}
\email{bartosz.regula@gmail.com}
\affiliation{Mathematical Quantum Information RIKEN Hakubi Research Team, RIKEN Pioneering Research Institute (PRI) and RIKEN Center for Quantum Computing (RQC), Wako, Saitama 351-0198, Japan}

\author{Ludovico Lami}
\email{ludovico.lami@gmail.com}
\affiliation{Scuola Normale Superiore, Piazza dei Cavalieri 7, 56126 Pisa, Italy}
\affiliation{QuSoft, Science Park 123, 1098 XG Amsterdam, the Netherlands}
\affiliation{Korteweg--de Vries Institute for Mathematics, University of Amsterdam, Science Park 105-107, 1098 XG Amsterdam, the Netherlands}
\affiliation{Institute for Theoretical Physics, University of Amsterdam, Science Park 904, 1098 XH Amsterdam, the Netherlands}

\author{Nilanjana Datta}
\email{n.datta@damtp.cam.ac.uk}
\affiliation{Department of Applied Mathematics and Theoretical Physics, Centre for Mathematical Sciences, University of Cambridge, Cambridge CB3 0WA, United Kingdom}

\begin{abstract}
The precise one-shot characterisation of operational tasks in classical and quantum information theory relies on different forms of smooth entropic quantities. A particularly important connection is between the hypothesis testing relative entropy and the smooth max-relative entropy, which together govern many operational settings.
We first strengthen this connection into a type of equivalence: we show that the hypothesis testing relative entropy is equivalent to a variant of the smooth max-relative entropy based on the information spectrum divergence, which can be alternatively understood as a measured smooth max-relative entropy. 
Furthermore, we improve a fundamental lemma due to Datta and Renner that connects the different variants of the smooth max-relative entropy, introducing a modified proof technique based on matrix geometric means and a tightened gentle measurement lemma.
We use the unveiled connections and tools to strictly improve on previously known one-shot bounds and duality relations between the smooth max-relative entropy and the hypothesis testing relative entropy, establishing provably tight bounds between them. The results then allow us to refine other divergence inequalities, in particular sharpening bounds that connect the max-relative entropy with R\'enyi divergences. 
\end{abstract}

\maketitle

\tableofcontents


\section{Introduction}

Smooth relative entropies~\cite{renner_2005} were defined to address the need for a precise understanding of the performance of various operational protocols in information theory 
in settings beyond the asymptotic i.i.d.~one, where it is assumed that the underlying resources (sources, channels, or entangled states) employed in the protocols are available for asymptotically many uses. 
In this asymptotic setting, unique entropic quantities naturally emerge as the 
optimal rates of information-theoretic tasks. In contrast, when the assumptions of the asymptotic i.i.d.~setting are lifted, i.e.~in the so-called {\em{one-shot information theory}}, several different, inequivalent types of 
smooth entropies are required to fully characterise the 
performance of different tasks. Similarly to how many problems in classical information theory can be broadly divided into two types, packing-type and covering-type problems, the more general setting of quantum information can also be categorised in a similar way. On one side, there are problems that can be connected with hypothesis testing, including quantum channel coding~\cite{wang_2012,buscemi_2010-1,leung_2015,anshu_2019-1,cheng_2023-1}, quantum data compression~\cite{renes_2012,cheng_2023-1}, and quantum resource distillation~\cite{brandao_2010-1,liu_2019,regula_2020}. Due to the underlying hypothesis testing structure, the one-shot performance of these protocols can be related with a quantity known as the \emph{hypothesis testing relative entropy} $D^\ve_H$~\cite{wang_2012,buscemi_2010-1}. 
On the other side, covering-type problems include quantum channel simulation~\cite{berta_2011,fang_2020}, privacy amplification~\cite{renner_2005,tomamichel_2013,shen_2024}, decoupling~\cite{dupuis_2014}, convex split~\cite{anshu_2017}, and quantum resource dilution~\cite{brandao_2010-1,liu_2019}. The one-shot characterisation of such protocols typically relies on a quantity known as the \emph{smooth max-relative entropy} $D^\ve_{\max}$~\cite{renner_2005,datta_2009-2}. Understanding the precise relations between the two smooth relative entropies has thus been an important problem in the one-shot characterisation of quantum information theory~\cite{datta_2013-1,tomamichel_2013,dupuis_2012,anshu_2019,wang_2019}.

Despite their many differences, the hypothesis testing relative entropy and the max-relative entropy are known to be quantitatively related, albeit in a complementary fashion. Informally, the value of $D^\ve_{\max}$ closely matches that of $D^{1-\ve}_H$, up to asymptotically negligible factors~\cite{tomamichel_2013,datta_2013-1}. This means that the two smooth quantities satisfy a  `weak/strong converse duality': understanding the behaviour of $D^\ve_H$ in the small $\ve$ regime (weak converse) closely mirrors the understanding the large $\ve$ (strong converse) behaviour of $D^{\ve}_{\max}$, and vice versa.
This fact was crucial in obtaining results such as asymptotic equipartition theorems for smooth entropic quantities~\cite{tomamichel_2009,brandao_2010,datta_2009,datta_2013-1,tomamichel_2016,lami_2024-2,lami_2024-1} or in characterising the higher-order corrections of optimal rates in quantum information tasks~\cite{tomamichel_2013,datta_2015}.
However, the precise one-shot bounds that connected $D^\ve_{\max}$ and $D^{1-\ve}_H$ in the literature are often not tight, motivating us to look for alternative approaches and stronger links.
\smallskip

\section{Summary of main results}

In this paper, we introduce new techniques for the study of the relations between 
$D^\ve_{\max}$ and $D^{1-\ve}_H$, strengthening both the conceptual and the quantitative connections. {The framework introduced here is strictly stronger than all prior techniques: we strengthen or recover all previously known bounds, and many of our contributions are major improvements over the state of the art, yielding provably tight bounds.}
Our approach employs an 
intermediate  quantity, $\wt{D}^\ve_{\max}$, which is related to the smooth max-relative entropy and was originally introduced in~\cite{datta_2015} as a type of information spectrum divergence~\cite{tomamichel_2013} (see Section~\ref{sec:prelim} for definitions). 
This variant of the max-relative entropy was previously used as a technical tool in many proofs and also found use in characterising quantum differential privacy. Our closer investigation in Section~\ref{sec:equiv} reveals that it is intrinsically connected to both $D^\ve_{\max}$ and $D^{1-\ve}_H$. Specifically, we show that $\wt{D}^\ve_{\max}$ corresponds to a measured variant of the smoothed max-relative entropy (Proposition~\ref{prop:equivalence_measured}), and that it is in fact equivalent to $D^{1-\ve}_H$ in a precise sense: 
in Theorem~\ref{thm:equivalence_DH_Dtilde} 
we prove that, for all pairs of states $\rho$ and $\sigma$ and for all $\e\in (0,1)$, 
\begin{align}
  D^{1-\varepsilon}_H(\rho \| \sigma) = \inf_{\mu\in(0,\ve]}\left[ \widetilde{D}^{\ve-\mu}_{\max} (\rho \| \sigma) + \log\frac1\mu \right],\quad 
  \widetilde{D}^{1-\varepsilon}_{\max}(\rho\|\sigma) = \sup_{\mu \in (0,\varepsilon]} \Big[ D^{\ve-\mu}_H (\rho \| \sigma) - \log \frac1\mu \Big] ,
\end{align}
implying that one can reconstruct the hypothesis testing relative entropy function $D^{1-\ve}_H$ from $\wt{D}^\ve_{\max}$, and vice versa. 

This finding allows us to establish new tight bounds between the hypothesis testing relative entropy and $\wt{D}^\ve_{\max}$ (Lemma~\ref{lem:DH_Dtilde}). To relate the latter with the operationally important quantity $D^\ve_{\max}$, previous works 
relied on an important lemma originally shown by Datta and Renner~\cite{datta_2009-1}, which, however, is not tight for several types of distance measures employed in the smoothing. We introduce an improved proof technique based on the operator geometric mean that leads to a tightening of the previous statement, in particular when the smoothing is over normalised quantum states. Namely, in Theorem~\ref{tightened_DR_lemma} we show that given a state $\rho$ and two positive semi-definite operators $A,Q\geq 0$ with $\Tr Q\leq \ve < 1$, if $\rho$ is approximately dominated by $A$ with `remainder' $Q$, in the sense that the operator inequality $\rho\leq A+Q$ holds, then we can find a smoothing of $\rho$ that is \emph{exactly} dominated by $A$ up to a small rescaling: formally, we can find a normalised state $\rho'$ such that 
\bb
F\big(\rho,\,\rho'\big) \geq 1-\ve\, ,\qquad \frac12 \left\| \rho - \rho' \right\|_1 \leq \sqrt{\ve}\, ,\qquad \rho' \leq \frac{A}{1-\ve}\, .
\ee
We also obtain an improved statement of this result for smoothing over subnormalised quantum states, which involves a tightening of another key tool in quantum information theory, 
the gentle measurement lemma (see Lemma~\ref{tighter_gentle_measurement_lemma}). 
Our proof method is also easy to generalise, which we exemplify with a multi-partite extension of the lemma to a setting of simultaneous state smoothing (Corollary~\ref{cor:joint_smoothing}). 

We apply our results to obtain a set of tight bounds that elucidate the 
duality relations between $D^\ve_{\max}$ and $D^{1-\ve}_H$ in Section~\ref{sec:bounds}. 
{%
We consider both of the commonly used definitions of smoothing in quantum information --- based on  either the trace distance or the purified distance --- and in the process characterise the connections and differences between them. 
Our main result in  Theorem~\ref{cor:wsc} shows that the following relations hold for all states $\rho$ and $\sigma$, all $\ve \in (0,1)$, and all $\mu\in (0,\ve]$: for the trace distance smoothing, we have
\bb\label{eq:intro_dmax_trace}
D_{\max}^{\sqrt{\ve}}(\rho\|\sigma)  + \log \frac1\ve \leq\; & D^{1-\ve}_H(\rho\|\sigma) \leq D_{\max}^{\ve-\mu}(\rho\|\sigma) + \log \frac{1}\mu\,,
\ee
and for the purified smoothing we get
\bb\label{eq:intro_dmax_purified}
D_{\max}^{\sqrt{\ve}}(\rho\|\sigma)  + \log \frac1\ve \leq\; & D^{1-\ve}_H(\rho\|\sigma) \leq D_{\max}^{\sqrt{\ve-\mu}}(\rho\|\sigma) 
+ \log \frac{F_2(1 - \e,\e-\mu)}{\mu^2},
\ee
where 
$F_2(p,q) \coloneqq \left(\sqrt{pq} + \sqrt{(1-p)(1-q)}\right)^2 \leq 1$ denotes the binary fidelity. 
Both of the lower bounds on $D^{1-\ve}_H$ are a significant improvement over previously known one-shot relations, and our purified distance upper bound in~\eqref{eq:intro_dmax_purified} also refines all known results of this type.
Crucially, all of the inequalities are tight: none of the $\ve$ dependencies can be improved in general, with the lower bounds tight in the case $\sigma=\rho$, the upper bound of~\eqref{eq:intro_dmax_trace} tight for classical (commuting) states, and the upper bound of~\eqref{eq:intro_dmax_purified} tight for pure states.
}%

Going further, in Section~\ref{sec:bounds_renyi} we employ our techniques to strengthen also the bounds connecting the smooth max-relative entropy with the R\'enyi divergences, leading also to a bound for the hypothesis testing relative entropy. To wit, in Corollaries~\ref{cor:var-mrel} and~\ref{cor:renyi_lower} we show that, for all $\ve\in (0,1)$, all $\alpha>1$, and all $\beta \in (0,1)$,
\bb
D_{\max}^\ve(\rho \| \sigma) &\leq \wt{D}_{\alpha} (\rho \| \sigma) + \frac{1}{\alpha-1} \log \frac{1}{\ve^2}\, ,\\
D^\ve_H(\rho\|\sigma) &\geq D_\beta(\rho \| \sigma) - \frac{\beta}{1-\beta} \log \frac{1}{\ve} + \log \frac{1}{1-\ve}\,,
\ee
where $\wt{D}_\alpha$ and $D_\beta$ denote, respectively, the sandwiched R\'enyi relative entropy (given by~\eqref{sandwiched_Renyi}) and the Petz--R\'enyi relative entropy~\eqref{Petz_Renyi}.
The bounds obtained in this way give tight constraints on the asymptotic expansion of several of the smooth divergence variants discussed here.

We also obtain other auxiliary results which may be of independent interest: {(i)~improved  connections between the hypothesis testing relative entropy $D^\ve_H$ and the information spectrum divergence $D^\ve_s$ (Proposition~\ref{lem:info_spec});} (ii)~a tightening of the so-called quantum substate theorem (Corollary~\ref{cor:qsstate}); (iii)~a variant (Theorem~\ref{thm:Frenkel-extn}) of  a recent important result, namely, Frenkel's integral representation for the Umegaki relative entropy, but with the integrand being a function of $\wt{D}^\ve_{\max}$, and (iv)~a weak/strong converse duality relation between 
$D_H^{1-\ve}$ and the Hilbert projective metric (Corollary~\ref{cor:hilbert_wsc}).


\section{Smooth divergences}\label{sec:prelim}


\subsection{Hypothesis testing relative entropy and max-relative entropy}

We use Greek letters ($\rho, \sigma$) to denote quantum states, which we understand to be positive semidefinite operators of trace one acting on a finite-dimensional Hilbert space. We will sometimes specialise our results to classical probability distributions, which are denoted by Latin letters ($p,q$); we can equivalently interpret them as commuting quantum states --- that is, states that are diagonal in a common basis. 
All logarithms are to base 2 unless otherwise stated.

Smooth divergences are broadly understood as divergences (relative entropies) between quantum states that incorporate the allowance for some form of error or uncertainty about the states under consideration, quantified by a smoothing parameter $\ve$.

Perhaps the most fundamental divergence 
in one-shot quantum information theory is the \deff{hypothesis testing relative entropy}, defined for two quantum states $\rho$ and $\sigma$ as~\cite{wang_2012,buscemi_2010-1}
\begin{equation}\begin{aligned}\label{eq:dh_def}
    D^\ve_H (\rho \| \sigma) \coloneqq - \log \inf \lset \Tr M \sigma \bar 0 \leq M \leq \id,\; \Tr (\id - M) \rho \leq \ve \rset,
\end{aligned}\end{equation}
where $\ve \in [0,1]$.
Equivalently, $D^\ve_H (\rho \| \sigma) = - \log \beta_\ve(\rho, \sigma)$, where $\beta_\ve(\rho, \sigma)$ denotes the optimal probability of the type II error of hypothesis testing between $\rho$ and $\sigma$ when the type I error probability is constrained to be at most $\ve$.
The divergence is conceptually very closely related to smooth relative entropies, although it is defined slightly differently than conventional smooth divergences~\cite{renner_2005} --- rather than employing an optimisation over a neighbourhood of states, it can be regarded a smoothed variant of the Petz--R\'enyi divergence of order 0 (min-relative entropy~\cite{datta_2009-2}) under a notion of `operator smoothing'~\cite{buscemi_2010-1}.

Another fundamental quantity that we focus on here is the \deff{max-relative entropy}~\cite{datta_2009-2}
\begin{equation}\begin{aligned}\label{eq:dmax_def}
    D_{\max} (\rho \| \sigma) \coloneqq 
    \log \inf \lset \lambda \geq 0 \bar \rho \leq \lambda \sigma \rset,
\end{aligned}\end{equation}
{where $\inf \emptyset = \infty$ and we take $\log(\infty) \coloneqq \infty$ for consistency.} 
This can be understood as the sandwiched R\'enyi divergence $\wt{D}_\alpha$ of order $\infty$~\cite{muller-lennert_2013}. 
The smooth variant of this quantity is defined not by evaluating $D_{\max}(\rho \| \sigma)$ exactly, but instead by 
minimising
over all states $\rho' \approx_{\ve} \rho$ in an $\ve$-ball around $\rho$~\cite{renner_2005,datta_2009-2}. This definition crucially depends on how exactly we define the smoothing neighbourhood, and in particular on how we quantify the distance between quantum states.

Two of the most fundamental measures of distance are the \deff{trace distance} (total variation distance) $\frac12 \norm{\rho - \rho'}{1}$, where $\norm{X}{1} = \Tr \sqrt{X^\dagger X}$ is the Schatten 1-norm, as well as the \deff{purified distance} $P(\rho, \rho') \coloneqq \sqrt{1 - F(\rho,\rho')}$, where $F(\rho,\rho')$ denotes the Uhlmann fidelity
\begin{equation}\begin{aligned}\label{eq:fidelity_def}
    F(\rho,\rho') &\coloneqq \norm{\sqrt{\rho\vphantom{T}}\sqrt{\rho'}}{1}^{\,2} = \left( \Tr\sqrt{\sqrt{\rho}\rho'\sqrt{\rho}} \right)^2.
\end{aligned}\end{equation}
The definition of the distance measures is often extended to cover the case where $\rho'$ is not necessarily a normalised quantum state, but rather a \emph{subnormalised} positive operator with $\Tr \rho' \leq 1$. As long as $\rho$ is normalised, the definition of fidelity in Eq.~\eqref{eq:fidelity_def} does not change~\cite{tomamichel_2016}; however, the form of the trace distance needs to be adjusted to account for subnormalised states.
For $\rho$ and $\rho'$ with $\Tr \rho = 1$ and $\Tr \rho' \leq 1$, we define the \deff{generalised trace distance} as~\cite{tomamichel_2016}
\begin{equation}\begin{aligned}
    \norm{\rho - \rho'}{+} &\coloneqq \frac12\norm{\rho - \rho'}{1} + \frac12 \Tr(\rho-\rho')
    = \Tr (\rho - \rho')_+\, ,
\end{aligned}\end{equation}
where $\Tr(\cdot)_+$ denotes the trace of the positive part of a Hermitian operator. 
Using a variational characterisation of $\Tr(\cdot)_+$, we can rewrite this as
\begin{equation}\begin{aligned}\label{eq:trace_variational}
    \norm{\rho - \rho'}{+} = \max_{0 \leq M \leq \id} \Tr \left[ M (\rho - \rho') \right] =  \min \lset \Tr X \bar \rho - \rho' \leq X,\; X \geq 0 \rset,
\end{aligned}\end{equation}
where the formulation as a maximum neatly expresses the quantum Neyman--Pearson lemma~\cite{HELSTROM}, while the dual formulation as a minimum will often find use in this paper.

We then define the \deff{smooth max-relative entropy} as
\begin{equation}\begin{aligned}
    D^{\smoothing{\Delta}{\ve}}_{\max} (\rho \| \sigma) \coloneqq \inf_{\rho' \in \B^\ve_\Delta(\rho)} D_{\max}(\rho' \| \sigma),
\end{aligned}\end{equation}
where $\Delta$ denotes one of the following choices of smoothing:


\hspace*{-10pt}
\begin{tblr}{
  colspec = {m{1.8cm} Q[l,wd=3.8cm] l},
  stretch = 1.15
}
\toprule
Notation & Smoothing metric & Definition of smoothing ball\\
\midrule
$\sTnorm$ & Trace distance, normalised &
$\B^\ve_{\sTnorm}(\rho) \coloneqq \lset \rho' \bar \frac12\norm{\rho - \rho'}{1} \leq \ve,\; \rho' \geq 0,\; \Tr \rho' = 1 \rset$\\

$\sTsub$ & Trace distance, subnormalised &
$\B^\ve_{\sTsub}(\rho) \coloneqq \lset \rho' \bar \norm{\rho - \rho'}{+} \leq \ve,\; \rho' \geq 0,\; \Tr \rho' \leq 1 \rset$\\

$\sPnorm$ & Purified distance, normalised &
$\B^\ve_{\sPnorm}(\rho) \coloneqq \lset \rho' \bar \sqrt{1-F(\rho,\rho')} \leq \ve,\; \rho' \geq 0,\; \Tr \rho' = 1 \rset$\\

$\sPsub$ & Purified distance, subnormalised &
$\B^\ve_{\sPsub}(\rho) \coloneqq \lset \rho' \bar \sqrt{1-F(\rho,\rho')} \leq \ve,\; \rho' \geq 0,\; \Tr \rho' \leq 1 \rset$\\
\bottomrule
\end{tblr}\\

To relate the two types of smoothing based on trace and purified distance, from the Fuchs--van de Graaf inequalities~\cite{fuchs_1999,tomamichel_2016}
\begin{equation}\begin{aligned}\label{eq:fvdg}
    1 - \sqrt{F(\rho, \rhos)} \leq \|\rho - \rhos\|_+ \leq \sqrt{1- F(\rho,\rhos)},
\end{aligned}\end{equation}
one derives immediately that
\bb
\Dmax{P}{n}[\sqrt{\ve(2-\ve)}](\rho\|\sigma) \leq \Dmax{T}{n}(\rho\|\sigma) \leq \Dmax{P}{n}(\rho\|\sigma)
\label{standard_vs_infidelity}
\ee
and likewise for the subnormalised variants. 

The smoothed quantities $D^{\ve}_H$ and $D^\ve_{\max}$ are not a priori related. A fundamental connection between them arises from the fact that their difference becomes neglibile when one considers many i.i.d.\ copies of quantum states, and it is known that~\cite{hiai_1991,ogawa_2000,tomamichel_2009}
\begin{equation}\begin{aligned}
    \lim_{n\to\infty} \frac1n D^{\ve}_H(\rho^{\otimes n} \| \sigma^{\otimes n}) = \lim_{n\to\infty} \frac1n \Dmax{P}{n}(\rho^{\otimes n} \| \sigma^{\otimes n}) = \lim_{n\to\infty} \frac1n \Dmax{T}{n}(\rho^{\otimes n} \| \sigma^{\otimes n}) = D(\rho \| \sigma)
\end{aligned}\end{equation}
for all $\ve \in (0,1)$, where 
\bb
D(\rho\|\sigma) \coloneqq \Tr \rho (\log \rho - \log \sigma)
\label{Umegaki}
\ee
denotes the (Umegaki) quantum relative entropy. In~\eqref{Umegaki}, one sets $D(\rho\|\sigma)=\infty$ if $\supp \rho \not\subseteq \supp\sigma$. 
This was later tightened to a one-shot relation in~\cite{tomamichel_2013}, namely
\begin{equation}\begin{aligned}\label{eq:first_weakstrong}
    \Dmax{P}{sub}[\sqrt{\ve}](\rho\|\sigma) - \log \frac{\left|\operatorname{spec}(\sigma)\right|}{\ve}  &\leq D^{1-\ve}_H(\rho\|\sigma) \leq \Dmax{P}{sub}[\sqrt{\ve-\mu}](\rho\|\sigma) + \log \frac{27 (1-\ve+\mu)}{\mu^3}
\end{aligned}\end{equation}
{%
for all $\ve \in (0,1)$ and all $\mu\in (0,\ve]$. 
This showed that the values of the two divergences can be quantitatively related in complementary error regimes --- a property that we refer to as the `weak/strong converse duality' between $D^\ve_H$ and $D^\ve_{\max}$. We note that this terminology is not necessarily meant to imply a rigorous connection with an operational task, but rather to merely evoke the intuitive idea that the weak converse regime is concerned with small $\ve$ while the strong converse regime with large $\ve$. 
}%
These bounds were subsequently improved in many works~\cite{datta_2013-1,dupuis_2012,anshu_2019}, but the problem of finding tight one-shot relations and bounds remained open.


\subsection[\texorpdfstring%
{Modified max-relative entropy ${\wt{D}^\ve_{\max}}$}%
{Modified max-relative entropy}]{Modified max-relative entropy $\boldsymbol{\wt{D}^\ve_{\max}}$}
\label{Sec:newdmax}

An important quantity in our approach will be a variant of the smooth max-relative entropy, formalised by Datta and Leditzky~\cite{datta_2015} as
\begin{equation}\begin{aligned}
    \wt{D}^\ve_{\max} (\rho \| \sigma) \coloneqq \log \inf \lset \lambda \geq 0 \bar \Tr (\rho - \lambda \sigma)_+ \leq \ve \rset.
    \label{Datta_Leditzky}
\end{aligned}\end{equation}

In fact, it was first introduced as an alternative to the information spectrum divergence $D_s^\ve$~\cite{tomamichel_2013} and denoted $\overline{D}_s^{\,\ve}$ in~\cite{datta_2015}. Here we prefer to relate it to the max-relative entropy, the connection with which becomes clearer once we use the variational form of $\Tr(\cdot)_+$ to write~\cite{nuradha_2024}
\begin{equation}\begin{aligned}\label{eq:Dtilde_def}
    \wt{D}^\ve_{\max} (\rho \| \sigma) = \log \inf \lset \lambda \bar \rho \leq \lambda \sigma + Q,\; Q \geq 0,\; \Tr Q \leq \ve \rset,
\end{aligned}\end{equation}
which more closely resembles the optimisation problem that defines $D_{\max}$. Indeed, we notice that $\wt{D}^0_{\max} (\rho\|\sigma) = D_{\max}(\rho \| \sigma)$, which justifies regarding $\wt{D}^\ve_{\max}$ as a smoothed variant of $D_{\max}$.

{As we show in detail in Appendix~\ref{app:continuity}, for any two fixed states $\rho,\sigma$ the function $\e\mapsto \widetilde{D}_{\max}^\e(\rho\|\sigma)$ is continuous and strictly decreasing on the interval $\e\in \big(1-\Tr \rho \Pi_\sigma, 1\big]$, where $\Pi_\sigma$ denotes the projector onto the support of $\sigma$.}

The modified max-relative entropy $\wt{D}_{\max}^\ve$ made implicit appearances in many early works on quantum hypothesis testing already~\cite{ogawa_2000,nagaoka_2007,brandao_2010,mosonyi_2015} through its connection with the information spectrum method~\cite{nagaoka_2007,datta_2009-1}. 
{%
By Eq.~\eqref{Datta_Leditzky}, it can be understood as {a generalised} inverse function of the hockey-stick divergence $E_\lambda (\rho \| \sigma) \coloneqq \Tr (\rho - \lambda \sigma)_+$, which has also found applications in various aspects of one-shot information theory~\cite{sharma_2013,liu_2017-1,yang_2019,hirche_2023,Hirche2023}.  
 $\wt{D}_{\max}^\ve$} was first formalised as a divergence in~\cite{datta_2015}, where it was used to study second-order expansions for optimal rates of certain quantum information-theoretic tasks in the i.i.d.~setting.
 Some of its basic properties, including the data processing inequality, were established there. 
The classical analogue of this quantity independently found use in the study of differential privacy~\cite{dwork_2010,dwork_2014}, and a generalisation of these results later also led to $\wt{D}^\ve_{\max}$ playing a role in characterising quantum differential privacy~\cite{nuradha_2024}.\footnote{The use of $\wt{D}^\ve_{\max}$ in the context of quantum differential privacy can in fact be traced back to the earlier works~\cite{zhou_2017,hirche_2023}. In particular, some results of \cite{hirche_2023} are stated in terms of an inverse function of $E_\lambda$. 
The same quantity implicitly appeared in~\cite{zhou_2017}, although it is not immediately clear that the definition there corresponds to $\wt{D}^\ve_{\max}$. We clarify the equivalence of the different definitions in Appendix~\ref{app:definitions}.} A conditional entropy variant of the divergence also made appearances in the study of classical privacy amplification~\cite{renes_2018,abdelhadi_2020}, where it was connected with the smooth min-entropy~\cite{renner_2004}.

Although $\wt{D}^\ve_{\max}$ was employed mostly as a technical tool in one-shot quantum information theory, we will show that the quantity has many precise connections with both the hypothesis testing relative entropy $D^\ve_H$ and with the standard smoothed max-relative entropy $\Dmax{T}{n}$ that were not previously known, ultimately leading to tighter bounds between all of these quantities.


{%
\section{On normalised and subnormalised smoothing}\label{sec:subnormalised}
}


{%
As the first step of our investigation, we will 
clarify a basic property of the smooth divergences that will help consolidate the many possible choices of smoothing encountered in the literature.

A natural question in the definition of the smooth max-relative entropy is why the ostensibly less physical subnormalised states are considered among the smoothing variants. 
In operational settings, it may indeed seem desirable to optimise only over normalised states, as these are the only states that can result from the application of a deterministic state transformation (quantum channel).
However, some technical benefits can be gained by relaxing the smoothing to subnormalised states --- notably, the invariance of the resulting conditional entropic quantities under isometries, which leads to useful duality relations~\cite{tomamichel_2010,tomamichel_2016} --- making this a frequently made choice in many works on one-shot quantum information theory.

We can show, however, that the two smoothing notions are equivalent
whenever the normalised smooth max-relative entropy is non-zero, or, equivalently, when the subnormalised quantity takes positive values. Naturally, most of the questions about the behaviour of these quantities is concerned with regimes in which they are larger than zero, and hence it is useful to realise that the dichotomy between the normalised and subnormalised definitions is effectively irrelevant in such cases.

\begin{boxed}
\begin{lemma}\label{lem:subnormalised_normalised_main}
For any two quantum states $\rho$ and $\sigma$ and any $\ve \in [0,1]$, it holds that
\begin{equation}\begin{aligned}
    \Dmax{T}{n} (\rho \| \sigma) &= \max \Big\{ \Dmax{T}{sub} (\rho\|\sigma),\, 0 \Big\},\\
    \Dmax{P}{n} (\rho \| \sigma) &= \max \left\{ \Dmax{P}{sub} (\rho\|\sigma),\, 0 \right\}.
\end{aligned}\end{equation}
As a consequence, $\Dmax{T}{n} (\rho \| \sigma) = \Dmax{T}{sub} (\rho \| \sigma)$ if and only if $\ve \leq \frac12\norm{\rho-\sigma}{1}$, and $\Dmax{P}{n} (\rho \| \sigma) = \Dmax{P}{sub} (\rho \| \sigma)$ if and only if $\ve \leq P(\rho,\sigma)$.
\end{lemma}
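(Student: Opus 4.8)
The plan is to prove the first pair of identities; the stated equivalences then follow by unwinding when the right-hand maximum is achieved by $0$ versus by the subnormalised quantity. Since $\B^\ve_{\sTnorm}(\rho) \subseteq \B^\ve_{\sTsub}(\rho)$ and $\B^\ve_{\sPnorm}(\rho) \subseteq \B^\ve_{\sPsub}(\rho)$, we automatically have $\Dmax{T}{sub}(\rho\|\sigma) \leq \Dmax{T}{n}(\rho\|\sigma)$ and likewise for the purified case; moreover $\rho \in \B^\ve_{\sTnorm}(\rho)$ gives $\Dmax{T}{n}(\rho\|\sigma) \leq D_{\max}(\rho\|\sigma)$, which is $\geq 0$ since $\rho,\sigma$ are both normalised states. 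Hence the nontrivial direction is: if $\Dmax{T}{sub}(\rho\|\sigma) < 0$, i.e.\ there is a subnormalised $\rho''$ within trace distance $\ve$ of $\rho$ with $\rho'' \leq \lambda\sigma$ for some $\lambda < 1$, then in fact $\Dmax{T}{n}(\rho\|\sigma) = 0$, meaning we can find a \emph{normalised} $\rho'$ within trace distance $\ve$ of $\rho$ with $\rho' \leq \sigma$.

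**The main construction.**
Given such a subnormalised $\rho''$ with $\rho'' \leq \lambda\sigma$, $\lambda \le 1$, $\Tr\rho'' \ge 1-\ve$ say (if $\Tr\rho''$ is even smaller the generalised trace distance bound would be violated), I would set $\rho' \coloneqq \rho'' + (1 - \Tr\rho'')\,\tau$ where $\tau$ is a suitably chosen normalised state; then $\rho'$ is normalised. The two things to check are (i) $\rho' \leq \sigma$ and (ii) $\frac12\|\rho-\rho'\|_1 \leq \ve$. For (i), the natural choice is $\tau \propto (\sigma - \lambda\sigma) = (1-\lambda)\sigma$ when $\lambda<1$, so that $\rho' \le \lambda\sigma + (1-\lambda)\sigma = \sigma$ provided $1-\Tr\rho'' \le 1-\lambda$, i.e.\ $\Tr\rho'' \ge \lambda$ — which holds because $\Tr\rho'' \le \lambda\Tr\sigma = \lambda$ would be the wrong direction, so one must instead first replace $\lambda$ by $\max\{\lambda,\Tr\rho''\}$ (still $<1$ if both are) and argue more carefully, or simply note $\Tr\rho'' = \Tr\rho'' \le \lambda$ forces us to top up to trace one using $\sigma$-mass that is genuinely available since $\Tr(\sigma - \rho'') = 1 - \Tr\rho'' $ and $\sigma - \rho'' \ge \sigma - \lambda\sigma = (1-\lambda)\sigma \ge 0$. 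Taking $\rho' = \rho'' + (\sigma - \rho''){\big/}{\textstyle\text{(trace normalisation is automatic)}}$ — more cleanly, $\rho' \coloneqq \rho'' + (\sigma-\rho'')$... — is obviously wrong; the correct move is $\rho' \coloneqq \rho'' + \nu(\sigma-\rho'')$ with $\nu = 1-\Tr\rho''$ chosen to make $\Tr\rho'=1$, giving $\rho' = (1-\nu)\rho'' + \nu\sigma \le (1-\nu)\lambda\sigma + \nu\sigma \le \sigma$. For (ii), $\|\rho - \rho'\|_1 \le \|\rho-\rho''\|_1 + \|\rho''-\rho'\|_1$; the first term is controlled by the generalised trace distance bound together with $\|\rho-\rho''\|_1 = 2\|\rho-\rho''\|_+ - \Tr(\rho-\rho'') = 2\|\rho-\rho''\|_+ - (1-\Tr\rho'')$, and $\|\rho''-\rho'\|_1 = \nu\|\sigma-\rho''\|_1$, which needs bounding by $1-\Tr\rho''$ up to the right constant — here I would use that $\sigma - \rho''$ is close to a positive operator of small trace. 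Combining, $\frac12\|\rho-\rho'\|_1 \le \|\rho-\rho''\|_+ \le \ve$, as the telescoping of trace-distance contributions is designed to close exactly.

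**The purified case and the final equivalences.**
For the purified-distance statement the same strategy applies: starting from a subnormalised $\rho''$ with $P(\rho,\rho'')\le\ve$ and $\rho'' \le \lambda\sigma$, $\lambda\le 1$, form $\rho' = (1-\nu)\rho'' + \nu\sigma$ with $\nu$ fixing the trace, so $\rho'\le\sigma$ as above; then one must show $P(\rho,\rho') \le \ve$. Since $\rho'$ is obtained from $\rho''$ by adding positive mass, monotonicity-type behaviour of fidelity under this operation (concavity of fidelity, or the fact that $F$ is nondecreasing when the second argument grows toward a state containing $\rho$'s support appropriately) should give $F(\rho,\rho') \ge F(\rho,\rho'')$, hence $P(\rho,\rho') \le P(\rho,\rho'') \le \ve$; this is the step I expect to require the most care, since fidelity is not simply monotone under addition of arbitrary positive operators — one likely invokes $F(\rho, (1-\nu)\rho'' + \nu\sigma) \ge (1-\nu)F(\rho,\rho'') + \nu F(\rho,\sigma) \ge F(\rho,\rho'')$ \emph{only if} $F(\rho,\sigma) \ge F(\rho,\rho'')$, which need not hold, so instead I would use the sharper fact that $\rho' \ge (1-\nu)\rho''$ as operators together with the operator-monotonicity characterisation $F(\rho,X) = \min$ over measurements, or Uhlmann's theorem, to conclude $F(\rho,\rho') \ge F(\rho,\rho'')$ directly from $\rho' \ge \rho''$ — wait, that inequality $\rho' \ge \rho''$ fails too — so the honest route is: $\rho'$ and $\rho''$ have $\|\rho' - \rho''\|$ small, and combine a perturbation bound for fidelity with the trace gap $1-\Tr\rho''$, which the purified-distance bound already forces to be at most $1-(1-\ve)^2 = \ve(2-\ve)$... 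I would ultimately push this through using the known relation between purified distance and the "gap to normalisation" for subnormalised states. Once both displayed identities are established, the two "if and only if" statements are immediate: $\Dmax{T}{n}(\rho\|\sigma) = \Dmax{T}{sub}(\rho\|\sigma)$ fails exactly when $\Dmax{T}{sub}(\rho\|\sigma) < 0$, i.e.\ when some subnormalised $\rho''$ within trace distance $\ve$ satisfies $\rho'' \le \lambda\sigma$ with $\lambda<1$; taking $\rho'' = \sigma$ (note $\frac12\|\rho-\sigma\|_1 \ge \ve$ is the stated condition) shows this can happen precisely when $\ve \le \frac12\|\rho-\sigma\|_1$, and conversely if $\ve > \frac12\|\rho-\sigma\|_1$ then $\sigma$ itself is a valid \emph{normalised} smoothing giving $D_{\max}(\sigma\|\sigma) = 0$, forcing $\Dmax{T}{sub}(\rho\|\sigma) \le 0$ hence equal to $\Dmax{T}{n}$. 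The purified version is identical with $P(\rho,\sigma)$ in place of $\frac12\|\rho-\sigma\|_1$.
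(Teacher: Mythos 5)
There is a genuine gap, and it sits at the centre of the lemma. After correctly noting the inclusion $\B^\ve_{\sTnorm}(\rho) \subseteq \B^\ve_{\sTsub}(\rho)$, which gives $\Dmax{T}{sub}(\rho\|\sigma) \leq \Dmax{T}{n}(\rho\|\sigma)$ and hence $\max\{\Dmax{T}{sub}(\rho\|\sigma),0\}\le \Dmax{T}{n}(\rho\|\sigma)$, you assert that the only nontrivial direction is the implication $\Dmax{T}{sub}(\rho\|\sigma)<0 \Rightarrow \Dmax{T}{n}(\rho\|\sigma)=0$. That is not enough: the claimed identity also requires $\Dmax{T}{n}(\rho\|\sigma)\le \Dmax{T}{sub}(\rho\|\sigma)$ whenever the latter is non-negative. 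Concretely, given any subnormalised $\rho'$ with $\rho'\le\lambda\sigma$ for some $\lambda\ge 1$ and $\norm{\rho-\rho'}{+}\le\ve$, you must produce a \emph{normalised} state within trace distance $\ve$ of $\rho$ that is still dominated by the \emph{same} $\lambda\sigma$. This is the bulk of the paper's proof: working in an eigenbasis of $\lambda\sigma-\rho'$, it tops up $\rho'$ by a diagonal positive operator supported on the indices where $\sigma_{ii}\ge\rho'_{ii}$, scaled so that the result $\rho''$ satisfies $\rho'\le\rho''\le\lambda\sigma$ and $\Tr\rho''=1$ (the hypothesis $\lambda\ge 1$ is used essentially here); the ordering $\rho''\ge\rho'$ then guarantees $\Tr(\rho-\rho'')_+\le\Tr(\rho-\rho')_+\le\ve$, and operator monotonicity of the square root gives the fidelity analogue. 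Nothing in your proposal addresses this case.

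Moreover, the case you do treat is both overcomplicated and arithmetically broken. If $\Dmax{T}{sub}(\rho\|\sigma)<0$, there is a subnormalised $\rho'\le\lambda\sigma\le\sigma$ with $\lambda<1$ (restricting to $\supp\sigma$ without loss of generality), whence $\rho-\sigma\le\rho-\rho'$ and $\tfrac12\norm{\rho-\sigma}{1}=\Tr(\rho-\sigma)_+\le\Tr(\rho-\rho')_+\le\ve$: thus $\sigma$ itself lies in the normalised ball and $\Dmax{T}{n}(\rho\|\sigma)=0$ follows with no construction at all --- which is what the paper does, and what you only gesture at in your final paragraph. Your attempted top-up $\rho'=(1-\nu)\rho''+\nu\sigma$ does not normalise: $\Tr\big[(1-\nu)\rho''+\nu\sigma\big]=\Tr\rho''+\nu\,(1-\Tr\rho'')$, which equals $1$ only for $\nu=1$, i.e.\ $\rho'=\sigma$, defeating the purpose of the construction. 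Finally, you acknowledge that the purified-distance case is left open; the paper settles the negative case with the strict operator monotonicity of the square root ($F(\rho,\sigma)> F(\rho,\rho')\ge 1-\ve^2$), and the same monotonicity, applied to the $\rho''\ge\rho'$ produced by the diagonal top-up, handles the fidelity part of the $\lambda\ge 1$ case.
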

\end{boxed}

\begin{proof}
Note first that, since $D_{\max}(\rho \| \sigma) \geq 0$ for any normalised operators $\rho$ and $\sigma$, it must be the case that $\Dmax{T}{n}(\rho\|\sigma)$ and $\Dmax{P}{n}(\rho\|\sigma)$ are non-negative. 
Now, if $\Dmax{T}{sub}(\rho\|\sigma) < 0$, then there exists a subnormalised state $\rho'$ such that $\rho' < \sigma$, where we assumed that $\sigma > 0$ since we can always restrict the space down to the support of $\sigma$ without loss of generality. 
This implies that $\rho - \sigma < \rho - \rho' \leq (\rho - \rho')_+$. But then the assumption that $\norm{\rho-\rho'}{+}\leq\ve$ means that $\frac12\norm{\rho-\sigma}{1} = \Tr(\rho-\sigma)_+ < \Tr(\rho - \rho')_+ \leq \ve$, 
and so $\Dmax{T}{n}(\rho\|\sigma) \leq D_{\max}(\sigma\|\sigma) = 0$. In the case of $\Dmax{P}{sub}(\rho \| \sigma)$ we similarly have that $\rho' < \sigma$ for some subnormalised state such that $F(\rho, \rho') \geq 1-\ve^2$. Using the strict operator monotonicity of the square root (see e.g.~\cite[Proposition~1.2.9]{bhatia_2007}), we get that
\begin{equation}\begin{aligned}
    F(\rho, \sigma) = \left( \Tr \sqrt{\sqrt{\rho}\sigma\sqrt{\rho}} \right)^2 > \left( \Tr \sqrt{\sqrt{\rho}\rho'\sqrt{\rho}} \right)^2 = F(\rho, \rho'),
\end{aligned}\end{equation}
so that again $\Dmax{P}{n}(\rho\|\sigma) \leq D_{\max}(\sigma\|\sigma) = 0$.

We can in fact make a stronger statement here, namely, $\Dmax{T}{sub}(\rho\|\sigma) < 0 \iff \ve > \frac12\norm{\rho-\sigma}{1}$, and analogously for the purified distance. 
We have already shown the $\Rightarrow$ implication. To see the other direction, assume that $\ve > \frac12\norm{\rho-\sigma}{1}$. Clearly, $\Dmax{T}{n}(\rho\|\sigma) = 0$ because $\sigma \in \B^\ve_{\sTnorm}(\rho)$, so $\Dmax{T}{sub}(\rho\|\sigma)$ can be at most zero. But if it were exactly zero, then we would have that $\rho' \leq \sigma$ for some subnormalised state $\rho'$ with $\norm{\rho-\rho'}{+}\leq \ve$, and hence $\frac12\norm{\rho-\sigma}{1} \leq \Tr(\rho - \rho')_+ \leq \ve$, contradicting our assumption. Hence $\Dmax{T}{sub}(\rho\|\sigma) < 0$.

We will now show that if $\Dmax{T}{sub}(\rho\|\sigma) \geq 0$, then $\Dmax{T}{sub}(\rho\|\sigma) = \Dmax{T}{n}(\rho\|\sigma)$. Let $\rho'$ be a feasible subnormalised state such that $\rho' \leq \lambda \sigma$ and $\norm{\rho-\rho'}{+}\leq\ve$ for some $\lambda \geq 1$. To avoid trivial cases we will assume that $\Tr \rho' < 1$. Denoting by $\{\ket{i}\}_i$ an orthonormal basis that diagonalises the operator $\lambda \sigma - \rho'$, let $\rho'_{ij} \coloneqq \braket{i|\rho'|j}$ and $\sigma_{ij} \coloneqq \braket{i|\sigma|j}$. Define the sets of indices
\begin{equation}\begin{aligned}
    S \coloneqq \lset i \bar \sigma_{ii} < \rho'_{ii} \leq \lambda \sigma_{ii} \rset, \qquad S^\perp \coloneqq \lset i \bar \sigma_{ii} \geq \rho'_{ii} \rset.
\end{aligned}\end{equation}
For some value of $\mu \in [0,1]$ to be fixed later, we now define
\begin{align}
    \rho'' &\coloneqq \rho' + \mu \sum_{i \in S^\perp} (\sigma_{ii} - \rho'_{ii}) \proj{i} \nonumber \\
    &\hphantom{:}\textleq{(i)} \rho' + \sum_{i \in S^\perp} (\sigma_{ii} - \rho'_{ii}) \proj{i} \nonumber \\
     &\hphantom{:}= \sum_{i \in S } \rho'_{ii} \proj{i} + \sum_{i\neq j} \rho'_{ij} \ketbraa{i}{j} + \sum_{i \in S^\perp} \sigma_{ii}\proj{i} \label{eq:rho_double_prime} \\
     &\hphantom{:}\textleq{(ii)} \sum_{i\in S} \lambda \sigma_{ii} \proj{i} + \sum_{i\neq j} \lambda \sigma_{ij} \ketbraa{i}{j}+ \sum_{i \in S^\perp} \sigma_{ii}\proj{i} \nonumber \\
     &\hphantom{:}\textleq{(iii)} \lambda \sigma, \nonumber
\end{align}
where: (i) follows since $\mu \leq 1$ and $\sigma_{ii} \geq \rho'_{ii}$ for all $i \in S^\perp$ by definition; (ii) follows as $\rho'_{ii} \leq \lambda \sigma_{ii}$ for all $i \in S$, and $\lambda \sigma_{ij} = \rho'_{ij}$ because $\{\ket{i}\}_i$ diagonalises $\lambda \sigma - \rho'$; finally, (iii) is simply since $\lambda \geq 1$.

Observe that there must exist at least one element in $S^\perp$ such that $\rho'_{ii} < \sigma_{ii}$, as otherwise our assumption that $\Tr \rho' < 1$ would be contradicted. We can then always choose a value of $\mu$ so that $\Tr \rho'' = 1$, namely
\begin{equation}\begin{aligned}
    \mu \coloneqq \frac{ 1 - \Tr \rho' }{ \sum_{i\in S^\perp} (\sigma_{ii}-\rho'_{ii})}.
\end{aligned}\end{equation}
The fact that $\mu \leq 1$ can be seen by noticing that
\begin{equation}\begin{aligned}
   1 - \Tr \rho' &= \!\!\sum_{i \in S \cup S^\perp} (\sigma_{ii} - \rho'_{ii}) \leq \sum_{i \in S^\perp} (\sigma_{ii} - \rho'_{ii})
\end{aligned}\end{equation}
where we used that $\sigma_{ii} < \rho'_{ii}$ for all $i \in S$.

Now, clearly, $\rho'' \geq \rho'$. This implies that $\rho - \rho'' \leq \rho - \rho'$ and hence that
\begin{equation}\begin{aligned}
    \frac12 \norm{\rho - \rho''}{1} = \Tr ( \rho - \rho'')_+ \leq \Tr (\rho - \rho')_+ \leq \ve.
\end{aligned}\end{equation}
Thus $\Dmax{T}{n} (\rho \| \sigma) \leq D_{\max}(\rho'' \| \sigma) \leq \log \lambda$ by~\eqref{eq:rho_double_prime}. Since this holds for all feasible $\lambda$, we have that $\Dmax{T}{n} (\rho \| \sigma) \leq \Dmax{T}{sub}(\rho\|\sigma)$ and thus the two quantities must be equal.

The proof for the purified distance is completely analogous. We now start with a subnormalised state $\rho'$ such that $\sqrt{1-F(\rho,\rho')}\leq \ve$
and using again the operator monotonicity of the square root we obtain
\begin{equation}\begin{aligned}
     F(\rho, \rho'') = \left( \Tr \sqrt{\sqrt{\rho}\rho''\sqrt{\rho}} \right)^2 \geq \left( \Tr \sqrt{\sqrt{\rho}\rho'\sqrt{\rho}} \right)^2 = F(\rho, \rho'),
\end{aligned}\end{equation}
from which we get that $\Dmax{P}{n} (\rho \| \sigma) \leq D_{\max}(\rho'' \| \sigma) \leq \log \lambda$.
\end{proof}

Another way to relate the normalised and subnormalised quantities without having to distinguish the cases when the subnormalised smoothed divergences take negative values is to take any subnormalised state $\rho'$ and normalise it as $\rho'' \coloneqq \frac{\rho'}{\Tr \rho'}$; it is not difficult to show that $\frac12\norm{\rho-\rho''}{1} \leq \norm{\rho-\rho'}{+}$ and $F(\rho,\rho'') \geq F(\rho,\rho')$, meaning that $\rho''$ is a feasible state for the normalised smooth $D_{\max}$. Accounting for the renormalisation factor, $\Tr \rho'$, gives
\begin{equation}\begin{aligned}
    \Dmax{T}{sub} (\rho \| \sigma) \leq \Dmax{T}{n} (\rho \| \sigma) \leq \Dmax{T}{sub}(\rho \| \sigma) + \log \frac{1}{1-\ve},
\end{aligned}\end{equation}
where we used that $\Tr \rho' \geq 1-\ve$ for any $\rho'$ with $\norm{\rho-\rho'}{+} \leq \ve$. The purified distance relation is analogous.

}%


\section{Equivalences}\label{sec:equiv}

{%
\subsection[Classical \texorpdfstring{$\wt{D}^\ve_{\max}$ and $D^\ve_{\max}$}{modified max-relative entropy}]{Classical $\boldsymbol{\wt{D}^\ve_{\max}}$ and $\boldsymbol{{D}^\ve_{\max}}$}

For classical systems, there is in fact a complete equivalence between the modified max-relative entropy $\wt{D}^\ve_{\max}$ and the standard smooth max-relative entropy $D^\ve_{\max}$. 
This was first noticed in~\cite{dwork_2010,dwork_2014}, where a classical definition of $\wt{D}^\ve_{\max}$ was introduced in the context of characterising differential privacy.
\begin{boxed}
\begin{lemma}[{\cite{dwork_2014,abdelhadi_2020}}]\label{lem:classical_dmax_equivalence}
For all classical probability distributions $p,q$ (or commuting quantum states), we have
\begin{equation}\begin{aligned}
    \wt{D}^\ve_{\max}(p \| q) = \Dmax{T}{sub} (p\|q).
\end{aligned}\end{equation}
As a consequence,
\begin{equation}\begin{aligned}\label{eq:subnorm_classical}
    \Dmax{T}{n} (p \| q) = \max \left\{ \wt{D}^\ve_{\max}( p \| q),\, 0 \right\}.
\end{aligned}\end{equation}
\end{lemma}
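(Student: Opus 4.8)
The plan is to prove the two inequalities $\Dmax{T}{sub}(p\|q) \le \wt{D}^\ve_{\max}(p\|q)$ and $\wt{D}^\ve_{\max}(p\|q) \le \Dmax{T}{sub}(p\|q)$ separately, using throughout a common eigenbasis $\{\ket{i}\}_i$ of $p$ and $q$ with nonnegative entries $p_i \coloneqq \braket{i|p|i}$ and $q_i \coloneqq \braket{i|q|i}$; the statement~\eqref{eq:subnorm_classical} will then follow by combining the resulting equality with Lemma~\ref{lem:subnormalised_normalised_main}.

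For the bound $\Dmax{T}{sub}(p\|q) \le \wt{D}^\ve_{\max}(p\|q)$, I would fix any $\lambda \ge 0$ feasible for $\wt{D}^\ve_{\max}$, i.e.\ with $\Tr(p - \lambda q)_+ \le \ve$, and consider the truncation $p' \coloneqq \sum_i \min\{p_i,\, \lambda q_i\}\, \proj{i}$. By construction $0 \le p' \le p$, so $p'$ is a subnormalised state, and $p' \le \lambda q$, so $D_{\max}(p'\|q) \le \log \lambda$. Moreover $p - p' = \sum_i (p_i - \lambda q_i)_+ \proj{i} = (p - \lambda q)_+ \ge 0$, so that $\norm{p - p'}{+} = \Tr(p - p')_+ = \Tr (p - \lambda q)_+ \le \ve$, i.e.\ $p' \in \B^\ve_{\sTsub}(p)$. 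Hence $\Dmax{T}{sub}(p\|q) \le D_{\max}(p'\|q) \le \log \lambda$, and taking the infimum over all feasible $\lambda$ gives the claim.

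For the reverse bound $\wt{D}^\ve_{\max}(p\|q) \le \Dmax{T}{sub}(p\|q)$, I would take any $p' \in \B^\ve_{\sTsub}(p)$ with $p' \le \lambda q$. Since $p - p' \le (p - p')_+$, we obtain $p \le \lambda q + (p - p')_+$ with $(p - p')_+ \ge 0$ and $\Tr (p - p')_+ = \norm{p - p'}{+} \le \ve$, so $\lambda$ is feasible in the variational formulation~\eqref{eq:Dtilde_def} of $\wt{D}^\ve_{\max}$; thus $\wt{D}^\ve_{\max}(p\|q) \le \log \lambda$, and infimizing over $p'$ finishes this direction. (I would remark here that this second inequality in fact needs no commutativity; only the truncation step in the first inequality genuinely exploits that $p$ and $q$ share an eigenbasis.) Finally, combining the proven equality with Lemma~\ref{lem:subnormalised_normalised_main} immediately yields $\Dmax{T}{n}(p\|q) = \max\{\Dmax{T}{sub}(p\|q),\,0\} = \max\{\wt{D}^\ve_{\max}(p\|q),\,0\}$, which is~\eqref{eq:subnorm_classical}.

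I do not anticipate a real obstacle: the only point demanding a little care is checking that the truncated operator $p'$ in the first step is simultaneously subnormalised, $\ve$-close to $p$ in generalised trace distance, and dominated by $\lambda q$ — but each of these reduces to a pointwise inequality on the eigenvalues, which is transparent in the commuting setting. One should also be mindful that the argument is essentially a restatement of the classical equivalence of~\cite{dwork_2014,abdelhadi_2020}, so the exposition can be kept brief.
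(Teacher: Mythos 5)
Your proof is correct and follows essentially the same route as the paper: both directions rest on truncation in the common eigenbasis for one inequality and on the variational form of $\Tr(\cdot)_+$ for the other, followed by an appeal to Lemma~\ref{lem:subnormalised_normalised_main}. The only cosmetic difference is that you work directly with the canonical choice $p' = \sum_i \min\{p_i,\lambda q_i\}\proj{i}$ from the hockey-stick form~\eqref{Datta_Leditzky}, whereas the paper starts from an arbitrary feasible $Q$ in~\eqref{eq:Dtilde_def}, dephases it, and then truncates — your variant is marginally more streamlined but not a different argument.
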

\end{boxed}
The classical result in~\eqref{eq:subnorm_classical} was first shown in~\cite[Lemma~3.17]{dwork_2014} (see also~\cite{dwork_2010}), without studying the subnormalised smoothing variant. It was previously claimed that~\eqref{eq:subnorm_classical} holds also in the quantum case~\cite{zhou_2017}, but the proof contains a gap~\cite{hirche_2023} (and indeed this claim can be easily verified to be false numerically; see also Appendix~\ref{app:definitions}). The classical equality $\Dmax{T}{sub}  = \wt{D}^\ve_{\max}$ was also shown in~\cite[Proposition 2]{abdelhadi_2020} (see also~\cite{hayashi_2016-2,renes_2018}), without connecting it to the normalised $\Dmax{T}{n}$. For completeness, we give a proof here, essentially following~\cite{abdelhadi_2020}. 

\begin{proof}[Proof of Lemma~\ref{lem:classical_dmax_equivalence}]
We will understand $p$ and $q$ as diagonal density operators $p = \sum_i p_i \proj{i}$, $q = \sum_i q_i \proj{i}$ in some orthonormal basis. 
Consider then any feasible solution for $\wt{D}^\ve_{\max}(p\|q)$, that is, $p \leq \lambda q + Q$ for some positive semidefinite operator $Q$ with $\Tr Q \leq \ve$. 
Without loss of generality, $Q$ can be taken to be diagonal in the basis $\{\ket{i}\}$, since we can simply dephase any feasible $Q$ in this basis while maintaining its feasibility. 
Furthermore, we can in fact assume that $p \geq Q$: should it be the case that $p_i < Q_i$ for some $i$, we can define $Q' \leq p$ through $Q'_i \coloneqq \min\{ Q_i, p_i\}$, which is feasible with the same objective value $\lambda$ as
\begin{equation}\begin{aligned}
    \lambda q_i \geq \max \{ p_i - Q_i, 0 \} = p_i - Q'_i \quad \forall {i}.
\end{aligned}\end{equation}
But then $p' \coloneqq p - Q \leq \lambda q$ is a subnormalised state such that $\norm{p-p'}{+} = \norm{Q}{+} = \Tr Q \leq \ve$,
implying that 
\begin{equation}\begin{aligned}
    \Dmax{T}{sub} (p\|q) \leq D_{\max}(p' \| q) \leq \log \lambda.
\end{aligned}\end{equation}
Since this holds for any feasible solution $\lambda$, it follows that $\Dmax{T}{sub}(p\|q) \leq \wt{D}_{\max}^\ve(p\|q)$. 
The opposite inequality between the quantities follows easily by noting that any feasible $p'$ such that $p' \leq \lambda q$ and $\Tr (p -p')_+ \leq \ve$ gives a feasible solution for $\wt{D}_{\max}^\ve(p\|q)$ through $p \leq p' + \Tr(p-p')_+ \leq \lambda q + \Tr(p-p')_+$.  
The quantities thus must be equal, and Eq.~\eqref{eq:subnorm_classical} then follows by Lemma~\ref{lem:subnormalised_normalised_main}.
\end{proof}
}%


\subsection[\texorpdfstring{${\wt{D}^\ve_{\max}}$ as measured ${D^\ve_{\max}}$}{Modified max-relative entropy as measured smooth max-relative entropy}]{$\boldsymbol{\wt{D}^\ve_{\max}}$ as measured $\boldsymbol{D^\ve_{\max}}$}

A natural way to define quantum divergences from classical ones is to first perform a measurement and then compute the classical divergence of the resulting probability distributions; optimising over all measurements then gives a well-defined quantum relative entropy~\cite{donald_1986,hiai_1991,berta_2017-1}. Following this idea, we define the \deff{measured smooth max-relative entropy} as
\begin{equation}\begin{aligned}
    D^{\smoothing{\Delta}{\ve}}_{\max,\, \MM} (\rho \| \sigma) \coloneqq \sup \lset D^{\smoothing{\Delta}{\ve}}_{\max} \left( p_{\rho,M} \| p_{\sigma,M} \right) \bar M = (M_i)_{i=1}^n \in \MM,\; n \in \NN_+ \rset,
    \label{eq-17}
\end{aligned}\end{equation}
where the choices of distance $\Delta$ are as in Section~\ref{sec:prelim}, $p_{\rho,M}$ denotes the probability distribution of the measurement outcomes, $p_{\rho,M}(i) = \Tr M_i \rho$, and $\MM$ is the set of all finite-outcome quantum measurements.

We now show that $\wt{D}^\ve_{\max}$ can be understood precisely as a measured variant of the smooth max-relative entropy $\Dmax{T}{sub}$, giving the former quantity a new interpretation and more closely connecting the two divergences.

\begin{boxed}
\begin{proposition}\label{prop:equivalence_measured}
For all quantum states $\rho$ and $\sigma$ and for all $\ve \in [0,1]$ it holds that
\begin{align}\label{eq:equivalence_measured}
\wt{D}^{\ve}_{\max} (\rho \| \sigma) = D^{\s{T}{sub}}_{\max,\,\MM} (\rho \| \sigma) = \wt{D}^\ve_{\max,\,\MM} (\rho \| \sigma).
\end{align}
It also follows as a consequence that $D^{\s{T}{n}}_{\max,\,\MM} (\rho \| \sigma) = \max \Big\{ \wt{D}^{\ve}_{\max} (\rho \| \sigma) ,\, 0 \Big\}$.
\end{proposition}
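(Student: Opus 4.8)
The plan is to prove the three-way identity~\eqref{eq:equivalence_measured} as a short chain of elementary facts and then read off the consequence. The second equality is immediate: for any fixed finite-outcome measurement $M=(M_i)_i$, the arguments $p_{\rho,M}$ and $p_{\sigma,M}$ are classical probability distributions, so Lemma~\ref{lem:classical_dmax_equivalence} gives $\Dmax{T}{sub}(p_{\rho,M}\|p_{\sigma,M})=\wt{D}^\ve_{\max}(p_{\rho,M}\|p_{\sigma,M})$, and taking the supremum over $M$ turns this into $D^{\s{T}{sub}}_{\max,\,\MM}(\rho\|\sigma)=\wt{D}^\ve_{\max,\,\MM}(\rho\|\sigma)$ directly from the definitions. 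For the inequality $\wt{D}^\ve_{\max,\,\MM}\leq\wt{D}^\ve_{\max}$ I would invoke data processing: writing $\Lambda_M\colon\tau\mapsto\sum_i(\Tr M_i\tau)\proj{i}$ for the measurement channel, one has $\Lambda_M(\rho)=p_{\rho,M}$ and $\Lambda_M(\sigma)=p_{\sigma,M}$, and since $E_\lambda$ is contractive under CPTP maps the quantity $\wt{D}^\ve_{\max}$ inherits the data-processing inequality straight from its definition~\eqref{Datta_Leditzky} (as was already noted in~\cite{datta_2015}); hence $\wt{D}^\ve_{\max}(p_{\rho,M}\|p_{\sigma,M})\leq\wt{D}^\ve_{\max}(\rho\|\sigma)$ for every $M$, and the supremum preserves the bound.

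The hard part is the reverse inequality $\wt{D}^\ve_{\max,\,\MM}\geq\wt{D}^\ve_{\max}$, and here the key idea is that the supremum over \emph{all} measurements is saturated, in the limit, already by the two-outcome Neyman--Pearson test for the hockey-stick divergence $E_\lambda$ taken at a threshold $\lambda$ slightly below its optimum. Concretely, I would set $\lambda^\star\coloneqq 2^{\wt{D}^\ve_{\max}(\rho\|\sigma)}=\inf\{\lambda\geq 0:\Tr(\rho-\lambda\sigma)_+\leq\ve\}\in[0,+\infty]$ and fix any $\lambda\in(0,\lambda^\star)$, so that $\Tr(\rho-\lambda\sigma)_+>\ve$ by definition of the infimum. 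Let $P_\lambda$ be the spectral projector of $\rho-\lambda\sigma$ onto its strictly positive part and take the projective measurement $M^{(\lambda)}\coloneqq(P_\lambda,\id-P_\lambda)$. Setting $a\coloneqq\Tr P_\lambda\rho$ and $b\coloneqq\Tr P_\lambda\sigma\geq 0$, one has $a-\lambda b=\Tr[P_\lambda(\rho-\lambda\sigma)]=\Tr(\rho-\lambda\sigma)_+>\ve$ and $(1-a)-\lambda(1-b)=\Tr[(\id-P_\lambda)(\rho-\lambda\sigma)]=-\Tr(\rho-\lambda\sigma)_-\leq 0$. Feeding the two-point distributions $p_{\rho,M^{(\lambda)}}=(a,1-a)$ and $p_{\sigma,M^{(\lambda)}}=(b,1-b)$ into~\eqref{Datta_Leditzky}, for every $\mu\leq\lambda$ the positive-part sum satisfies $(a-\mu b)_+ + ((1-a)-\mu(1-b))_+\geq a-\mu b\geq a-\lambda b>\ve$, so no $\mu\leq\lambda$ is admissible and hence $\wt{D}^\ve_{\max}(p_{\rho,M^{(\lambda)}}\|p_{\sigma,M^{(\lambda)}})\geq\log\lambda$. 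Taking the supremum over $0<\lambda<\lambda^\star$ gives $\wt{D}^\ve_{\max,\,\MM}(\rho\|\sigma)\geq\log\lambda^\star=\wt{D}^\ve_{\max}(\rho\|\sigma)$; the degenerate cases $\lambda^\star=0$ (vacuous, and already covered by data processing) and $\lambda^\star=+\infty$ (the supremum is then $+\infty$, forcing equality) are immediate. Combined with the previous paragraph, this proves~\eqref{eq:equivalence_measured}.

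For the consequence I would run the same outcome-by-outcome reduction with the normalised form of Lemma~\ref{lem:classical_dmax_equivalence}, namely $\Dmax{T}{n}(p_{\rho,M}\|p_{\sigma,M})=\max\{\wt{D}^\ve_{\max}(p_{\rho,M}\|p_{\sigma,M}),0\}$, and then use $\sup_M\max\{f(M),0\}=\max\{\sup_M f(M),0\}$ together with the already-established $\sup_M\wt{D}^\ve_{\max}(p_{\rho,M}\|p_{\sigma,M})=\wt{D}^\ve_{\max}(\rho\|\sigma)$ to conclude $D^{\s{T}{n}}_{\max,\,\MM}(\rho\|\sigma)=\max\{\wt{D}^\ve_{\max}(\rho\|\sigma),0\}$. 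I expect essentially all of the work to sit in the crux step of the middle paragraph: recognising that a single two-outcome hypothesis test already witnesses the full value, and keeping the positive-part bookkeeping for the induced two-point distributions airtight, in particular across the degenerate regimes (where $\sigma$ is rank-deficient or $\ve$ is close to $1$) that correspond exactly to $\lambda^\star\in\{0,+\infty\}$; everything else reduces mechanically to the definitions, Lemma~\ref{lem:classical_dmax_equivalence}, and data processing.
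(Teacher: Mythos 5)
Your proposal is correct, and the key step is argued by a genuinely different route from the paper's. Both proofs handle the easy direction identically (data processing for $\wt{D}^\ve_{\max}$ plus Lemma~\ref{lem:classical_dmax_equivalence} applied outcome-wise), and both ultimately witness the value with a two-outcome measurement; the difference lies in how that witness is produced. The paper first passes to the Lagrange dual of~\eqref{eq:Dtilde_def}, obtaining $\wt{D}^\ve_{\max}(\rho\|\sigma)=\log\sup\{(\Tr W'\rho-\ve)/\Tr W'\sigma: 0\leq W'\leq\id\}$, and then feeds each feasible $W'$ into the measurement $(W',\id-W')$, matching the quantum dual against the classical dual. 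You instead stay entirely in the primal picture: for each $\lambda$ strictly below $\lambda^\star=2^{\wt{D}^\ve_{\max}(\rho\|\sigma)}$ you take the Neyman--Pearson projector $P_\lambda$ of $\rho-\lambda\sigma$, note that $\Tr P_\lambda(\rho-\lambda\sigma)=\Tr(\rho-\lambda\sigma)_+>\ve$, and use the monotonicity $a-\mu b\geq a-\lambda b$ (valid since $b=\Tr P_\lambda\sigma\geq 0$) to rule out every $\mu\leq\lambda$ as feasible for the induced two-point distributions. This is airtight --- including the bookkeeping for $\lambda^\star\in\{0,+\infty\}$, where your reading of the degenerate cases is consistent with the definition $\inf\emptyset=\infty$ and with $\ve=1$ forcing both sides to $\log 0$ --- and it buys you an argument that needs no strong-duality/Slater step at all, only the definition~\eqref{Datta_Leditzky} and the positivity of $\Tr P_\lambda\sigma$. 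What the paper's duality route buys in exchange is the explicit variational formula~\eqref{eq:Dtilde_dual}, which it reuses later (e.g.\ in the proof of Theorem~\ref{thm:equivalence_DH_Dtilde} and in Section~\ref{sec:Frenkel}-adjacent bounds), so the dual detour is not wasted there. Your treatment of the normalised consequence coincides with the paper's.
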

\end{boxed}

\begin{proof}
By the data processing inequality for $\wt{D}^\ve_{\max}$, we have that $\wt{D}^\ve_{\max}(\rho \| \sigma) \geq \wt{D}^{\ve}_{\max} (p_{\rho, M} \| p_{\sigma, M})$ for any measurement $M$, and hence
\begin{equation}\begin{aligned}
    \wt{D}^\ve_{\max} (\rho \| \sigma) &\geq \sup_{M \in \MM}  \wt{D}^{\ve}_{\max} (p_{\rho, M} \| p_{\sigma, M}) \\
    &= D^{\s{T}{sub}}_{\max,\,\MM} (\rho \| \sigma)
\end{aligned}\end{equation}
where the last equality is by Lemma~\ref{lem:classical_dmax_equivalence}, using the fact that $p_{\rho, M}$ and $p_{\sigma, M}$ are classical probability distributions and so $\wt{D}^\ve_{\max} (p_{\rho, M} \| p_{\sigma, M}) = D^{\s{T}{sub}}_{\max,\,\MM} (p_{\rho, M} \| p_{\sigma, M})$.

To show the opposite inequality, we begin by using convex duality to write (see also~\cite{nuradha_2024})
\begin{equation} \begin{aligned} \label{eq:Dtilde_dual}
    \wt{D}^\ve_{\max} (\rho \| \sigma) &= \log \inf \lset \lambda \bar \rho \leq \lambda \sigma + Q,\; Q \geq 0,\; \Tr Q \leq \ve \rset 
    \\
    &= \log \sup \lset \Tr W \rho - y \ve \bar 0 \leq W \leq y \id,\; \Tr W \sigma = 1,\; y \geq 0 \rset\\
     &= \log \sup \lset \frac{\Tr W' \rho - \ve}{\Tr W' \sigma} \bar 0 \leq W' \leq \id \rset,
\end{aligned} \end{equation}
where the second equality follows by strong Lagrange duality, noting that $W = \id$, $y = 1 + \delta$ is a strictly feasible solution pair for any $\delta > 0$,
and in last line we made the change of variables $\frac{1}{y} W \mapsto W'$. 
Let us then take any $W'$ feasible for the optimisation on the rightmost side of~\eqref{eq:Dtilde_dual}. Choosing $M' = (W', \id - W')$ we get, using the definition of the measured smooth max-relative entropy in~\eqref{eq-17} and the classical equivalence between $\Dmax{T}{s}$ and $\wt{D}_{\max}^\ve$ shown in Lemma~\ref{lem:classical_dmax_equivalence},
\begin{equation}\begin{aligned}
    D^{\s{T}{sub}}_{\max,\,\MM} (\rho \| \sigma) &\geq 
     D^{\s{T}{sub}}_{\max} (p_{\rho, M'} \| p_{\sigma, M'})\\
    &=
    \wt{D}^{\ve}_{\max} (p_{\rho, M'} \| p_{\sigma, M'})\\
    &= \log \sup \lset \frac{\Tr V p_{\rho, M'} - \ve}{\Tr V p_{\sigma, M'}} \bar 0 \leq V \leq \id \rset\\
    &\geq \log \sup \frac{\Tr W' \rho - \ve}{\Tr W' \sigma},
\end{aligned}\end{equation}
where in the last line we made the choice of $V$ as projecting onto the first element of the two-element distribution $p_{\rho, M'}$. Optimising over all choices of $W'$ gives us $D^{\s{T}{sub}}_{\max,\,\MM} (\rho \| \sigma) \geq \wt{D}^\ve_{\max} (\rho \| \sigma) $, and so the two quantities must be equal.

The result for the variant $D^{\s{T}{n}}_{\max,\,\MM}$ with normalised smoothing follows since
\begin{equation}\begin{aligned}
    D^{\s{T}{n}}_{\max,\,\MM} (\rho \| \sigma) &= \sup_{M \in \MM}  D^{\s{T}{n}}_{\max}(p_{\rho, M} \| p_{\sigma, M}) \\
    &= \sup_{M \in \MM} \max \Big\{ 0,\, \Dmax{T}{s} (p_{\rho, M} \| p_{\sigma, M}) \Big\}\\
    &=  \max \Big\{ 0,\,  D^{\s{T}{s}}_{\max,\,\MM} (\rho \| \sigma) \Big\}
\end{aligned}\end{equation}
by Lemma~\ref{lem:classical_dmax_equivalence}.
\end{proof}


\subsection[\texorpdfstring%
{${D^\ve_H}$ as optimised ${\wt{D}^\ve_{\max}}$}%
{Hypothesis testing relative entropy as optimised modified max-relative entropy}]{$\boldsymbol{D^\ve_H}$ as optimised $\boldsymbol{\wt{D}^\ve_{\max}}$}

Although the connection between $D^{\ve}_{\max}$ and $D^{1-\ve}_H$ is now a well-known duality in one-shot quantum information theory, the precise bounds between the two are often rather loose --- they match in the asymptotic regime, and indeed also when the second-order asymptotics are considered~\cite{tomamichel_2013}, but large gaps are expected between them at the single-copy level (cf.~\eqref{eq:first_weakstrong}). 
Due to this, no exact equivalence has been obtained
between the two quantities.

Surprisingly, here we show that, as long as one considers the modified max-relative entropy variant $\wt{D}^\ve_{\max}$, the hypothesis testing relative entropy and the smooth max-relative entropy are in a precise sense equivalent to each other --- either of them can be obtained from the other.
This will allow us to obtain much tighter bounds between the two quantities than were previously known.

\begin{boxed}
\begin{theorem}\label{thm:equivalence_DH_Dtilde}
For all quantum states $\rho$ and $\sigma$ and for all {$\ve \in (0,1)$} it holds that
\begin{align}
  D^{1-\varepsilon}_H(\rho \| \sigma) &= \inf_{\delta \in [0, \varepsilon)}\left[ \widetilde{D}^{\delta}_{\max} (\rho \| \sigma)- \log (\varepsilon-\delta)\right],\label{eq:thm1}\\
  \widetilde{D}^{\varepsilon}_{\max}(\rho\|\sigma) &= \sup_{\delta \in (\varepsilon, 1]} \Big[ D^{1-\delta}_H (\rho \| \sigma) + \log (\delta - \varepsilon)\Big].\label{eq:thm2}
\end{align}
Eq.~\eqref{eq:thm1} holds also for $\ve=1$, while Eq.~\eqref{eq:thm2} for $\ve=0$. 
{Without loss of generality, the range of the optimisation can be restricted to $\delta \in (0,\ve)$ in~\eqref{eq:thm1} and $\delta \in (\ve,1)$ in~\eqref{eq:thm2}.}\\
If $\rho$ and $\sigma$ commute, $\wt{D}^{\delta}_{\max}$ can be replaced with $\Dmax{T}{sub}[\delta]$ in the above.
\end{theorem}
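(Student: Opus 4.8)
The plan is to establish the two identities as a Legendre-type duality between the functions $\delta\mapsto \widetilde D^\delta_{\max}(\rho\|\sigma)$ and $\ve\mapsto D^{1-\ve}_H(\rho\|\sigma)$, exploiting the variational forms already recorded in the excerpt, namely $D^{1-\ve}_H(\rho\|\sigma)=-\log\beta_{1-\ve}(\rho,\sigma)$ with $\beta_{1-\ve}(\rho,\sigma)=\inf\{\Tr M\sigma : 0\le M\le\id,\ \Tr(\id-M)\rho\le 1-\ve\}$, and $\widetilde D^\delta_{\max}(\rho\|\sigma)=\log\inf\{\lambda:\rho\le\lambda\sigma+Q,\ Q\ge 0,\ \Tr Q\le\delta\}$. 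First I would prove the inequality $D^{1-\ve}_H(\rho\|\sigma)\le \widetilde D^\delta_{\max}(\rho\|\sigma)-\log(\ve-\delta)$ for every $\delta\in[0,\ve)$: given a near-optimal decomposition $\rho\le\lambda\sigma+Q$ with $\Tr Q\le\delta$ and $\lambda=2^{\widetilde D^\delta_{\max}}$ (up to $\epsilon'$), I would build a test $M$ for the hypothesis testing problem. The natural candidate is the operator arising from the variational characterisation: take $M$ so that $\id-M$ "absorbs" the remainder $Q$, e.g. $M = \{\rho > t\sigma\}$-type projector, or more cleanly use the dual form $\Tr(\rho-\lambda\sigma)_+\le\delta$ to pick the projector $P$ onto the positive part of $\rho-\lambda\sigma$ and set $M=\id-P$ rescaled; one checks $\Tr(\id-M)\rho\le\delta + (\text{slack})$ and $\Tr M\sigma\le (\ve-\delta)/\lambda$, which rearranges to the claimed bound after optimising the slack. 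The reverse inequality $D^{1-\ve}_H(\rho\|\sigma)\ge\inf_\delta[\widetilde D^\delta_{\max}-\log(\ve-\delta)]$ goes the other way: from an optimal test $M$ for $\beta_{1-\ve}$, set $Q\coloneqq(\rho-\lambda\sigma)_+$ for an appropriately chosen $\lambda$ tied to $\Tr M\sigma$, verify $\Tr Q\le\delta$ for some $\delta<\ve$, and read off $\widetilde D^\delta_{\max}\le\log\lambda$; choosing $\lambda$ so that $\delta=\ve - \beta_{1-\ve}(\rho,\sigma)\cdot 2^{\,\widetilde D^\delta_{\max}}$ closes the loop.

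For the second identity~\eqref{eq:thm2} I would not redo the work but instead invert~\eqref{eq:thm1} abstractly: since~\eqref{eq:thm1} exhibits $D^{1-\ve}_H$ as the infimal convolution (in the variable $\ve$) of $\widetilde D^\delta_{\max}$ with $-\log(\cdot)$, and since $\delta\mapsto\widetilde D^\delta_{\max}$ is continuous and strictly decreasing on the relevant interval (as stated in Section~\ref{Sec:newdmax} and proved in Appendix~\ref{app:continuity}), the transform is invertible; concretely one takes $\sup_{\delta>\ve}[D^{1-\delta}_H+\log(\delta-\ve)]$, substitutes~\eqref{eq:thm1} for $D^{1-\delta}_H$, and checks that the double optimisation collapses — the inner $\inf$ over the auxiliary parameter and the outer $\sup$ over $\delta$ have a saddle point at which the $\log$ terms cancel, leaving exactly $\widetilde D^\ve_{\max}$. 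Care is needed at the endpoints: I would treat $\ve=1$ in~\eqref{eq:thm1} and $\ve=0$ in~\eqref{eq:thm2} separately, using $D^0_H(\rho\|\sigma)=D_{\min}$-type limits and $\widetilde D^0_{\max}=D_{\max}$, and I would note that the optimisation range can be shrunk from $[0,\ve)$ to $(0,\ve)$ because the $\delta=0$ term $\widetilde D^0_{\max}-\log\ve=D_{\max}-\log\ve$ is dominated by nearby $\delta$ unless $D_{\max}=\infty$, in which case it is irrelevant anyway; an explicit continuity/monotonicity argument near $\delta=0$ handles this.

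The commuting case is immediate once the general case is done: by Lemma~\ref{lem:classical_dmax_equivalence}, $\widetilde D^\delta_{\max}(\rho\|\sigma)=\Dmax{T}{sub}[\delta](\rho\|\sigma)$ whenever $\rho$ and $\sigma$ commute, so one simply substitutes this equality into both~\eqref{eq:thm1} and~\eqref{eq:thm2}.

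The main obstacle I anticipate is the exact matching of the additive $\log(\ve-\delta)$ terms: the easy direction of each inequality only gives the bound up to a free slack parameter (the amount by which the constructed test over- or under-shoots the error constraint), and one must verify that optimising this slack reproduces precisely $-\log(\ve-\delta)$ rather than, say, $-\log(\ve-\delta)+O(1)$. This forces a careful bookkeeping of how the remainder operator $Q$ interacts with the rescaling $1/\lambda$ when passing between the test $M$ and the decomposition $\rho\le\lambda\sigma+Q$ — in particular one wants the extremal choice $\Tr M\sigma = (\ve-\delta)2^{-\widetilde D^\delta_{\max}}$, and showing this is simultaneously achievable with $\Tr(\id-M)\rho\le 1-\ve$ is the crux. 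I expect this to hinge on the fact that the positive-part decomposition $\rho=(\rho-\lambda\sigma)_+ + (\rho-\lambda\sigma)_- + \lambda\sigma$ is "tight" — there is no loss in the operator inequality beyond what $\Tr Q$ accounts for — which is exactly why the modified max-relative entropy, rather than the ordinary smoothed $D_{\max}$, is the right quantity to make the identity exact.
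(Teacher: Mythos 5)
Your overall framing---that the two identities are two faces of a Legendre-type duality between the variational programmes defining $D_H^{1-\ve}$ and $\wt{D}^\delta_{\max}$---matches the spirit of the paper, which proves both equalities by strong Lagrange (SDP) duality: \eqref{eq:thm1} by dualising the primal programme for $D_H^{1-\ve}$ (with Slater's condition certifying strong duality) and changing variables, and \eqref{eq:thm2} \emph{directly and independently} by stratifying the dual programme $\wt{D}^\ve_{\max}=\log\sup\{(\Tr W'\rho-\ve)/\Tr W'\sigma : 0\le W'\le\id\}$ according to the value $\Tr W'\rho=\delta$. Your execution plan, however, has concrete problems. First, your argument for $D_H^{1-\ve}\le\wt{D}^\delta_{\max}-\log(\ve-\delta)$ is logically reversed: exhibiting a single test $M$ with $\Tr M\rho\ge\ve$ and $\Tr M\sigma$ small can only \emph{lower}-bound $D_H^{1-\ve}$. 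The correct (and much easier) argument is that \emph{every} feasible test satisfies $\ve\le\Tr M\rho\le\lambda\Tr M\sigma+\Tr MQ\le\lambda\Tr M\sigma+\delta$, whence $\Tr M\sigma\ge(\ve-\delta)/\lambda$. Second, in the converse direction your prescription ``set $Q=(\rho-\lambda\sigma)_+$ for $\lambda$ tied to $\Tr M\sigma$'' cannot work for a generic feasible test: for \emph{any} $M$ with $\Tr M\rho\ge\ve$ one has $\Tr(\rho-\lambda\sigma)_+ + \lambda\Tr M\sigma\ge\Tr M\rho\ge\ve$ for all $\lambda$, which is exactly the wrong direction for producing a $\delta<\ve$ with $\wt{D}^\delta_{\max}-\log(\ve-\delta)\le-\log\Tr M\sigma$. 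Equality, and hence the closing of your loop, requires the Neyman--Pearson structure $M=\{\rho>\lambda\sigma\}+X$ of the optimiser together with $\Tr M\rho=\ve$ (or, as in the paper, Lagrange duality); you never invoke this, and your defining equation for $\lambda$ is circular as written.

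The more serious gap is \eqref{eq:thm2}. The transform pair $T[f](\ve)=\inf_{\delta}[f(\delta)-\log(\ve-\delta)]$ and $S[g](\ve)=\sup_{\delta}[g(\delta)+\log(\delta-\ve)]$ always satisfies $S[T[f]]\le f$ (take $\mu=\ve$ in the inner infimum), but the reverse inequality---the ``collapse of the double optimisation'' you assert---fails for general continuous, strictly decreasing $f$. Writing $\Lambda(\delta)=\exp[\wt{D}^\delta_{\max}(\rho\|\sigma)]$, the collapse amounts to recovering $\Lambda(\ve)$ as the supremum, over lines lying below the graph of $\Lambda$ with $x$-intercept in $(\ve,1]$, of their heights at $\ve$; this holds only because $\Lambda$ is convex in $\delta$ (being the generalised inverse of the convex decreasing map $\lambda\mapsto\Tr(\rho-\lambda\sigma)_+$), and one must additionally check that the relevant supporting line has its intercept in the admissible range. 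Continuity and strict monotonicity, the only properties you cite, are not sufficient, so your derivation of \eqref{eq:thm2} from \eqref{eq:thm1} is unsupported as it stands. Either supply this convexity argument or, more simply, prove \eqref{eq:thm2} directly from the dual form of $\wt{D}^\ve_{\max}$ as the paper does.
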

\end{boxed}

\begin{proof}
Take any $\ve \in (0,1]$. We first rewrite $D^{1-\ve}_H(\rho||\sigma)$ in a more convenient form using the change of variables $\frac{1}{\Tr M \sigma} \mapsto z$, $\frac{1}{\Tr M \sigma} M \mapsto M'$:
\begin{equation}\begin{aligned}\label{eq:dh_primal_mod}
        D^{1-\ve}_{H} (\rho \| \sigma) &= \log \sup \lset \frac{1}{\Tr M \sigma } \bar 0 \leq M \leq \id,\; \Tr M \rho \geq \ve \rset\\
        &= \log \sup_{z,M'} \lset z \bar 0 \leq M' \leq z \id,\; \Tr M' \rho \geq z \ve,\; \Tr M' \sigma = 1 \rset\\
        &= \log \sup_{z,M'} \lset z \bar 0 \leq M' \leq z \id,\; \Tr M' \rho \geq z \ve,\; \Tr M' \sigma \leq 1 \rset,
\end{aligned}\end{equation}
observing in the last line that the constraint $\Tr M' \sigma = 1$ can be relaxed to an inequality without loss of generality, as any feasible $M'$ with $\Tr M' \sigma < 1$ can be rescaled to satisfy $\Tr M' \sigma = 1$ while only increasing the feasible optimal value. 
One can then readily compute the Lagrange dual as
\begin{align}\label{eq:dh_dual}
    D^{1-\ve}_{H} (\rho \| \sigma) &= \log \inf_{h,y,P} \lset h \bar y \rho \leq h \sigma + P,\; P \geq 0,\; h,y \geq 0,\; 1 + \Tr P \leq y \ve \rset,
\end{align}
where equality follows by strong duality.\footnote{Strong duality is particularly easy to see for $\ve \in (0,1)$: a choice of $z \in (\ve,1)$ and $M' = \ve \id$ forms a strictly feasible solution pair to the primal problem~\eqref{eq:dh_primal_mod}, and so Slater's criterion applies (see e.g.~\cite[Sec.~5.9]{boyd_2004}). In the edge case of $\ve = 1$, where the value of the hypothesis testing relative entropy reduces to $D^0_H(\rho\|\sigma) = - \log \Tr \Pi_\rho \sigma$ with $\Pi_\rho$ denoting the projection onto the support of $\rho$, the primal problem in~\eqref{eq:dh_primal_mod} is no longer strictly feasible --- instead, we can argue the strong feasibility of the dual in~\eqref{eq:dh_dual}. To this end, assume that $\rho$ and $\sigma$ are not orthogonal, as otherwise the problem trivialises.
We would now like to find a $P > 0$ such that $y \rho < h \sigma + P$ and $1 + \Tr P < y $. Let us decompose the underlying Hilbert space $\mathcal{H}$ into $\mathcal{H} = \supp(\sigma) \oplus \operatorname{ker}(\sigma)$ and write 
$h \sigma + P - y \rho = \begin{pmatrix}h\sigma + \Pi_\sigma (P - y\rho) \Pi_\sigma & \Pi_\sigma (P - y\rho) \Pi_\sigma^\perp \\ \Pi_\sigma^\perp (P - y\rho) \Pi_\sigma & \Pi_\sigma^\perp (P - y\rho) \Pi_\sigma^\perp\end{pmatrix}$. For any fixed $P>0$ and $y>0$, taking the Schur complement with respect to the upper-left block tells us that, for all sufficiently large $h$, the matrix is positive definite if and only if $\Pi_\sigma^\perp ( P - y \rho) \Pi_\sigma^\perp > 0$ (understood as an operator acting only on $\ker(\sigma)$). Let us then choose $P \coloneqq \Pi_\sigma^\perp \rho \Pi_\sigma^\perp + \delta \id$ for some $\delta >0$. Then $1 + \Tr P = 1 + \delta \Tr \id + y \Tr \Pi_\sigma^\perp \rho$ which, for sufficiently large $y$, satisfies $1 + \Tr P < y$ since $ \Tr \Pi_\sigma^\perp \rho < 1$ by the assumption of non-orthogonality of $\rho$ and $\sigma$. The choice of $P$, $h$, and $y$ constructed in this way forms a strictly feasible solution to~\eqref{eq:dh_dual}, ensuring that strong duality holds by Slater's criterion.}
Substituting $\mu = \frac{1}{y}$, $Q = \mu P$, and $\lambda = h \mu$ we get
\begin{align}
    D^{1-\ve}_{H} (\rho \| \sigma) &= \log \inf_{h,\mu,P} \lsetr h \!\barr \rho \leq h \mu\, \sigma + \mu P,\; P \geq 0,\; h, \frac1\mu \geq 0,\; 1 + \Tr P \leq \frac1\mu \ve \rsetr\nonumber\\
    &= \log \inf_{h,\mu,Q} \lset h \bar \rho \leq h \mu\, \sigma + Q,\; Q \geq 0,\; h \geq 0,\; \mu > 0,\; \Tr Q \leq \ve - \mu \rset\nonumber\\
    &= \log \inf_{\mu \in (0,\ve]} \frac{1}\mu\, \inf_{\lambda,Q} \lsetr \lambda \barr \rho \leq \lambda \sigma + Q,\; Q \geq 0,\;  \Tr Q \leq \ve - \mu \rsetr\\
    &= \inf_{\mu \in (0,\ve]} \left[ \wt{D}^{\ve-\mu}_{\max}(\rho\|\sigma) - \log \mu \right],\nonumber
\end{align}
where in the last line we recalled the definition of~$\wt{D}^{\ve-\mu}_{\max}$ in~\eqref{eq:Dtilde_def}.
This is precisely the expression claimed in~\eqref{eq:thm1}. {Because of its definition as an infimum, the function $\delta \mapsto \widetilde{D}^\delta_{\max}$ is lower semi-continuous; since it is also easily verified to be non-increasing, it follows that it must be continuous from the right. This means that the endpoint $\mu=\ve$ can be excluded from the optimisation interval without affecting the optimal value. (For completeness, we prove the continuity properties of $\wt{D}^\delta_{\max}$ in full detail in Appendix~\ref{app:continuity}.)}

Let us now move on to the second equality of the theorem, namely~\eqref{eq:thm2}. Take any $\ve \in [0,1)$.
Using the dual form of $\wt{D}^\ve_{\max}$ in the last line of~\eqref{eq:Dtilde_dual}, we have
\begin{align}\label{eq:Dtilde_dual_tighter}
    \wt{D}^\ve_{\max} (\rho \| \sigma) &= \log \sup \lset \frac{\Tr W' \rho - \ve}{\Tr W' \sigma} \bar 0 \leq W' \leq \id \rset\nonumber\\
    &\texteq{(i)} \log \sup \lset \frac{\Tr W' \rho - \ve}{\Tr W' \sigma} \bar 0 \leq W' \leq \id,\; \Tr W' \rho > \ve \rset\nonumber\\
    &= \log \sup_{\delta \in (\ve,1]} \sup \lset \frac{\Tr W' \rho - \ve}{\Tr W' \sigma} \bar 0 \leq W' \leq \id,\; \Tr W' \rho = \delta \rset\\
    &=\log \sup_{\delta \in (\ve,1]} \sup \lset \frac{\delta - \ve}{\Tr W' \sigma} \bar 0 \leq W' \leq \id,\; \Tr W' \rho = \delta \rset\nonumber\\
    &\texteq{(ii)} \log \sup_{\delta \in (\ve,1]} \sup \lset \frac{\delta - \ve}{\Tr W' \sigma} \bar 0 \leq W' \leq \id,\; \Tr W' \rho \geq \delta \rset\nonumber\\
    &\texteq{(iii)} \sup_{\delta \in (\ve,1]} \left[ D^{1-\delta}_H(\rho \| \sigma) + \log (\delta - \ve) \right]\nonumber
\end{align}
where in (i) we observed that, since the optimal value of the supremum must be positive (consider that $W' = \id$ is feasible), we can restrict to operators $W'$ such that $\Tr W' \rho > \ve$ without loss of generality; in (ii), we used the fact that the condition $\Tr W' \rho = \delta$ can be relaxed to $\Tr W' \rho \geq \delta$ without loss of generality: for any $W''$ with $\Tr W'' \rho > \delta$ we can scale it down to a solution $W' \leq W''$ which has $\Tr W' \rho = \delta$ and for which the optimal value cannot be smaller; finally, (iii) is by definition of $D_H^{1-\delta}$.
{The fact that it suffices to optimise over $\delta \in (\ve,1)$ is argued as before: since the function $\delta \mapsto D^{1-\delta}_H$ is non-increasing and, because of its definition as a supremum, upper semi-continuous,\footnote{{Indeed, consider a sequence of values $(\e_n)_n$, with $\lim_n \e_n = \e$. Set $\lambda_n \coloneqq D_H^{\e_n} (\rho\|\sigma)$, and posit, up to extracting a subsequence, that $\lim_n \lambda_n = \limsup_n \lambda_n = \lambda$. By compactness, for all $n$ we can find a test $M_n$ such that $0\leq M_n \leq \id$, $\Tr M_n \rho \geq 1-\e_n$, and $\Tr M_n \sigma = \exp[-\lambda_n]$. Up to extracting yet another subsequence, we can further assume that $(M_n)_n$ has a limit, i.e.\ $\lim_n M_n = M$. Naturally, $M$ will also be a valid test, and $\Tr M\rho = \lim_n \Tr M_n \rho \geq \lim_n(1-\e_n) = 1-\e$. Hence, $D_H^\e(\rho\|\sigma) \geq - \log \Tr M\sigma = - \lim_n \log \Tr M_n \sigma = \limsup_n D_H^{\e_n}(\rho\|\sigma)$.}} it must be continuous from the left.}

The classical (commuting) case follows by Lemma~\ref{lem:classical_dmax_equivalence}.
\end{proof}

{
From the proof it is evident that this relation can be understood as a very direct consequence of  Lagrange duality, making it conceptually related to the quantum Neyman--Pearson lemma~\cite{HELSTROM}. 
Some related quantitative connections were observed before. This includes a dual expression of $D^\ve_H$ in terms of the hockey-stick divergence $\Tr (\rho - \lambda \sigma)_+$~\cite[Proposition~3.2]{audenaert_2012}, as well as the observation that~\eqref{eq:thm2} holds in the classical case~\cite[Proposition~2]{renes_2018}.
}

{%
We also note an interesting connection with one-shot results in information theory that appear in the form $D^{\ve -\delta}_H (\rho\|\sigma) - \log \frac{1}{\delta}$, notably in the recent one-shot achievability bounds of~\cite{cheng_2023-1}. By Theorem~\ref{thm:equivalence_DH_Dtilde}, optimising over $\delta$ allows such results to be understood instead in terms of $\wt{D}^{1-\ve}_{\max}$. More precisely, the smooth relative entropy bounds of~\cite{cheng_2023-1} concern bounding the achievable size of a code $M$ in terms of the demanded error probability $\ve$, and for instance~\cite[Proposition~2]{cheng_2023-1} can be understood as stating the bound
\begin{equation}\begin{aligned}\label{eq:haochung}
    \log M \geq \wt{D}^{1-\ve}_{\max} (\rho_{XB} \| \rho_X \otimes \rho_B),
\end{aligned}\end{equation}
with an analogous rephrasing applying to the other bounds of~\cite{cheng_2023-1}. Indeed, this could already be deduced from the proof approach of~\cite{cheng_2023-1} that directly employs the quantity $\Tr [\rho \wedge \lambda \sigma] = 1 - \Tr(\rho -\lambda \sigma)_+$, as was recently observed also in~\cite{gour_2025-1}. The rewriting in terms of $\wt{D}^{1-\ve}_{\max}$ removes the avoidable error terms $\delta$ in the bound and, through the results of this work, can help obtain insights into the properties of this achievability result as well as improved connections with other smooth relative entropies.
In the classical case, analogous achievability results were obtained in~\cite[Theorem~17]{polyanskiy_2010}, and similar quantitative connections were discussed in~\cite[\S 2.4]{polyanskiy_2010-1}.
}%

A noteworthy aspect of the classical case of the above result is that it gives a direct correspondence between the hypothesis testing relative entropy and the smooth max-relative entropy $\Dmax{T}{sub}$. Moreover, if $\ve \leq \frac12 \norm{p-q}{1}$, then we know from Lemma~\ref{lem:subnormalised_normalised_main} that we can equivalently use the normalised smooth max-relative entropy:
\begin{equation}\begin{aligned}
     D^{1-\varepsilon}_H(p \| q) &= \inf_{\delta \in [0, \varepsilon)}\left[ \Dmax{T}{n}[\delta] (p \| q)- \log (\varepsilon-\delta)\right],\\
     \Dmax{T}{n} (p \| q) &= \sup_{\delta \in (\varepsilon, 1]} \Big[ D^{1-\delta}_H ( p \| q) + \log (\delta - \varepsilon)\Big].
\end{aligned}\end{equation}
In the quantum case, the equivalence we previously showed in Proposition~\ref{prop:equivalence_measured} tells us that Theorem~\ref{thm:equivalence_DH_Dtilde} can be understood as a correspondence between the hypothesis testing relative entropy and the \emph{measured} smooth max-relative entropy. In Appendix~\ref{app:definitions} we also discuss other possible interpretations of $\wt{D}^\ve_{\max}$. However, in general $\wt{D}^\ve_{\max} (\rho \| \sigma) \neq \Dmax{T}{sub}(\rho\|\sigma)$, so the equivalence relation of Theorem~\ref{thm:equivalence_DH_Dtilde} does not extend to the standard smooth max-relative entropy $D_{\max}^{\ve}$ itself. Nevertheless, we will see in Section~\ref{sec:bounds} that the result will directly lead to the strengthening of multiple bounds involving $D^\ve_{\max}$.

Another immediate consequence of  Theorem~\ref{thm:equivalence_DH_Dtilde} is that, given two pairs of states $(\rho,\sigma)$ and $(\rho',\sigma')$, it holds that $D^\ve_H(\rho \| \sigma) \geq D^\ve_H(\rho' \| \sigma')$ $\forall \ve\in[0,1]$ if and only if $\wt{D}^\ve_{\max}(\rho \| \sigma) \geq \wt{D}^\ve_{\max}(\rho' \| \sigma') \; \forall \ve\in[0,1]$. (Note that the point $\ve=1$ is trivial in both cases, while for $\ve=0$ we have shown that the expressions in Theorem~\ref{thm:equivalence_DH_Dtilde} still apply.) This is related to the result of~\cite[Theorem~2]{buscemi_2017} and the classical~\cite[Theorem~3]{mu_2021}. {Statements of this kind are important in the context of statistical comparison of experiments in the sense of Blackwell~\cite{blackwell_1953} and its quantum generalisations~\cite{buscemi_2012-1,renes_2016,buscemi_2017}.}


\section{Improving a useful lemma}\label{sec:DR}

Although we have already shown a very close relation between $D^\ve_H$ and the modified quantity $\wt{D}^\ve_{\max}$, this is so far insufficient to tightly connect the hypothesis testing relative entropy with the smooth max-relative entropy $D^\ve_{\max}$ itself. The issue is that, although one can straightforwardly upper bound $\wt{D}^\ve_{\max}$ with $\Dmax{T}{sub}$ and other variants of $D^{\ve}_{\max}$ (see Eq.~\eqref{eq:dtilde_upper_bound_with_dmax}), the other direction is much less obvious: $\wt{D}^\ve_{\max}$ involves an optimisation over values of $\lambda$ such that $\rho \leq \lambda \sigma + Q$, where $Q$ is a positive semidefinite operator of sufficiently small trace, but how can 
this operator inequality yield a (possibly subnormalised) state $\rho' \approx_{\ve} \rho$ that can be used as a feasible solution for the smooth max-relative entropy?

A solution to this conundrum was first given in a fundamental lemma by Datta and Renner~\cite[Lemma~3]{datta_2009-1}. It found use in many results that studied the asymptotics of the max-relative entropy, e.g.~\cite{tomamichel_2009,brandao_2010,datta_2009,datta_2013-1,tomamichel_2016,lami_2024-2,lami_2024-1}. Here we introduce a modified proof approach, allowing us to give better estimates in particular for 
smoothing over normalised states. 

We first state the lemma in a rather general form; the specific applications to max-relative entropy bounds will be discussed in Section~\ref{sec:bounds}.

\begin{boxed}
\begin{theorem}[{(Tightened Datta--Renner lemma)}] \label{tightened_DR_lemma}
Let $\rho$ be a state, $A,Q\geq 0$ positive semi-definite with $\Tr Q \leq \ve < 1$, and assume that $\rho\leq A+Q$ holds. Then there exists a subnormalised state $\rho'$ such that 
\bb
\rho' \leq A,\, \qquad \frac{\rho'}{\Tr \rho'} \leq \frac{A}{1-\ve}\,;
\ee
moreover, $\rho'$ is close to $\rho$ in the following senses:
\begin{align}
F\big(\rho,\, \rho'\,\big) &\geq (1-\ve)^2\, , \label{tightened_DR_unnormalised_fidelity} \\
F\Big(\rho,\,\frac{\rho'}{\Tr \rho'}\Big) &\geq 1-\ve\, , \label{tightened_DR_fidelity} \\
\frac12 \left\| \rho - \frac{\rho'}{\Tr \rho'} \right\|_1 &\leq \sqrt{\ve}\, , \label{tightened_DR_trace_distance} \\
    \big\|\rho - \rho' \hspace{1pt}\big\|_+ &\leq \sqrt{\ve\left(1-\frac{3\ve}{4}\right)} + \frac\ve2 \leq \sqrt{\ve(2-\ve)} \, . \label{tightened_DR_gen_trace_distance}
\end{align}
\end{theorem}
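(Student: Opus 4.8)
\emph{Construction via the geometric mean.}
Since $\rho\le A+Q=:B$ we have $\supp\rho\subseteq\supp B$, so restricting to $\supp B$ we may take $B>0$ and set $G\coloneqq A\gm B^{-1}$ (with $G=0$ off $\supp B$). Two standard facts about the operator geometric mean do the structural work: monotonicity gives $0\le G=A\gm B^{-1}\le B\gm B^{-1}=\id$, while the defining relation $(X\gm Y)\,Y^{-1}(X\gm Y)=X$ with $X=A,\,Y=B^{-1}$ gives $GBG=A$. I would then take
\[
  \rho'\ \coloneqq\ G\rho G\,,
\]
which is positive semi-definite and, using $\rho\le B$, satisfies $\rho'=G\rho G\le GBG=A$ --- the \emph{exact} domination we want. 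Since $\rho'\le A$ implies $\rho'/\Tr\rho'\le A/\Tr\rho'$, the claimed $\rho'/\Tr\rho'\le A/(1-\ve)$ reduces to showing $\Tr\rho'\ge 1-\ve$.

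\emph{The key estimate, and the main obstacle.}
Everything rests on the single inequality
\[
  \Tr\rho'\ =\ \Tr\rho\,G^2\ \ge\ 1-\ve\,,
\]
which (using $G^2\le G$, as $0\le G\le\id$) also gives $\Tr\rho\,G\ge\Tr\rho\,G^2\ge1-\ve$. Writing $\Lambda\coloneqq G^2$ we have $0\le\Lambda\le\id$ and $\rho'=\sqrt\Lambda\,\rho\,\sqrt\Lambda$, so this is exactly the hypothesis of a (tightened) gentle measurement lemma, which then delivers all the closeness statements below. To prove $\Tr\rho\,G^2\ge1-\ve$ I would start from the explicit formula $G=B^{-1/2}(B^{1/2}AB^{1/2})^{1/2}B^{-1/2}$, so that $B^{1/2}GB^{1/2}=(B^{1/2}AB^{1/2})^{1/2}$, together with subadditivity of the operator square root, $\sqrt{X+Y}\le\sqrt X+\sqrt Y$, applied to $B=\big((B^{1/2}AB^{1/2})+(B^{1/2}QB^{1/2})\big)^{1/2}$, which yields $B^{1/2}GB^{1/2}\ge B-(B^{1/2}QB^{1/2})^{1/2}$ and hence, conjugating by $B^{-1/2}$ and pairing with $\rho\le B$ (equivalently using $B^{-1/2}\rho B^{-1/2}\le\id$), a bound on $1-\Tr\rho\,G$ in terms of $Q$; note also $\Tr\rho\,M=1-\Tr\rho\,B^{-1/2}QB^{-1/2}\ge1-\Tr Q\ge1-\ve$ for $M\coloneqq B^{-1/2}AB^{-1/2}$, which is the commuting/classical shadow of the claim. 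Extracting from these a \emph{clean} estimate by $\ve$, with no dimension- or state-dependent prefactor --- and, for the subnormalised smoothing, proving the correspondingly sharpened gentle measurement bound (our Lemma~\ref{tighter_gentle_measurement_lemma}) --- is the subtle part; this is where I expect the real difficulty to lie, and where the geometric-mean and gentle-measurement ingredients must be combined carefully.

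\emph{Closeness bounds from a purification.}
Let $\ket{v}$ be the canonical purification of $\rho$ on a doubled Hilbert space ($\braket{v|v}=1$, $\Tr_2\ketbra{v}=\rho$), and put $\ket{v'}\coloneqq(G\otimes\id)\ket{v}$, a purification of $\rho'=G\rho G$ with $\braket{v|v'}=\Tr\rho\,G\ge0$ and $\braket{v'|v'}=\Tr\rho\,G^2=\Tr\rho'$. By Uhlmann's theorem $F(\rho,\rho')\ge|\braket{v|v'}|^2=(\Tr\rho\,G)^2\ge(1-\ve)^2$, which is~\eqref{tightened_DR_unnormalised_fidelity}. Using $F(\rho,c\sigma)=cF(\rho,\sigma)$ and $\Tr\rho\,G^2\le\Tr\rho\,G$,
\[
  F\!\Big(\rho,\tfrac{\rho'}{\Tr\rho'}\Big)\ =\ \frac{F(\rho,\rho')}{\Tr\rho'}\ \ge\ \frac{(\Tr\rho\,G)^2}{\Tr\rho\,G^2}\ \ge\ \Tr\rho\,G\ \ge\ 1-\ve\,,
\]
which is~\eqref{tightened_DR_fidelity}, and then~\eqref{tightened_DR_trace_distance} follows from the Fuchs--van de Graaf bound $\tfrac12\norm{\rho-\tau}{1}\le\sqrt{1-F(\rho,\tau)}$ for normalised $\rho,\tau$. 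For the sharp generalised trace distance~\eqref{tightened_DR_gen_trace_distance}: the partial trace is positive and trace preserving, so $\Tr(\cdot)_+$ is non-increasing under it, giving $\norm{\rho-\rho'}{+}\le\Tr\big(\ketbra{v}-\ketbra{v'}\big)_+$. Writing $\ket{v'}=c\ket{v}+d\ket{v^\perp}$ with $c=\Tr\rho\,G$ and $d=\sqrt{\Tr\rho\,G^2-c^2}\ge0$ (real by Cauchy--Schwarz), the operator $\ketbra{v}-\ketbra{v'}$ has rank $\le2$ and an elementary $2\times2$ computation gives $\Tr\big(\ketbra{v}-\ketbra{v'}\big)_+=\tfrac12\big[(1-r^2)+\sqrt{(1-r^2)^2+4d^2}\big]$ with $r^2\coloneqq\Tr\rho\,G^2$. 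Setting $t\coloneqq\Tr\rho(\id-G)=1-c$ and $s\coloneqq\Tr\rho(\id-G)^2$, one has $t\in[0,\ve]$, $s\in[t^2,t]$, $1-r^2=2t-s$, $d^2=s-t^2$; the above expression is non-decreasing in $s$ (the verification reduces to $(1-t)^2\ge0$), hence maximal at $s=t$, and then non-decreasing in $t\in[0,1]$ (reducing to $(3t-1)(1-t)\ge0$ on $[2/3,1]$), so it is at most $\tfrac12\big[\ve+\sqrt{4\ve-3\ve^2}\,\big]=\tfrac\ve2+\sqrt{\ve(1-3\ve/4)}$; the last inequality in~\eqref{tightened_DR_gen_trace_distance} is elementary, squaring to $(1-\ve)^2\ge0$. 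The multipartite version (Corollary~\ref{cor:joint_smoothing}) would run verbatim with $G$ replaced by a tensor product of marginal geometric means.
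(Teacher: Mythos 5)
Your construction and your derivation of the closeness bounds follow the paper's proof exactly: you take $G = A \gm \big((A+Q)^{-1}\big)$ with $B=A+Q$, use monotonicity of the geometric mean to get $0\le G\le \id$, use the identity $GBG=A$ to get $\rho'=G\rho G\le A$, and then obtain all four distance estimates from the gentle-measurement argument applied to $M=G^2$ (your purification computation for $\|\rho-\rho'\|_+$ correctly reproduces the paper's proof of Lemma~\ref{tighter_gentle_measurement_lemma}). However, the single inequality on which everything rests, $\Tr \rho G^2\ge 1-\ve$, is left unproven — you flag it yourself as the hard part — and the route you sketch for it does not work. First, $\sqrt{X+Y}\le\sqrt{X}+\sqrt{Y}$ is \emph{false} as an operator (Loewner-order) inequality: take $X=\mathrm{diag}(1,0)$ and $Y$ the rank-one projector onto $(1,1)/\sqrt2$; both are projections, so $\sqrt X+\sqrt Y=X+Y$, which has eigenvalue $1-1/\sqrt2<1$ on some unit vector $v$, whereas $\braket{v|\sqrt{X+Y}|v}=\sqrt{1-1/\sqrt2}>1-1/\sqrt2$. (Subadditivity of the square root does hold for the trace and as weak majorization of eigenvalues, but not in the operator order.) Second, even if it were granted, your chain would only bound $1-\Tr\rho G$ rather than $1-\Tr\rho G^2$ (and $G^2\le G$ runs in the wrong direction for converting one into the other), and the resulting error term $\Tr\big(B^{1/2}QB^{1/2}\big)^{1/2}=\big\|\sqrt{B}\sqrt{Q}\big\|_1$ is not controlled by $\Tr Q\le\ve$ — it is generically of order $\sqrt{\ve}$.

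The missing step has a one-line proof using only what you have already established. Since $0\le G\le\id$ gives $\id-G^2\ge 0$, the hypothesis $\rho\le A+Q$ yields
\begin{equation*}
1-\Tr\rho G^2 \,=\, \Tr\rho\,(\id-G^2)\,\le\,\Tr (A+Q)(\id-G^2)\,=\,\Tr(A+Q)-\Tr G(A+Q)G\,=\,\Tr(A+Q)-\Tr A\,=\,\Tr Q\,\le\,\ve\,,
\end{equation*}
where the penultimate equality is precisely the relation $G(A+Q)G=A$ that you already invoked to prove $\rho'\le A$. With this inserted, your argument is complete and coincides with the paper's.
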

\end{boxed}

Here, the fidelity bound with unnormalised states in~\eqref{tightened_DR_unnormalised_fidelity} is the same as the one found in~\cite{datta_2009-1,tomamichel_2016}, but the other bounds improve on known estimates (e.g.~\cite{datta_2009-1,brandao_2010}), and in particular on the statement of the lemma with normalised smoothing found in~\cite[Lemma~C.5]{brandao_2010}.

Before proving this result, let us discuss the simple but consequential difference between our approach and the one found in previous works. 
Since our aim is to use the matrix inequality $\rho \leq A + Q$ to construct a subnormalised state $\rho'$ such that $\rho' \leq A$, a natural idea is to define $T \coloneqq A^{1/2} (A + Q)^{-1/2}$ and take the ansatz $\rho' \coloneqq T \rho T^\dagger$, which is easily verified to satisfy the desired inequality. This is indeed the approach that the original proofs of the Datta--Renner lemma took~\cite{datta_2009-1,brandao_2010,tomamichel_2016}. But then how can we determine how close $\rho'$ is to $\rho$ in terms of 
distance measures such as the trace distance or fidelity? One may be tempted to use the celebrated gentle measurement {lemma~\cite{davies_1969,winter_1999},} which tells us that if $M$ is a POVM operator such that $\Tr M \rho$ is close to 1, then the subnormalised state $\sqrt{M} \rho \sqrt{M}$ is close to $\rho$. The issue, however, is that this lemma applies only to positive semidefinite operators, which the ansatz $T$ is not. This is indeed a difficulty that arises in the previous proof methods, and the bounds derived therein --- in particular, those for a normalised smoothing state --- are looser than what one would have obtained from applying the gentle measurement lemma. Is there, then, a way to instead pick a positive semidefinite operator in this approach?

Notice that the operator $G \coloneqq A^{1/2} U (A + Q)^{-1/2}$, where $U$ is some unitary, still satisfies $G \rho G^\dagger \leq A$. The question then becomes 
whether one can choose $U$ in a way that makes $G$ positive semidefinite. This is indeed always possible, and if $A > 0$, then such a solution is unique~\cite[Proposition~4.1.8]{bhatia_2007}: $U$ equals $(A^{-1/2}(A+Q)^{-1} A^{-1/2})^{1/2} A^{1/2} (A+Q)^{1/2}$, and $G$ becomes the \deff{geometric mean} of $A$ and $(A+Q)^{-1}$.

More generally, for $A,B \geq 0$ one defines the operator geometric mean $A \gm B$ as
\begin{equation}\begin{aligned}
    A \gm B \coloneqq A^{1/2} \left( A^{-1/2} B A^{-1/2} \right)^{1/2} A^{1/2}.
\end{aligned}\end{equation}
The properties of the geometric mean that will be relevant to us is that it is positive semidefinite, and that it is monotone non-increasing in either argument: in particular, $A \leq C$ implies that $A \gm B \leq C\gm B$~(see e.g.~\cite[Ch.~4.1]{bhatia_2007}).

It thus looks as if we are in a position to apply the gentle measurement lemma to $\rho' \coloneqq G \rho G$. Another issue transpires, however: although the known formulations of the gentle measurement lemma give tight estimates on the distance of the normalised state $\frac{G\rho G}{\Tr G^2 \rho}$ to $\rho$, it turns out that the previous bounds on the error of the \emph{sub}normalised state $G\rho G$ were not tight in trace distance. We thus first introduce the following improvement.

\begin{boxed}
\begin{lemma}[(A 
\emph{gentler} measurement lemma)] \label{tighter_gentle_measurement_lemma}
Let $M\in [0,\id]$ be a measurement operator and $\rho$ be an arbitrary state on the same system. If $\Tr M\rho \geq 1-\e$ for some $\e\in [0,1]$, then
\begin{align}
F\Big(\rho,\, \sqrt{M}\rho\sqrt{M}\Big) &\geq (1-\e)^2\, , \label{gentle_measurement_fidelity} \\
F\left(\rho,\, \frac{\sqrt{M}\rho \sqrt{M}}{\Tr M\rho}\right) &\geq 1-\e\, , \label{gentle_measurement_fidelity_norm} \\
\frac12 \left\|\rho - \frac{\sqrt{M}\rho \sqrt{M}}{\Tr M\rho} \right\|_1 &\leq \sqrt\e\, , \label{gentle_measurement_trace_distance} \\
\frac12 \left\|\rho - \sqrt{M}\rho\sqrt{M}\right\|_1 &\leq \begin{cases} \sqrt{\e\left(1-\frac{3\e}{4}\right)} & \text{\rm if $\e\leq 2/3$,} \\[1ex] 1/\sqrt3 & \text{\rm if $\e>2/3$} \end{cases} \leq \sqrt{\ve} \, ,\label{gentle_measurement_unnormalised_trace_distance} \\
 \left\|\rho - \sqrt{M}\rho\sqrt{M} \hspace{1pt}\right\|_+ &\leq \sqrt{\e\left(1-\frac{3\e}{4}\right)} + \frac\e2 \leq \sqrt{\ve(2-\ve)} \, . \label{gentle_measurement_gen_trace_distance}
\end{align}
All of the primary bounds are tight for all $\e\in [0,1]$.
\end{lemma}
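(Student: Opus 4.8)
The plan is to split the five inequalities into three groups: the fidelity bounds \eqref{gentle_measurement_fidelity}--\eqref{gentle_measurement_fidelity_norm}, the normalised trace-distance bound \eqref{gentle_measurement_trace_distance}, and the two subnormalised trace-distance bounds \eqref{gentle_measurement_unnormalised_trace_distance}--\eqref{gentle_measurement_gen_trace_distance}, treating each group by a short dedicated argument. The only facts I would need about $M$ are the elementary operator inequality $M \leq \sqrt M$ (valid since $x\leq\sqrt x$ on $[0,1]$), which in particular gives $\Tr\sqrt M\rho \geq \Tr M\rho \geq 1-\e$, together with the following general bound valid for any operator $K$ and any state $\rho$:
\begin{equation*}
F\big(\rho,\,K\rho K^\dagger\big) \;\geq\; \big|\Tr K\rho\big|^2 .
\end{equation*}
I would prove this from the variational form $F(\rho,\omega)^{1/2} = \|\sqrt\rho\sqrt\omega\|_1 = \max_{U}\big|\Tr(\sqrt\rho\sqrt\omega\,U)\big|$ over unitaries $U$: writing the polar decomposition $K\sqrt\rho = \sqrt{K\rho K^\dagger}\,\widetilde W$ with $\widetilde W$ unitary, so that $\sqrt{K\rho K^\dagger} = K\sqrt\rho\,\widetilde W^\dagger$, and choosing $U = \widetilde W$ yields $\Tr(\sqrt\rho\sqrt{K\rho K^\dagger}\,\widetilde W) = \Tr(\sqrt\rho K\sqrt\rho) = \Tr K\rho$, hence the claim. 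Taking $K=\sqrt M$ gives $F(\rho,\sqrt M\rho\sqrt M) \geq (\Tr\sqrt M\rho)^2 \geq (1-\e)^2$, which is \eqref{gentle_measurement_fidelity}; dividing by $t\coloneqq\Tr M\rho$ and using $F(\rho,c\omega)=cF(\rho,\omega)$ together with $\Tr\sqrt M\rho\geq t$ gives $F\big(\rho,\tfrac{\sqrt M\rho\sqrt M}{t}\big) = \tfrac1t F(\rho,\sqrt M\rho\sqrt M) \geq t \geq 1-\e$, which is \eqref{gentle_measurement_fidelity_norm}. The normalised trace-distance bound \eqref{gentle_measurement_trace_distance} then follows immediately from the Fuchs--van de Graaf inequality \eqref{eq:fvdg} applied to the two \emph{normalised} states $\rho$ and $\tfrac{\sqrt M\rho\sqrt M}{t}$.

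For the subnormalised trace-distance bounds \eqref{gentle_measurement_unnormalised_trace_distance}--\eqref{gentle_measurement_gen_trace_distance} --- where, as noted in the main text, merely invoking \eqref{eq:fvdg} would only recover the weaker right-hand estimate $\sqrt{\e(2-\e)}$ --- the idea is to reduce to an exact two-dimensional computation by purification. I would fix a purification $\ket\psi$ of $\rho$ and set $\ket{\widetilde\psi}\coloneqq(\sqrt M\otimes\id)\ket\psi$; tracing out the ancilla returns $\sqrt M\rho\sqrt M$, while $\braket{\widetilde\psi|\widetilde\psi} = \Tr M\rho = t$ and $\braket{\psi|\widetilde\psi} = \Tr\sqrt M\rho =: r$, with $t \leq r \leq \sqrt t$ (the lower bound from $\sqrt M\geq M$, the upper from Cauchy--Schwarz). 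Since both $\tfrac12\|\cdot\|_1$ and $\|\cdot\|_+$ are non-increasing under the partial trace, it suffices to bound the corresponding quantities for $\ketbra\psi - \ketbra{\widetilde\psi}$. This operator is supported on a space of dimension at most two, where its two (possibly vanishing) eigenvalues have sum $1-t$ and product $r^2-t\leq 0$; they are thus of the form $\lambda_\pm$ with $\lambda_+\geq 0\geq\lambda_-$ and $\lambda_+-\lambda_- = \sqrt{(1-t)^2+4(t-r^2)}$, so that $\tfrac12\|\ketbra\psi-\ketbra{\widetilde\psi}\|_1 = \tfrac12(\lambda_+-\lambda_-)$ and $\|\ketbra\psi-\ketbra{\widetilde\psi}\|_+ = \lambda_+ = \tfrac12\big[(1-t)+\sqrt{(1-t)^2+4(t-r^2)}\big]$.

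To finish I would insert $r\geq t$, so that $(1-t)^2+4(t-r^2) \leq (1-t)^2 + 4t(1-t) = (1-t)(1+3t)$, and then maximise over $t\in[1-\e,1]$, i.e. over $s\coloneqq 1-t \in [0,\e]$. An elementary one-variable check shows that $s\mapsto\tfrac12\sqrt{s(4-3s)}$ increases on $[0,2/3]$ and decreases afterwards, with maximal value $1/\sqrt3$ at $s=2/3$, which gives \eqref{gentle_measurement_unnormalised_trace_distance}; and that $s\mapsto\tfrac12\big[s+\sqrt{s(4-3s)}\big]$ is increasing throughout $[0,1]$, so that its maximum over $[0,\e]$ is attained at $s=\e$, giving \eqref{gentle_measurement_gen_trace_distance}. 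The two ``$\leq$'' comparisons claimed in the statement, $\sqrt{\e(1-\tfrac{3\e}{4})}\leq\sqrt\e$ and $\sqrt{\e(1-\tfrac{3\e}{4})}+\tfrac\e2\leq\sqrt{\e(2-\e)}$, both reduce after squaring to $(1-\e)^2\geq 0$. Finally, tightness of every primary bound would be witnessed by a single family: a qubit pure state $\rho=\ketbra\psi$ and a rank-one projective measurement $M=\ketbra\phi$ with $|\braket{\phi|\psi}|^2 = 1-\e$, for which $\sqrt M\rho\sqrt M = (1-\e)\ketbra\phi$ and all of the above inequalities become equalities by the same two-dimensional computation (with $r=t=1-\e$); for \eqref{gentle_measurement_unnormalised_trace_distance} in the regime $\e>2/3$ one instead chooses $|\braket{\phi|\psi}|^2=1/3$.

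The step I expect to be the only real obstacle is obtaining \eqref{gentle_measurement_unnormalised_trace_distance} and \eqref{gentle_measurement_gen_trace_distance} \emph{with the sharp constants}: any route that decouples $\rho$ from $\sqrt M\rho\sqrt M$ by a triangle inequality, or that applies Fuchs--van de Graaf directly to these two operators, loses either an additive term of order $\sqrt\e$ or the constant multiplying it, and only yields the weaker bounds on the right. The purification reduction avoids this precisely because in two dimensions the trace norm (and positive part) of a difference of two rank-one positive operators is a \emph{closed-form} function of their traces and overlap, so no slack is introduced until the single final use of $r\geq t$, which is itself saturated by rank-one projective measurements. The remaining ingredients --- the polar-decomposition fidelity bound, the Fuchs--van de Graaf step, and the elementary maximisation over $s$ --- are routine.
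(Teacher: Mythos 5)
Your proposal is correct and follows essentially the same route as the paper's Appendix~C proof: the fidelity bounds reduce to $\Tr \sqrt{M}\rho \geq \Tr M\rho \geq 1-\e$ (you via the polar-decomposition inequality $F(\rho,K\rho K^\dagger)\geq|\Tr K\rho|^2$, the paper via the direct identity $\sqrt{\sqrt{\rho}\sqrt{M}\rho\sqrt{M}\sqrt{\rho}}=\sqrt{\rho}\sqrt{M}\sqrt{\rho}$); the normalised trace-distance bound follows from Fuchs--van de Graaf; and the two subnormalised bounds are obtained by lifting to purifications, applying the exact rank-two formula for $\|\cdot\|_1$ and $\Tr(\cdot)_+$ in terms of $t=\Tr M\rho$ and $r=\Tr\sqrt{M}\rho$, inserting $r\geq t$, and maximising over $t\in[1-\e,1]$; the tightness witnesses coincide as well. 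The one substantive difference is your choice of purification $\ket{\widetilde\psi}=(\sqrt{M}\otimes\id)\ket{\psi}$, for which the overlap is \emph{exactly} $\braket{\psi|\widetilde\psi}=\Tr\sqrt{M}\rho$. The paper instead purifies $\sqrt{M}\rho\sqrt{M}$ as $\big(\sqrt{\sqrt{M}\rho\sqrt{M}}\otimes\id\big)\ket{\phi}$ and identifies $\Tr\alpha\beta$ with $y^2=(\Tr\sqrt{M}\rho)^2$, but with that choice one only has $\Tr[\sqrt{\rho}\,\sqrt{\sqrt{M}\rho\sqrt{M}}\,]\leq\Tr\sqrt{M}\rho$ in general (strictly so already for $\rho=\ketbra{0}$, $M=\ketbra{\phi_\e}$), which breaks the claimed equality in the displayed chain; your purification avoids this issue entirely while yielding the same final bounds, so in this respect your write-up is the more careful of the two.
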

\end{boxed}
Here, the bounds~\eqref{gentle_measurement_fidelity}--\eqref{gentle_measurement_trace_distance} were known (see e.g.~\cite[Lemma~9.4.1]{wilde_2017}), while~\eqref{gentle_measurement_unnormalised_trace_distance} and~\eqref{gentle_measurement_gen_trace_distance} improve on previous estimates~{\cite{davies_1969,winter_1999,ogawa_2007}. To streamline the discussion here, we} defer the proof to Appendix~\ref{app:gentle}.

We can now prove Theorem~\ref{tightened_DR_lemma} using the reasoning outlined above.

\begin{proof}[Proof of Theorem~\ref{tightened_DR_lemma}]
Up to projecting down onto the support of $A+Q$, we can assume without loss of generality that $A+Q>0$ is invertible. Define 
\bb
G\coloneqq A \gm \left((A+Q)^{-1}\right) = (A+Q)^{-1/2} \left( (A+Q)^{1/2} A (A+Q)^{1/2} \right)^{1/2} (A+Q)^{-1/2} .
\label{gm_proof_eq1}
\ee
Due to the monotonicity of the operator geometric mean we have that
\bb
0 \leq G \leq (A+Q) \,\# \left((A+Q)^{-1}\right) = \id\, .
\label{gm_proof_eq2}
\ee
Conjugating by $G$, from $\rho \leq A+Q$ we deduce that 
\bb
\rho' \coloneqq G\rho G \leq G (A+Q) G = A\, ,
\label{gm_proof_eq3}
\ee
as a simple calculation using the formula on the rightmost side of~\eqref{gm_proof_eq1} reveals. We now estimate
\bb
1 - \Tr \rho G^2 = \Tr \rho \left(\id-G^2\right) \leq \Tr (A+Q) \left(\id-G^2\right) = \Tr(A+Q) - \Tr G(A+Q)G = \Tr Q \leq \ve\, ,
\label{gm_proof_eq4}
\ee
where the first inequality holds due to the fact that $\id - G^2\geq 0$. Applying the gentler measurement lemma (Lemma~\ref{tighter_gentle_measurement_lemma}) gives the claimed bounds.
\end{proof}

An advantage of the proof approach using the operator geometric mean is that it can be easily adapted to more general scenarios. 
In particular, we can generalise it to a multi-partite setting where the smoothing is performed over the marginals over a global state, inspired by the notion of `simultaneous smoothing' considered in the context of the one-shot multiparty typicality conjecture by Drescher and Fawzi~\cite{drescher_2013} and encountered earlier in~\cite{anshu_2019}. We discuss this in detail in Section~\ref{sec:simultaneous}.


\section{Tightened bounds and relations}\label{sec:bounds}


\subsection[Relating \texorpdfstring{$\wt{D}^\ve_{\max}$ and $D^\ve_H$}{modified smooth max-relative entropy and hypothesis testing relative entropy}]{Relating $\boldsymbol{\wt{D}^\ve_{\max}}$ and $\boldsymbol{D^\ve_H}$}
\label{sec:bounds_dmax}

We are now ready to tackle the question of relating the two fundamental operational quantities, $D^\ve_{\max}$ and $D^\ve_H$.

The first key ingredient will be tight bounds between $D^\ve_H$ and the modified smooth max-relative entropy $\wt{D}_{\max}^\ve$, obtained from the precise connection between the two that we established in Theorem~\ref{thm:equivalence_DH_Dtilde}.

\begin{boxed}
\begin{lemma}\mbox{}
\label{lem:DH_Dtilde}
For all $\ve \in (0,1)$ and all $\mu \in (0,\ve]$, it holds that
\begin{equation}\begin{aligned} \label{eq:Dtildemax_vs_DH}
    \wt{D}^\ve_{\max} (\rho\|\sigma) + \log \frac{1}{\ve(1-\ve)} \leq D^{1-\ve}_H(\rho\|\sigma) \leq \wt{D}^{\ve-\mu}_{\max} (\rho\|\sigma) + \log\frac{1}{\mu}
\end{aligned}\end{equation}
\end{lemma}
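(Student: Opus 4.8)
The plan is to derive both inequalities in~\eqref{eq:Dtildemax_vs_DH} directly from the equivalence~\eqref{eq:thm1} of Theorem~\ref{thm:equivalence_DH_Dtilde}, namely
\begin{equation}\begin{aligned}
    D^{1-\ve}_H(\rho\|\sigma) = \inf_{\mu \in (0,\ve]}\left[ \wt{D}^{\ve-\mu}_{\max}(\rho\|\sigma) - \log \mu \right],
\end{aligned}\end{equation}
so no new heavy machinery is needed --- the work is in bounding the infimum from above and below.

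For the \emph{upper bound}, I would simply drop the infimum: for any fixed $\mu \in (0,\ve]$, the value $\wt{D}^{\ve-\mu}_{\max}(\rho\|\sigma) - \log\mu = \wt{D}^{\ve-\mu}_{\max}(\rho\|\sigma) + \log\frac1\mu$ is one of the terms being minimised over, hence is $\geq D^{1-\ve}_H(\rho\|\sigma)$. This is immediate and gives exactly the right-hand inequality.

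For the \emph{lower bound}, the idea is to lower-bound every term in the infimum by a single $\mu$-independent quantity. Since $\delta \mapsto \wt{D}^\delta_{\max}(\rho\|\sigma)$ is non-increasing (a fact noted in the excerpt, following from the monotonicity of $\Tr(\rho-\lambda\sigma)_+$ in the smoothing parameter) and since $\ve-\mu \leq \ve$ for $\mu \in (0,\ve]$, we have $\wt{D}^{\ve-\mu}_{\max}(\rho\|\sigma) \geq \wt{D}^{\ve}_{\max}(\rho\|\sigma)$ for all feasible $\mu$. Therefore
\begin{equation}\begin{aligned}
    D^{1-\ve}_H(\rho\|\sigma) = \inf_{\mu \in (0,\ve]}\left[ \wt{D}^{\ve-\mu}_{\max}(\rho\|\sigma) + \log\tfrac1\mu \right] \geq \wt{D}^{\ve}_{\max}(\rho\|\sigma) + \inf_{\mu\in(0,\ve]}\log\tfrac1\mu = \wt{D}^\ve_{\max}(\rho\|\sigma) + \log\tfrac1\ve,
\end{aligned}\end{equation}
where the last equality uses that $\log\frac1\mu$ is decreasing in $\mu$, so its infimum over $(0,\ve]$ is attained at $\mu=\ve$. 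This yields the cleaner bound $\wt{D}^\ve_{\max}(\rho\|\sigma) + \log\frac1\ve \leq D^{1-\ve}_H(\rho\|\sigma)$, which is in fact \emph{stronger} than the claimed $\wt{D}^\ve_{\max}(\rho\|\sigma) + \log\frac{1}{\ve(1-\ve)} \leq D^{1-\ve}_H(\rho\|\sigma)$ --- so presumably the paper wants the weaker form either because it pairs symmetrically with the upper bound when one later takes $\mu\to\ve$, or because it is what is convenient downstream. To recover precisely the stated form one can instead invoke Lemma~\ref{lem:classical_dmax_equivalence}-style reasoning or, more simply, note that one also has the crude estimate $\wt D^{\ve-\mu}_{\max} + \log\frac1\mu \geq \wt D^\ve_{\max} + \log\frac{1}{\ve(1-\ve)}$ is weaker and thus certainly implied; alternatively the $\log\frac{1}{1-\ve}$ slack naturally appears if one routes the lower bound through the subnormalised-to-normalised conversion $\Dmax{T}{sub} \leq \Dmax{T}{n} \leq \Dmax{T}{sub} + \log\frac1{1-\ve}$ together with the $\wt D^\ve_{\max} \leq \Dmax{T}{sub}$ direction. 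I would present the tight version $\wt{D}^\ve_{\max}+\log\frac1\ve$ and remark it implies the stated one, since $\log\frac1\ve \geq \log\frac1\ve + \log\frac1{1-\ve} = \log\frac1{\ve(1-\ve)}$ is \emph{false}; rather $\log\frac1{\ve(1-\ve)} \geq \log\frac1\ve$, so the stated bound is the \emph{weaker} one and is implied.

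The only genuine subtlety --- and the step I would be most careful about --- is the endpoint behaviour: whether the infimum over $\mu \in (0,\ve]$ really is attained or approached at $\mu = \ve$ in the $\log\frac1\mu$ factor while simultaneously the $\wt{D}^{\ve-\mu}_{\max}$ term is controlled. Since these two pieces are bounded separately (monotonicity handles one, explicit minimisation of $\log\frac1\mu$ handles the other), there is no circularity, and the right-continuity of $\delta\mapsto\wt D^\delta_{\max}$ established in the excerpt guarantees the infimum in~\eqref{eq:thm1} is not spuriously affected by excluding $\mu=\ve$ or not. Thus both bounds follow in a few lines once Theorem~\ref{thm:equivalence_DH_Dtilde} is in hand.
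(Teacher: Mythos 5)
Your upper bound is exactly the paper's argument: drop the infimum in~\eqref{eq:thm1}. That part is fine.

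The lower bound has a genuine gap, and the error is in your final logical step. Your monotonicity argument correctly yields $D^{1-\ve}_H(\rho\|\sigma) \geq \wt{D}^\ve_{\max}(\rho\|\sigma) + \log\frac1\ve$, but this is \emph{weaker} than the claimed $D^{1-\ve}_H(\rho\|\sigma) \geq \wt{D}^\ve_{\max}(\rho\|\sigma) + \log\frac{1}{\ve(1-\ve)}$, not stronger. Since $\log\frac{1}{\ve(1-\ve)} = \log\frac1\ve + \log\frac{1}{1-\ve} > \log\frac1\ve$ for $\ve\in(0,1)$, the lemma asserts that a \emph{larger} quantity still sits below $D^{1-\ve}_H$; that is a strictly stronger statement and is not implied by your bound. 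You notice that $\log\frac{1}{\ve(1-\ve)} \geq \log\frac1\ve$ but then conclude the stated bound is ``the weaker one and is implied,'' which reverses the implication. The alternative routes you sketch (via $\Dmax{T}{sub} \leq \Dmax{T}{n} \leq \Dmax{T}{sub}+\log\frac{1}{1-\ve}$ or Lemma~\ref{lem:classical_dmax_equivalence}) also do not produce the missing $\log\frac{1}{1-\ve}$; they go in the wrong direction for a lower bound on $D^{1-\ve}_H$.

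The missing idea is that the lower bound needs the \emph{second} equivalence~\eqref{eq:thm2} of Theorem~\ref{thm:equivalence_DH_Dtilde}, not just~\eqref{eq:thm1}. The paper writes $\wt{D}^\ve_{\max}(\rho\|\sigma) = \sup_{\delta\in(\ve,1]}\big[D^{1-\delta}_H(\rho\|\sigma)+\log(\delta-\ve)\big]$, applies~\eqref{eq:thm1} at level $\delta$ with the substitution $\zeta=\delta-\ve+\mu$ (so that $\delta-\zeta = \ve-\mu$), and then observes that $\frac{\delta-\ve}{\delta-\ve+\mu}$ is increasing in $\delta$ and hence bounded by its value at $\delta=1$, namely $\frac{1-\ve}{1-\ve+\mu}$; together with $\ve(1-\ve+\mu)\geq\mu$ this yields $D^{1-\delta}_H+\log\frac{\delta-\ve}{\ve(1-\ve)} \leq \wt{D}^{\ve-\mu}_{\max}+\log\frac1\mu$ for all $\delta$ and $\mu$, and taking $\sup_\delta$ and $\inf_\mu$ gives the claimed inequality. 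The factor $1-\ve$ enters precisely from the endpoint $\delta=1$ of that supremum, which your single-equation approach cannot see.
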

\end{boxed}
The first inequality improves over the previously known bound of~\cite[Proposition~4.7]{datta_2015}, and indeed also over a stronger bound that was implicit in the proof of~\cite[Theorem~11]{datta_2013-1}. The second inequality was known~\cite{datta_2015}.

\begin{proof}
The upper bound is an immediate consequence of~\eqref{eq:thm1} in Theorem~\ref{thm:equivalence_DH_Dtilde}.

Let us now fix some $\delta \in (\ve,1]$. Using again the same equation~\eqref{eq:thm1} tells us that for any $\zeta \in (0,\delta]$ we have
\begin{equation}\begin{aligned}\label{eq:DH_upper}
    D^{1-\delta}_H (\rho \| \sigma) &\leq \wt{D}^{\delta-\zeta}_{\max}(\rho\|\sigma) + \log \frac1\zeta.
\end{aligned}\end{equation}
With the choice $\zeta = \delta - \ve + \mu$ for some fixed $\mu \in (0,\ve]$, this gives
\begin{equation}\begin{aligned}
     D^{1-\delta}_H (\rho \| \sigma) + \log\frac{\delta-\ve}{\ve(1-\ve)} &\leq \wt{D}^{\ve-\mu}_{\max}(\rho\|\sigma) +\log \frac{\delta-\ve}{\ve (1-\ve)(\delta - \ve + \mu)}\\
     &\leq \wt{D}^{\ve-\mu}_{\max}(\rho\|\sigma) +\log \frac{1-\ve}{\ve(1-\ve)(1 - \ve + \mu)}\\
     &= \wt{D}^{\ve-\mu}_{\max}(\rho\|\sigma) + \log \frac{1}{\ve(1 - \ve + \mu)},
 \end{aligned}\end{equation} 
 where in the second line we made use of the fact that $\frac{\delta-\ve}{\delta-\ve+\mu} = \left( 1 + \frac{\mu}{\delta - \ve} \right)^{-1}$, which is clearly monotonic in $\delta$. Consider now that
 \begin{equation}\begin{aligned}
     \ve(1-\ve+\mu) = \ve(1-\ve) + \ve \mu \geq \mu (1-\ve) + \ve \mu = \mu
 \end{aligned}\end{equation}
 since $\ve \geq \mu$ by assumption. Thus
 \begin{equation}\begin{aligned}
     D^{1-\delta}_H (\rho \| \sigma) + \log\frac{\delta-\ve}{\ve(1-\ve)} \leq \wt{D}^{\ve-\mu}_{\max}(\rho\|\sigma) + \log \frac1\mu.
 \end{aligned}\end{equation}
Taking the supremum over all $\delta \in (\ve,1]$ and infimum over all $\mu \in (0,\ve]$ yields
 \begin{equation}\begin{aligned}
     \wt{D}^\ve_{\max} (\rho\|\sigma) + \log \frac{1}{\ve(1-\ve)} \leq D^{1-\ve}_H(\rho\|\sigma)
 \end{aligned}\end{equation}
 by Theorem~\ref{thm:equivalence_DH_Dtilde}.
\end{proof}

A point of note here is that the bounds can be verified to be tight in many ways. The tightness of the error term $\log\frac{1}{\ve(1-\ve)}$ is particularly easy to see in the trivial case $\rho = \sigma$, where it holds that $D^{1-\ve}_H(\rho\|\rho)=\log\frac1\ve$ and $\wt{D}^\ve_{\max}(\rho\|\rho) = \log(1-\ve)$. In light of Theorem~\ref{thm:equivalence_DH_Dtilde}, we also see that the upper bound is as tight as possible, since it holds with equality by taking the infimum over $\mu$. The lower bound is additionally tight in an i.i.d.\ asymptotic sense at the level of exponents, as we will shortly see in Sec.~\ref{sec:bounds_renyi}.

{
\subsection[\texorpdfstring{Relating ${D^\ve_{\max}}$ and ${\wt{D}^\ve_{\max}}$}{Relating smooth and modified smooth max-relative entropy}]{Relating $\boldsymbol{D^\ve_{\max}}$ and $\boldsymbol{\wt{D}^\ve_{\max}}$}
}

Applying the improved Datta--Renner lemma immediately gives the following bounds.

\begin{boxed}
\begin{corollary}\label{cor:DR_dmax}
For all quantum states $\rho$ and $\sigma$ and for all $\ve \in [0,1)$, it holds that
\begin{align}
    \Dmax{T}{n}[\sqrt{\ve}](\rho\|\sigma) - \log\frac{1}{1-\ve} &\leq \wt{D}^{\ve}_{\max}(\rho\|\sigma),\\
    \Dmax{T}{s}[\sqrt{\ve\left(1-\frac{3\ve}{4}\right)} + \frac\ve2](\rho\|\sigma) &\leq \wt{D}^{\ve}_{\max}(\rho\|\sigma),\\
    \Dmax{P}{n}[\sqrt{\ve}](\rho\|\sigma) - \log\frac{1}{1-\ve} &\leq \wt{D}^{\ve}_{\max}(\rho\|\sigma),\label{eq:theoneweuse}\\
    \Dmax{P}{s}[\sqrt{\ve(2-\ve)}](\rho\|\sigma) &\leq \wt{D}^{\ve}_{\max}(\rho\|\sigma).
\end{align}
\end{corollary}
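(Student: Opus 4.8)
The plan is to feed the operator inequalities that witness feasibility in the definition of $\wt{D}^\ve_{\max}$ directly into the tightened Datta--Renner lemma (Theorem~\ref{tightened_DR_lemma}). Recall from~\eqref{eq:Dtilde_def} that $\wt{D}^\ve_{\max}(\rho\|\sigma) = \log\inf\{\lambda \geq 0 : \rho \leq \lambda\sigma + Q,\ Q \geq 0,\ \Tr Q \leq \ve\}$; the case $\wt{D}^\ve_{\max}(\rho\|\sigma) = \infty$ being trivial, I would fix any feasible pair $(\lambda,Q)$ with $\lambda<\infty$ and set $A \coloneqq \lambda\sigma$. Since $A,Q \geq 0$, $\Tr Q \leq \ve < 1$, and $\rho \leq A+Q$, Theorem~\ref{tightened_DR_lemma} produces a subnormalised state $\rho'$ with $\rho' \leq A = \lambda\sigma$ and $\frac{\rho'}{\Tr \rho'} \leq \frac{A}{1-\ve} = \frac{\lambda}{1-\ve}\sigma$, together with the closeness estimates~\eqref{tightened_DR_unnormalised_fidelity}--\eqref{tightened_DR_gen_trace_distance}.

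Each of the four bounds then follows by pairing the appropriate closeness estimate with the smoothing ball of the corresponding $D^\ve_{\max}$ variant. For the subnormalised variants one uses $\rho'$ itself: from $\rho' \leq \lambda\sigma$ we get $D_{\max}(\rho'\|\sigma) \leq \log\lambda$, while~\eqref{tightened_DR_gen_trace_distance} gives $\big\|\rho - \rho'\big\|_+ \leq \sqrt{\ve(1-3\ve/4)}+\ve/2$ and~\eqref{tightened_DR_unnormalised_fidelity} gives $\sqrt{1-F(\rho,\rho')} \leq \sqrt{1-(1-\ve)^2} = \sqrt{\ve(2-\ve)}$; hence $\rho'$ is feasible for $\Dmax{T}{s}[\sqrt{\ve(1-3\ve/4)}+\ve/2](\rho\|\sigma)$ and for $\Dmax{P}{s}[\sqrt{\ve(2-\ve)}](\rho\|\sigma)$, and by definition of the smooth max-relative entropy both of these are at most $\log\lambda$. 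For the normalised variants one uses the genuine state $\frac{\rho'}{\Tr \rho'}$, which satisfies $D_{\max}\!\big(\tfrac{\rho'}{\Tr \rho'}\big\|\sigma\big) \leq \log\tfrac{\lambda}{1-\ve} = \log\lambda + \log\tfrac{1}{1-\ve}$; by~\eqref{tightened_DR_trace_distance} it lies in $\B^{\sqrt\ve}_{\sTnorm}(\rho)$ and by~\eqref{tightened_DR_fidelity} in $\B^{\sqrt\ve}_{\sPnorm}(\rho)$, so $\Dmax{T}{n}[\sqrt\ve](\rho\|\sigma)$ and $\Dmax{P}{n}[\sqrt\ve](\rho\|\sigma)$ are both at most $\log\lambda + \log\tfrac{1}{1-\ve}$.

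Finally I would take the infimum over all feasible $\lambda$ (passing to a minimising sequence if it is not attained), which replaces $\log\lambda$ by $\wt{D}^\ve_{\max}(\rho\|\sigma)$, and rearrange the normalised bounds to move the term $\log\tfrac{1}{1-\ve}$ to the left-hand side; this yields exactly the four claimed inequalities. I do not expect a genuine obstacle: all the substance is already packed into Theorem~\ref{tightened_DR_lemma}, and this corollary is bookkeeping, matching each distance bound of that theorem to the correct smoothing neighbourhood. The only point meriting a moment's care is the endpoint $\ve = 0$, where $\log\tfrac{1}{1-\ve}=0$ and the statement reduces to the assertion that the smooth max-relative entropy with zero smoothing coincides with $D_{\max} = \wt{D}^0_{\max}$, which holds because every $0$-smoothing ball equals $\{\rho\}$ and $\wt{D}^0_{\max}(\rho\|\sigma) = D_{\max}(\rho\|\sigma)$ by~\eqref{eq:Dtilde_def}.
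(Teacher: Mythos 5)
Your proposal is correct and follows exactly the paper's own argument: fix a feasible pair $(\lambda,Q)$ for $\wt{D}^\ve_{\max}$, apply Theorem~\ref{tightened_DR_lemma} with $A=\lambda\sigma$, and match each of the four closeness estimates to the corresponding smoothing ball before taking the infimum over $\lambda$. The paper states this in two lines; your version merely spells out the bookkeeping (including the harmless $\ve=0$ endpoint), and every step checks out.
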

\end{boxed}

\begin{proof}
Recall that any feasible solution for $\wt{D}^{\ve}_{\max}(\rho\|\sigma)$ corresponds to an operator inequality $\rho \leq \lambda \sigma + Q$ for some $Q \geq 0$ with $\Tr Q \leq \ve$. The result is then a direct consequence of 
Theorem~\ref{tightened_DR_lemma} with the choice $A = \lambda \sigma$, as the result gives us precisely either normalised or subnormalised states that are feasible solutions for the different variants of $D^\ve_{\max}(\rho\|\sigma)$.
\end{proof}

To relate the modified max-relative entropy and the standard smoothed variants, we now need inequalities in the opposite direction. Indeed, one such relation follows immediately from the variational form of the trace distance in Eq.~\eqref{eq:trace_variational}. We state this here for completeness.

\begin{boxed}
\begin{lemma}\label{lem:dtilde_upper_bound_with_dmax}
For all $\ve \in (0,1)$, it holds that
\begin{equation}\begin{aligned}\label{eq:dtilde_upper_bound_with_dmax}
    \wt{D}^\ve_{\max} (\rho \| \sigma) &\leq \Dmax{T}{sub} (\rho \| \sigma) \leq \Dmax{P}{sub} (\rho \| \sigma),
\end{aligned}\end{equation}
and analogously for the quantities $\Dmax{T}{n}$ and $\Dmax{P}{n}$ with normalised smoothing.
\end{lemma}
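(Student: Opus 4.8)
The plan is to prove the chain of inequalities by two short arguments, using only the definitions and tools already collected in Section~\ref{sec:prelim}.

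For the first inequality $\wt{D}^\ve_{\max}(\rho\|\sigma) \leq \Dmax{T}{sub}(\rho\|\sigma)$, I would take an arbitrary $\rho' \in \B^\ve_{\sTsub}(\rho)$ with $\rho' \leq \lambda\sigma$, so that $D_{\max}(\rho'\|\sigma) \leq \log\lambda$. Applying the dual (minimisation) form of the generalised trace distance in Eq.~\eqref{eq:trace_variational}, the constraint $\norm{\rho-\rho'}{+}\leq\ve$ produces an operator $X \geq 0$ with $\Tr X \leq \ve$ and $\rho - \rho' \leq X$, whence $\rho \leq \rho' + X \leq \lambda\sigma + X$. This is precisely a feasible point for the optimisation defining $\wt{D}^\ve_{\max}(\rho\|\sigma)$ in Eq.~\eqref{eq:Dtilde_def} (with witness $Q = X$ and the same $\lambda$), so $\wt{D}^\ve_{\max}(\rho\|\sigma) \leq \log\lambda$; taking the infimum over feasible $\rho'$ gives the bound. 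The second inequality $\Dmax{T}{sub}(\rho\|\sigma) \leq \Dmax{P}{sub}(\rho\|\sigma)$ is just the subnormalised analogue of the right-hand inequality in Eq.~\eqref{standard_vs_infidelity}: the left Fuchs--van de Graaf inequality in Eq.~\eqref{eq:fvdg}, $\norm{\rho-\rho'}{+} \leq \sqrt{1-F(\rho,\rho')}$, yields the inclusion $\B^\ve_{\sPsub}(\rho) \subseteq \B^\ve_{\sTsub}(\rho)$, and the infimum defining $D_{\max}$ over the larger ball can only be smaller.

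The normalised statements follow in the same spirit. Since a normalised $\rho'$ with $\frac12\norm{\rho-\rho'}{1} = \norm{\rho-\rho'}{+} \leq \ve$ is in particular a feasible point of $\B^\ve_{\sTsub}(\rho)$, we have $\B^\ve_{\sTnorm}(\rho) \subseteq \B^\ve_{\sTsub}(\rho)$, hence $\wt{D}^\ve_{\max} \leq \Dmax{T}{sub} \leq \Dmax{T}{n}$ using the first step; and $\B^\ve_{\sPnorm}(\rho) \subseteq \B^\ve_{\sTnorm}(\rho)$, again by Eq.~\eqref{eq:fvdg}, gives $\Dmax{T}{n} \leq \Dmax{P}{n}$. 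There is no real obstacle to overcome here --- the statement just unwinds the definitions --- and the only subtlety is to invoke the \emph{dual} (minimisation) form of $\norm{\cdot}{+}$, rather than the primal one, when extracting the remainder operator that plays the role of the witness $Q$, and to keep careful track of which smoothing ball is contained in which.
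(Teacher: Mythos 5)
Your proposal is correct and follows essentially the same route as the paper: the paper also passes from a feasible $\rho'\leq\lambda\sigma$ to the feasible point $\rho\leq\lambda\sigma+(\rho-\rho')_+$ of $\wt{D}^\ve_{\max}$ (your $X$ from the dual form of $\norm{\cdot}{+}$ is exactly this positive part at the optimum) and then invokes the Fuchs--van de Graaf inequality for the purified-distance comparison. Your explicit treatment of the normalised case via the ball inclusions $\B^\ve_{\sPnorm}(\rho)\subseteq\B^\ve_{\sTnorm}(\rho)\subseteq\B^\ve_{\sTsub}(\rho)$ just spells out what the paper dismisses as ``analogously''.
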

\end{boxed}
\begin{proof}
Given any $\rho' \leq \lambda \sigma$, we have that $\rho \leq \lambda \sigma + (\rho - \rho')_+$, and by definition $\Tr (\rho - \rho')_+ = \norm{\rho-\rho'}{+}$. 
Thus any feasible solution for $\Dmax{T}{sub}$ yields a feasible solution for $\wt{D}^\ve_{\max}$ with the same feasible optimal value. 
Together with one of the Fuchs--van de Graaf inequalities~\eqref{eq:fvdg}, this gives the stated bounds.
\end{proof}

{%
This bound is tight for the trace distance, as can be seen from the classical case in Lemma~\ref{lem:classical_dmax_equivalence}. However, it can already be deduced from prior works such as~\cite[Proposition~13]{tomamichel_2013} that for the purified distance, tighter bounds can be obtained --- in particular, ones that feature $\sqrt\ve$ rather than $\ve$ in the upper bound. Here we improve on such bounds.

\begin{boxed}
\begin{lemma}\label{lem:purified_Dmax_lowerbound}
{%
Let $\rho,\sigma$ be two states. For all $\ve\in (0,1)$ and $\delta \in (0, 1-\ve)$, we have
\bb
\widetilde{D}_{\max}^{\ve + \delta} (\rho\|\sigma) \leq \Dmax{P}{s}[\sqrt\ve](\rho\|\sigma) + \log \frac{(\ve+\delta)(1-\ve-\delta)}{\delta}\, .
\label{lower_bound_Dmax}
\ee
As a consequence, for any $c \in (1, \frac{1}{\ve})$ it also holds that
\bb
\widetilde{D}_{\max}^{c \ve} (\rho\|\sigma) \leq \Dmax{P}{s}[\sqrt\ve](\rho\|\sigma) + \log \frac{c}{c-1}\, .
\label{eq:Dmax_lower_bound_multiplicative}
\ee
}%
\end{lemma}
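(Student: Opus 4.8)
The plan is to prove the inequality~\eqref{lower_bound_Dmax} directly from the definition of the modified max-relative entropy in its dual form, namely by realising that a feasible state $\rho'$ for the purified-distance smoothing $\Dmax{P}{s}[\sqrt\ve]$ can be used to build a good ``remainder'' decomposition $\rho \leq \lambda\sigma + Q$ witnessing the desired bound on $\widetilde{D}^{\ve+\delta}_{\max}$. First I would take an optimal (or near-optimal) subnormalised $\rho'$ for $\Dmax{P}{s}[\sqrt\ve](\rho\|\sigma)$, so that $\rho' \leq \lambda\sigma$ with $\log\lambda = \Dmax{P}{s}[\sqrt\ve](\rho\|\sigma)$ and $F(\rho,\rho') \geq 1-\ve$ (equivalently $P(\rho,\rho')^2 \leq \ve$). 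The operator inequality $\rho' \leq \lambda\sigma$ gives $\rho \leq \lambda\sigma + (\rho-\rho')_+ = \lambda\sigma + Q$ with $Q \coloneqq (\rho-\rho')_+ \geq 0$. By the definition~\eqref{eq:Dtilde_def}, this shows $\widetilde{D}^{\Tr Q}_{\max}(\rho\|\sigma) \leq \log\lambda$, but $\Tr Q = \|\rho-\rho'\|_+$, which is controlled by the trace distance and not directly by $\delta$ — so a naive argument only gives a bound with a square-root-type smoothing parameter, and we must instead be cleverer about the rescaling.

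The key step is to rescale $\lambda$ upward by a controlled factor so as to shrink the required trace of the remainder. Concretely, for $t > 1$ I would use $\rho \leq t\lambda\sigma + Q_t$ with $Q_t \coloneqq (\rho - t\rho')_+$ and then bound $\Tr Q_t = \|\rho - t\rho'\|_+$. Writing $\rho - t\rho' = (\rho - \rho') - (t-1)\rho'$ and using $\Tr(X-Y)_+ \leq \Tr X_+$ for $Y \geq 0$ gives $\Tr Q_t \leq \Tr(\rho-\rho')_+ = \|\rho-\rho'\|_+$ — but this is still not enough, so the right move is to instead bound it using fidelity directly. I would use the bound relating $\Tr(\rho - t\rho')_+$ to fidelity: since $F(\rho,\rho') \geq 1-\ve$, a calculation (e.g.\ via the variational form $\Tr(\rho-t\rho')_+ = \max_{0\leq M\leq\id}\Tr M(\rho - t\rho')$ and Cauchy–Schwarz/the fidelity inequality $\Tr M\rho' \geq$ something in terms of $\Tr M\rho$ and $F$) should yield $\Tr(\rho - t\rho')_+ \leq 1 - t(1-\ve) + $ lower-order, or more cleanly, that choosing $t = 1/(1-\ve-\delta)$ (say) makes $\Tr Q_t \leq \ve + \delta$. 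Matching the logarithmic error term $\log\frac{t(\ve+\delta)(1-\ve-\delta)}{\delta}$... I would reverse-engineer the exact choice of $t$ so that the resulting exponent is $\log\lambda + \log\big(t \cdot (\text{trace bound factor})\big)$ and equals the claimed $\Dmax{P}{s}[\sqrt\ve] + \log\frac{(\ve+\delta)(1-\ve-\delta)}{\delta}$.

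The main obstacle I anticipate is the precise estimate of $\Tr(\rho - t\rho')_+$ in terms of the fidelity $F(\rho,\rho') \geq 1-\ve$: the clean relation $\|\rho-\rho'\|_+ \leq P(\rho,\rho')$ handles the $t=1$ case, but here we need the sharper statement that, after the rescaling by $t$, the positive-part trace is bounded by $\ve + \delta$ with exactly the right dependence to produce the stated logarithmic factor. This will likely use the dual (minimisation) form of $\|\cdot\|_+$ in~\eqref{eq:trace_variational}: we want $X \geq 0$ with $\rho - t\rho' \leq X$ and $\Tr X$ small; a natural ansatz is $X = (\rho - t\rho') + $ (a correction supported where $\rho'$ dominates), and optimising this correction against the fidelity constraint $F(\rho,\rho')\geq 1-\ve$ is where the real work lies. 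Once the trace bound is in hand, the multiplicative corollary~\eqref{eq:Dmax_lower_bound_multiplicative} follows by setting $\delta = (c-1)\ve$ — then $\ve + \delta = c\ve$, $1 - \ve - \delta = 1 - c\ve > 0$ by the assumption $c < 1/\ve$, and $\frac{(\ve+\delta)(1-\ve-\delta)}{\delta} = \frac{c\ve(1-c\ve)}{(c-1)\ve} = \frac{c(1-c\ve)}{c-1} \leq \frac{c}{c-1}$ since $1 - c\ve \leq 1$, giving exactly~\eqref{eq:Dmax_lower_bound_multiplicative}.
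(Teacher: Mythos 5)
Your overall strategy is the right one---take a near-optimal subnormalised $\rho'$ with $\rho'\leq\lambda\sigma$ and $F(\rho,\rho')\geq 1-\ve$, rescale it by a factor $k>1$ so that $\rho\leq k\lambda\sigma + (\rho-k\rho')_+$, and tune $k$ so that $\Tr(\rho-k\rho')_+\leq\ve+\delta$---and your derivation of the multiplicative corollary from the main inequality (setting $\delta=(c-1)\ve$) is correct. But the proposal has a genuine gap exactly where you say ``this is where the real work lies'': you never actually prove the estimate $\Tr(\rho-k\rho')_+\leq\frac12\big(1-k+\sqrt{(1-k)^2+4k\ve}\big)$, and the route you gesture at (the variational form of $\Tr(\cdot)_+$ plus Cauchy--Schwarz applied directly to the mixed states $\rho,\rho'$) does not obviously produce it. The fidelity constraint $F(\rho,\rho')\geq 1-\ve$ does not translate into a pointwise bound on $\Tr M(\rho-k\rho')$ for a single test $M$ in any straightforward way, so the ``ansatz plus correction'' you sketch is not a proof.

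The missing idea in the paper's argument is to \emph{lift to purifications}: by Uhlmann's theorem one chooses purifications $\ket{\psi}$ of $\rho$ and $\ket{\psi'}$ of $\rho'$ with $\Tr\psi\psi'=F(\rho,\rho')\geq 1-\ve$. The operator $\psi-k\psi'$ is rank two with one eigenvalue of each sign, so its trace norm is computed exactly via $\|X\|_1=\sqrt{2\Tr(X^2)-(\Tr X)^2}$, giving $\Tr(\psi-k\psi')_+\leq\frac12\big(1-k+\sqrt{(1-k)^2+4k\ve}\big)$ in closed form. Setting this equal to $\ve+\delta$ yields precisely $k=\frac{(\ve+\delta)(1-\ve-\delta)}{\delta}$ (your guessed value $k=1/(1-\ve-\delta)$ is not the right one). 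One then descends back to the mixed states by the \emph{data processing inequality} for $\wt{D}^{\ve+\delta}_{\max}$ (equivalently, for the hockey-stick divergence) under the partial trace: $\wt{D}^{\ve+\delta}_{\max}(\rho\|\rho')\leq\wt{D}^{\ve+\delta}_{\max}(\psi\|\psi')\leq\log k$, which supplies the operator $Q\geq 0$ with $\Tr Q\leq\ve+\delta$ and $\rho\leq k\rho'+Q\leq k\lambda\sigma+Q$. Without the purification-plus-DPI step, your plan does not close.
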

\end{boxed}
\begin{proof}
Let $\rho'$ be a smoothing of $\rho$ with the property that
\bb
P(\rho,\rho') \leq \sqrt{\e}\, ,\qquad \rho'\leq \lambda \sigma\, ,\qquad \log \lambda = D_{\max}^{\sqrt\e,\,P}(\rho\|\sigma)\, .
\ee
By Uhlmann's theorem, this means that there are two purifications $\ket{\psi}$ of $\rho$ and $\ket{\psi'}$ of $\rho'$ (not necessarily normalised) such that
\bb
\Tr \psi \psi' = F(\rho,\rho') \geq 1 - \ve.
\ee
We would now like to estimate $\wt{D}^{\ve+\delta}_{\max}(\psi\|\psi')$. To this end, consider for some $k > 0$ the operator $\psi - k \psi'$. We will use the fact that any rank-two Hermitian operator $X$ satisfies
\begin{equation}\begin{aligned}\label{eq:rank_two}
    \norm{X}{1} = \sqrt{\Tr(X^2) + 2 \left|\det(X)\right|} = \sqrt{\Tr(X^2) + \left|\Tr(X^2) - \Tr(X)^2\right| }.
\end{aligned}\end{equation}
In our case, $X = \psi - k \psi'$ has one positive and one negative eigenvalue, which implies that  $\Tr(X^2) - \Tr(X)^2 \geq 0$. Thus
\begin{equation}\begin{aligned}\label{eq:pospart_pure_states}
    \norm{\psi - k \psi'}{1} &= \sqrt{2 \Tr((\psi-k\psi')^2) - \big(\!\Tr(\psi - k\psi')\big)^2}\\
    &= \sqrt{ \left(\Tr \psi + k \Tr \psi'\right)^2 - 4 k \Tr \psi \psi'}\\
    &\leq \sqrt{ \left(1 + k \right)^2 - 4 k (1-\ve)}\\
    &=\sqrt{ \left(1 - k\right)^2 + 4 k \ve}.
\end{aligned}\end{equation}
This gives
\begin{equation}\begin{aligned}
    \Tr(\psi - k \psi')_+ &= \frac12 \Tr (\psi - k \psi') + \frac12 \norm{\psi - k \psi'}{1}\\
    &\leq \frac12 \left( 1 - k + \sqrt{ (1-k)^2 + 4 k \ve } \right).
\end{aligned}\end{equation}
Setting the rightmost side to be equal to $\ve + \delta$ and solving for $k$, we obtain
\bb
k = \frac{(\ve+\delta)(1-\ve-\delta)}{\delta}.
\ee
By the data processing inequality, we then infer that
\bb\label{eq:Dtilde_dataproc}
\widetilde{D}_{\max}^{\ve+\delta}(\rho \| \rho') \leq \widetilde{D}_{\max}^{\ve+\delta}(\psi \| \psi') \leq \log \frac{(\ve+\delta)(1-\ve-\delta)}{\delta},
\ee
which implies that there exists an operator $Q \geq 0$ with $\Tr Q \leq \ve+\delta$ such that $\rho \leq \frac{(\ve+\delta)(1-\ve-\delta)}{\delta} \rho' + Q$.
Combined with the assumption that $\rho' \leq \lambda \sigma$ this gives
\bb
\widetilde{D}_{\max}^{\ve+\delta}(\rho \| \sigma) \leq \log \lambda \frac{(\ve+\delta)(1-\ve-\delta)}{\delta}
\ee
as claimed.

The second part of the Lemma follows by fixing $\delta = (c-1)\e$ and writing
\bb
\widetilde{D}_{\max}^{c\e}(\rho\|\sigma) \leq D_{\max}^{\sqrt\e,\,P,\,\leq}(\rho\|\sigma) + \log \frac{c\e (1-c\e)}{(c-1)\e} \leq D_{\max}^{\sqrt\e,\,P,\,\leq}(\rho\|\sigma) + \log \frac{c}{c-1}\, .
\ee
This concludes the proof.
\end{proof}

We observe that the approach of the proof readily gives an exact expression for the value of $\wt{D}^\ve_{\max}$ between any two pure states; this can be used to obtain an exact expression also for the pure-state hypothesis testing relative entropy $D^\ve_H$. Such an expression can be found for instance in~\cite[Ch.~IV, Eq.~(2.33)]{HELSTROM}, although we do not believe this to be well known in the literature. We derive and state the result here as it will be useful in establishing the tightness of our bounds.

\begin{boxed}
\begin{lemma}\label{lem:pure_state_formulas}
{%
Consider any two pure states $\psi_1 = \proj{\psi_1}$ and $\psi_2 = \proj{\psi_2}$, and let $f = \Tr \psi_1 \psi_2 =  \left|\braket{\psi_1|\psi_2}\right|^2$. Then
\begin{align}
    \wt{D}^\ve_{\max}(\psi_1\|\psi_2) &= \begin{dcases} %
        \log \frac{\ve(1-\ve)}{f - (1-\ve)} \quad \, & \text{if } \ve > 1 - f \label{eq:purestate_Dtilde}\\ 
        \infty & \text{otherwise}, 
    \end{dcases}\\
    D^\ve_{H}(\psi_1\|\psi_2) 
    &= \begin{dcases} %
        \log \frac{F_2\big(1- \ve,\, f\big)}{\big(f - \ve\big)^2} \quad\, & \text{if } \ve < f \\
        \infty & \text{otherwise}
    \end{dcases} \label{eq:purestate_DH}\\
    &= \begin{cases} %
      - \log \!\left[1 - F_2\!\left(\ve,\, f\right)\right] \quad\, & \text{if } \ve < f \\
        \infty & \text{otherwise},
    \end{cases} \label{eq:purestate_DH2}
\end{align}
where $F_2:[0,1]\times [0,1] \to [0,1]$ denotes the binary fidelity (Bhattacharyya coefficient), defined as
\begin{equation}\begin{aligned}\label{eq:bhattacharyya}
    F_2(p,q) \coloneqq \left(\sqrt{pq} + \sqrt{(1-p)(1-q)}\right)^2.
\end{aligned}\end{equation}
}%
\end{lemma}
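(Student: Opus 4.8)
The plan is to obtain $\wt{D}^\ve_{\max}(\psi_1\|\psi_2)$ by a direct low-rank computation, and then to read off $D^\ve_H(\psi_1\|\psi_2)$ from it through the exact equivalence of Theorem~\ref{thm:equivalence_DH_Dtilde} (an independent route via the Neyman--Pearson lemma is also available, which I sketch as an alternative).

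\textbf{Step 1: the modified max-relative entropy.} Since $\ket{\psi_1},\ket{\psi_2}$ span a subspace $V$ of dimension at most two, for each $\lambda\geq 0$ the Hermitian operator $X_\lambda\coloneqq\psi_1-\lambda\psi_2$ is supported on $V$ and has rank at most two, so $\Tr(X_\lambda)_+$ is simply its larger eigenvalue. Choosing coordinates $\ket{\psi_1}=\ket 0$, $\ket{\psi_2}=\sqrt f\,\ket 0+\sqrt{1-f}\,\ket 1$, one computes $\Tr X_\lambda=1-\lambda$ and $\det(X_\lambda|_V)=-\lambda(1-f)$, whence
\[
\Tr(X_\lambda)_+ \;=\; \tfrac12\Big(1-\lambda+\sqrt{(1-\lambda)^2+4\lambda(1-f)}\Big),
\]
which may equivalently be derived from the rank-two norm identity~\eqref{eq:rank_two}, exactly as in the proof of Lemma~\ref{lem:purified_Dmax_lowerbound}. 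As a function of $\lambda$ this is continuous, strictly decreasing on the region where it is positive, equals $1$ at $\lambda=0$, and tends to $1-f$ as $\lambda\to\infty$; hence $\Tr(X_\lambda)_+\leq\ve$ is feasible for some $\lambda$ if and only if $\ve>1-f$, and in that case the infimal feasible $\lambda$ is found by setting $\Tr(X_\lambda)_+=\ve$ and solving the resulting quadratic, giving $\lambda=\ve(1-\ve)/\big(f-(1-\ve)\big)$. Taking $\log$ yields~\eqref{eq:purestate_Dtilde}.

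\textbf{Step 2: the hypothesis testing relative entropy.} Applying Theorem~\ref{thm:equivalence_DH_Dtilde} with $\ve$ replaced by $1-\ve$ gives $D^\ve_H(\psi_1\|\psi_2)=\inf_{\delta\in[0,1-\ve)}\big[\wt{D}^\delta_{\max}(\psi_1\|\psi_2)-\log(1-\ve-\delta)\big]$. By Step 1 the bracket is finite exactly when $\delta>1-f$, so the infimum runs over $\delta\in(1-f,1-\ve)$; this interval is empty precisely when $\ve\geq f$, in which case $D^\ve_H=+\infty$, and otherwise the objective equals $\log g(\delta)$ with $g(\delta)=\frac{\delta(1-\delta)}{\big(\delta-(1-f)\big)(1-\ve-\delta)}$. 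Minimising $\log g$ reduces, via $(\log g)'=0$, to the equation $\big(\log[\delta(1-\delta)]\big)'=\big(\log[(\delta-(1-f))(1-\ve-\delta)]\big)'$, which under the substitution $t\coloneqq 1-\delta\in(\ve,f)$ becomes a quadratic in $t$ whose relevant root simplifies, after rationalising, to $t_\star=\frac{\sqrt{\ve f}}{\sqrt{\ve f}+\sqrt{(1-\ve)(1-f)}}=\frac{\sqrt{\ve f}}{\sqrt{F_2(\ve,f)}}$, with $1-t_\star=\frac{\sqrt{(1-\ve)(1-f)}}{\sqrt{F_2(\ve,f)}}$. Substituting $\delta_\star=1-t_\star$ back into $g$, the factors of $F_2(\ve,f)$ cancel and one is left with $g(\delta_\star)=\frac{1}{\big(\sqrt{f(1-\ve)}-\sqrt{\ve(1-f)}\big)^2}=\frac{1}{1-F_2(\ve,f)}$, using the identity $\big(\sqrt{f(1-\ve)}-\sqrt{\ve(1-f)}\big)^2=1-F_2(\ve,f)$; this is~\eqref{eq:purestate_DH2}. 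Alternatively one computes $D^\ve_H$ directly: restricting a test $0\leq M\leq\id$ to $V$ and invoking the Neyman--Pearson lemma, the optimal test may be taken to be a rank-one projection in $V$, and parametrising it by $c=\Tr M\psi_1\in[1-\ve,1]$ gives $\Tr M\psi_2=\big(\sqrt{cf}-\sqrt{(1-c)(1-f)}\big)^2$, whose minimum over the feasible range is attained at $c=1-\ve$ when $\ve<f$ (and equals $0$, i.e.\ $D^\ve_H=\infty$, when $\ve\geq f$).

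\textbf{Step 3: reconciling the two forms, and the main obstacle.} Multiplying $\big(\sqrt{f(1-\ve)}-\sqrt{\ve(1-f)}\big)^2$ by $\big(\sqrt{f(1-\ve)}+\sqrt{\ve(1-f)}\big)^2=F_2(1-\ve,f)$ gives $\big(f(1-\ve)-\ve(1-f)\big)^2=(f-\ve)^2$, so $\big(1-F_2(\ve,f)\big)\,F_2(1-\ve,f)=(f-\ve)^2$; this shows that~\eqref{eq:purestate_DH2} and~\eqref{eq:purestate_DH} agree, and also that $1-F_2(\ve,f)>0$ iff $\ve\neq f$, consistent with the finiteness conditions. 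The main obstacle is purely one of algebraic bookkeeping in Step 2: the critical point of $\log g$ comes out with nested radicals, and the non-obvious part is recognising that the change of variables $t=1-\delta$ and rationalising the critical root make the binary fidelity $F_2$ appear, after which the whole expression collapses; verifying the identity $\big(\sqrt{f(1-\ve)}-\sqrt{\ve(1-f)}\big)^2=1-F_2(\ve,f)$ is the crux.
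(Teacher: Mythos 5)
Your proposal is correct. Step~1 is essentially identical to the paper's treatment of $\wt{D}^\ve_{\max}$: both reduce to the rank-two identity for $\Tr(\psi_1-\lambda\psi_2)_+$ (your $(1-\lambda)^2+4\lambda(1-f)$ is the paper's $(1+\lambda)^2-4\lambda f$) and solve $\Tr(X_\lambda)_+=\ve$ for $\lambda$. Where you genuinely diverge is in Step~2. The paper proves \eqref{eq:purestate_DH} by a two-sided sandwich: it exhibits an explicit optimal test $M=\proj{\phi}$ with $\ket{\phi}=\sqrt{1-\ve}\ket{\psi_1}-\sqrt{\ve}\ket{\psi_1^\perp}$ to lower-bound $D^\ve_H$, and then invokes the upper bound of Lemma~\ref{lem:DH_Dtilde} with a cleverly guessed $\mu$ to show the two match. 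You instead use the \emph{equality} in Theorem~\ref{thm:equivalence_DH_Dtilde} and carry out the one-dimensional minimisation of $g(\delta)$ exactly; after the substitution $t=1-\delta$ your objective $\tfrac{(1-t)t}{(f-t)(t-\ve)}$ is literally the paper's $\tfrac{(1-\ve-\mu)(\ve+\mu)}{(f-\ve-\mu)\mu}$ under $\mu=t-\ve$, and your critical point $t_\star$ corresponds to the paper's choice of $\mu$. Your route is arguably cleaner — it needs no guessed test and no separate achievability argument — at the cost of having to justify that the stationary point is the global minimiser; this is immediate since $g\to+\infty$ at both endpoints of $(1-f,1-\ve)$ and the quadratic $(\ve+f-1)t^2-2\ve f t+\ve f=0$ has only one root in $(\ve,f)$ (the other root is $a/(a-b)$ with $a=\sqrt{\ve f}$, $b=\sqrt{(1-\ve)(1-f)}$, which lies outside $[0,1]$), a half-line you should add. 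The paper's route, by contrast, produces the optimal Helstrom test explicitly, which has independent interest. Your verification of the critical value, the identity $\big(1-F_2(\ve,f)\big)F_2(1-\ve,f)=(f-\ve)^2$ reconciling \eqref{eq:purestate_DH} with \eqref{eq:purestate_DH2}, and the degeneracy conditions all check out.
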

\end{boxed}

\begin{proof}
Using Eq.~\eqref{eq:pospart_pure_states}, where we now write an equality instead of an inequality, we have
\begin{equation}\begin{aligned}
    \Tr( \psi_1 - \lambda \psi_2 )_+ = \frac12 \left( 1 - \lambda + \sqrt{(1+\lambda)^2 - 4 \lambda f} \right).
\end{aligned}\end{equation}
For the right-hand side to equal $\ve$, and hence for $\lambda$ to be a feasible solution for $\wt{D}^\ve_{\max}$, we need
\begin{equation}\begin{aligned}\label{eq:Dtilde_purestate}
     \lambda = \frac{\ve(1-\ve)}{f - (1-\ve)},
 \end{aligned}\end{equation}
 which is precisely the claimed expression. If $f \leq 1- \ve$, this is impossible to satisfy for any $\lambda \in (0,\infty)$, leading to a diverging optimal value.

For $D^\ve_H$, consider first that if $\ve \geq f$, then $M = \id - \psi_2$ satisfies $\Tr M \psi_1 \geq 1 -\ve$ and hence is a feasible measurement operator for the definition of the hypothesis testing relative entropy~\eqref{eq:dh_def}, giving $D^\ve_H(\psi_1\|\psi_2) \geq - \log \Tr M \psi_2  = \infty$. Let us then assume that $\ve < f$. In this case, take $M = 
\proj{\phi}$, where 
\begin{equation}\begin{aligned}
    \ket{\phi}= \sqrt{1-\ve} \ket{\psi_1} - \sqrt{\ve} \ket{\psi_1^\perp}, \qquad \ket{\psi_1^\perp} = \frac{(\id - \psi_1) \ket{\psi_2}}{\left\| (\id - \psi_1) \ket{\psi_2} \right\|} = \frac{(\id - \psi_1) \ket{\psi_2}}{\sqrt{1-f}}. 
\end{aligned}\end{equation}
This operator clearly satisfies $\Tr M \psi_1 = 1-\ve$, meaning that $D^\ve_H(\psi_1\|\psi_2) \geq - \log \Tr M \psi_2$, where 
\begin{equation}\begin{aligned}\label{eq:DH_pure_lower}
    \Tr M \psi_2 &= \left( \sqrt{1-\ve} \sqrt{f} - \sqrt{1-f} \sqrt{\ve} \right)^2\\
    &= \frac{\big(f - \ve\big)^2}{F_2\big(1- \ve,\, f\big)},
\end{aligned}\end{equation}
using that $(a-b)^2 = \frac{(a^2-b^2)^2}{(a+b)^2}$. 
The expression appearing in~\eqref{eq:purestate_DH2} is a rewriting of the above using the convenient identity
\begin{equation}\begin{aligned}\label{eq:binary_fidelity_identity}
\left(\sqrt{p (1-q)} - \sqrt{(1-p) q} \right)^2 = 1 - \left(\sqrt{p q} +\sqrt{(1-p) (1-q)} \right)^2.
\end{aligned}\end{equation}
For the opposite inequality, consider that Lemma~\ref{lem:DH_Dtilde} gives
\begin{equation}\begin{aligned}\label{eq:DH_pure_upper}
    D^\ve_{H}(\psi_1 \| \psi_2) &\leq \wt{D}^{1-\ve-\mu}_{\max} (\psi_1 \| \psi_2)  + \log \frac{1}{\mu}\\
    &= \log \frac{(1 - \ve - \mu)(\ve+\mu)}{(f-\ve-\mu)\mu}
\end{aligned}\end{equation}
for any $\mu \in (0,f-\ve)$, where in the second line we used the pure-state formula for $\wt{D}^{\ve-\mu}_{\max}$ from Eq.~\eqref{eq:Dtilde_purestate}. Choosing
\begin{equation}\begin{aligned}
    \mu = \frac{f-\ve}{1+ \sqrt{\frac{f(1-f)}{\ve(1-\ve)}}} = \frac{\ve(1-\ve) - \sqrt{\ve f(1-\ve)(1-f)}}{f - (1-\ve)}, 
\end{aligned}\end{equation}
some cumbersome yet straightforward algebra 
gives exactly
\begin{equation}\begin{aligned}
   \frac{(f-\ve-\mu)\mu}{(1 - \ve - \mu)(\ve+\mu)} 
   &= \left( \sqrt{1-\ve} \sqrt{f} - \sqrt{1-f} \sqrt{\ve} \right)^{2},
\end{aligned}\end{equation}
meaning that the upper bound of~\eqref{eq:DH_pure_upper} matches the lower bound of~\eqref{eq:DH_pure_lower}.
\end{proof}
}%


\subsection[\texorpdfstring{Improved bounds between $D^\ve_{\max}$ and $D^\ve_H$}{Improved bounds between max-relative entropy and hypothesis testing relative entropy}]{Improved bounds between $\boldsymbol{D^\ve_{\max}}$ and $\boldsymbol{D^\ve_H}$}

Putting our findings together, we obtain upper and lower bounds that directly connect $D^\ve_{\max}$ with $D^{1-\ve}_H$, recovering the known `weak/strong converse duality' 
between the two quantities and improving on the previously known quantitative bounds between them.

\begin{boxed}[filled]
\begin{theorem}[{({Tight} weak/strong converse duality between $D^\ve_{\max}$ and $D^{1-\ve}_H$)}] \label{cor:wsc}
For all quantum states $\rho$ and $\sigma$, all $\ve \in (0,1)$, and all $\mu\in (0,\ve]$, it holds that
\begin{align}
\Dmax{T}{n}[\sqrt{\ve}](\rho\|\sigma)  + \log \frac1\ve \leq D^{1-\ve}_H(\rho\|\sigma) &\leq \Dmax{T}{n}[\ve-\mu](\rho\|\sigma) + \log \frac{1}\mu\, , \label{eq:wsc_trace}\\
\Dmax{P}{n}[\sqrt{\ve}](\rho\|\sigma)  + \log\frac1\ve \leq D^{1-\ve}_H(\rho\|\sigma) &\leq 
{ \Dmax{P}{n}[\sqrt{\ve-\mu}](\rho\|\sigma) + \log \frac{F_2(1-\ve, \ve-\mu)}{\mu^2}} \label{eq:wsc_purified}\\
&\leq  { \Dmax{P}{n}[\sqrt{\ve-\mu}](\rho\|\sigma) + \log \frac{1}{\mu^2} } \,, \label{eq:wsc_purified2}
\end{align}
{where we recall that $F_2(1-\ve, \ve-\mu) = \left( \sqrt{(1-\ve) (\ve-\mu)} + \sqrt{\ve(1-\ve+\mu)}\right)^2$.}

For classical systems or commuting quantum states, a stronger trace distance bound holds:
\begin{equation}\begin{aligned}\label{eq:wsc_classical}
    \Dmax{T}{n} (p \| q) + \log\frac{1}{\ve} \leq D^{1-\ve}_H(p \| q) \leq \Dmax{T}{n}[\ve-\mu](p\|q) + \log \frac{1}\mu.
\end{aligned}\end{equation}

{All of the primary bounds of this theorem are tight: namely, for all $0<\mu\leq \e<1$, we can find two states saturating any of the inequalities in~\eqref{eq:wsc_trace},~\eqref{eq:wsc_purified}, and~\eqref{eq:wsc_classical}. 
The bounds also hold when normalised smoothing is replaced with subnormalised smoothing. }
\end{theorem}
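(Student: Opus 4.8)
The plan is to assemble~\eqref{eq:wsc_trace}--\eqref{eq:wsc_classical} from the tools of Sections~\ref{sec:equiv}--\ref{sec:bounds_dmax} and then exhibit saturating pairs. The two lower bounds follow by chaining the first inequality of Lemma~\ref{lem:DH_Dtilde}, $\wt{D}^\ve_{\max}(\rho\|\sigma)+\log\tfrac{1}{\ve(1-\ve)}\leq D^{1-\ve}_H(\rho\|\sigma)$, with the relevant estimate of Corollary~\ref{cor:DR_dmax}: for the trace case $\Dmax{T}{n}[\sqrt{\ve}](\rho\|\sigma)-\log\tfrac{1}{1-\ve}\leq\wt{D}^\ve_{\max}(\rho\|\sigma)$, and for the purified case~\eqref{eq:theoneweuse}. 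The error terms telescope because $\log(1-\ve)+\log\tfrac{1}{\ve(1-\ve)}=\log\tfrac1\ve$. The trace upper bound is the second inequality of Lemma~\ref{lem:DH_Dtilde}, $D^{1-\ve}_H\leq\wt{D}^{\ve-\mu}_{\max}+\log\tfrac1\mu$, combined with $\wt{D}^{\ve-\mu}_{\max}\leq\Dmax{T}{sub}[\ve-\mu]\leq\Dmax{T}{n}[\ve-\mu]$ from Lemma~\ref{lem:dtilde_upper_bound_with_dmax}; the first of these two inequalities already delivers the stronger subnormalised statement.

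For the purified upper bound~\eqref{eq:wsc_purified}, the most delicate of the four, I would compose Lemma~\ref{lem:DH_Dtilde} with Lemma~\ref{lem:purified_Dmax_lowerbound}. Fix $\mu<\ve$ (the case $\mu=\ve$ reduces to $D^{1-\ve}_H\leq\wt{D}^0_{\max}+\log\tfrac1\ve=D_{\max}+\log\tfrac1\ve$ directly). For each $\nu\in(0,\mu)$ we have $D^{1-\ve}_H(\rho\|\sigma)\leq\wt{D}^{\ve-\nu}_{\max}(\rho\|\sigma)+\log\tfrac1\nu$ and, applying Lemma~\ref{lem:purified_Dmax_lowerbound} with $\ve-\mu$ and $\mu-\nu$ playing the roles of its parameters, $\wt{D}^{\ve-\nu}_{\max}(\rho\|\sigma)\leq\Dmax{P}{sub}[\sqrt{\ve-\mu}](\rho\|\sigma)+\log\tfrac{(\ve-\nu)(1-\ve+\nu)}{\mu-\nu}$. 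Taking the infimum over $\nu\in(0,\mu)$ and using that $\inf_{\nu\in(0,\mu)}\tfrac{(\ve-\nu)(1-\ve+\nu)}{\nu(\mu-\nu)}=\tfrac{F_2(1-\ve,\ve-\mu)}{\mu^2}$ — this is exactly the state-independent optimisation already performed in the proof of the pure-state formula of Lemma~\ref{lem:pure_state_formulas}, together with the identity $F_2(\ve,1-\ve+\mu)=F_2(1-\ve,\ve-\mu)$ — yields $D^{1-\ve}_H\leq\Dmax{P}{sub}[\sqrt{\ve-\mu}]+\log\tfrac{F_2(1-\ve,\ve-\mu)}{\mu^2}$, hence~\eqref{eq:wsc_purified} (since $\Dmax{P}{sub}\leq\Dmax{P}{n}$) and~\eqref{eq:wsc_purified2} (since $F_2\leq1$), and simultaneously the subnormalised refinement. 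The classical bound~\eqref{eq:wsc_classical} comes from Lemma~\ref{lem:classical_dmax_equivalence}: its upper bound is the classical instance of~\eqref{eq:wsc_trace}, while for the lower bound one combines $D^{1-\ve}_H\geq\wt{D}^\ve_{\max}+\log\tfrac1\ve$ (from Lemma~\ref{lem:DH_Dtilde}, dropping $\log\tfrac1{1-\ve}$) with the elementary $D^{1-\ve}_H(\rho\|\sigma)\geq\log\tfrac1\ve$ (test with $M=\ve\id$), giving $D^{1-\ve}_H\geq\max\{\wt{D}^\ve_{\max},0\}+\log\tfrac1\ve=\Dmax{T}{n}[\ve]+\log\tfrac1\ve$ by Lemmas~\ref{lem:classical_dmax_equivalence} and~\ref{lem:subnormalised_normalised_main}. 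Throughout, passing to subnormalised smoothing only weakens the lower bounds, and on the upper side is already covered by the strengthened inequalities above.

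Finally, tightness. All three lower bounds are saturated by $\sigma=\rho$ (resp.\ $q=p$), where both sides equal $\log\tfrac1\ve$. The purified upper bound is saturated, for every $0<\mu\leq\ve<1$, by a pair of pure states with $\left|\braket{\psi_1|\psi_2}\right|^2=1-\ve+\mu$: then $P(\psi_1,\psi_2)=\sqrt{\ve-\mu}$, so $\Dmax{P}{n}[\sqrt{\ve-\mu}]=0$, while $D^{1-\ve}_H=\log\tfrac{F_2(1-\ve,\ve-\mu)}{\mu^2}$ by Lemma~\ref{lem:pure_state_formulas}. The trace and classical upper bounds are saturated by the three-atom classical distributions $p=\big(\ve-\tfrac\mu2,\,\tfrac{1-\ve+\mu}2,\,\tfrac{1-\ve}2\big)$ and $q=\big(\tfrac\mu2,\,\tfrac{1-\ve+\mu}2,\,\tfrac{1+\ve-2\mu}2\big)$, which are built so that $\wt{D}^{\ve-\mu}_{\max}(p\|q)=0$ and the infimum in Theorem~\ref{thm:equivalence_DH_Dtilde} is attained at $\delta=\ve-\mu$, whence $D^{1-\ve}_H(p\|q)=\log\tfrac1\mu=\Dmax{T}{n}[\ve-\mu](p\|q)+\log\tfrac1\mu$. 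I expect the main obstacle to be precisely this last construction: two-atom distributions only saturate the upper bound at $\mu\in\{0,\ve\}$, since for them $\delta\mapsto\wt{D}^\delta_{\max}(p\|q)-\log(\ve-\delta)$ is monotone on the relevant interval, so one genuinely needs at least three atoms, arranged so the breakpoint of the piecewise-$\log$-affine function $\delta\mapsto\wt{D}^\delta_{\max}(p\|q)$ sits at $\ve-\mu$ while keeping $\wt{D}^{\ve-\mu}_{\max}(p\|q)\geq0$ (so that the normalised and subnormalised quantities coincide there).
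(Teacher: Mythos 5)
Your proof is correct and follows essentially the same route as the paper: the lower bounds come from chaining Lemma~\ref{lem:DH_Dtilde} with Corollary~\ref{cor:DR_dmax} (with the error terms telescoping to $\log\frac1\ve$), the trace upper bound from Lemma~\ref{lem:dtilde_upper_bound_with_dmax}, the purified upper bound from composing Lemma~\ref{lem:DH_Dtilde} with Lemma~\ref{lem:purified_Dmax_lowerbound} and optimising the free parameter (your $\nu$ is the paper's $\mu-\delta$, and the optimal value $F_2(1-\ve,\ve-\mu)/\mu^2$ is indeed the same state-independent optimisation as in Lemma~\ref{lem:pure_state_formulas}), and the classical case from Lemmas~\ref{lem:classical_dmax_equivalence} and~\ref{lem:subnormalised_normalised_main}. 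The saturating examples for the lower bounds ($\sigma=\rho$) and for the purified upper bound (pure states with overlap $1-\ve+\mu$) also coincide with the paper's.

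The one genuine error is your side claim that two-atom distributions saturate the trace/classical upper bound only at $\mu\in\{0,\ve\}$. This is false. For $p=(\ve,1-\ve)$ and $q=(\mu,1-\mu)$ with $0<\mu\leq\ve$ one has $\norm{p-q}{+}=\ve-\mu$, so $q$ lies in the smoothing ball and $\Dmax{T}{n}[\ve-\mu](p\|q)=0$, while the test $M=(1,0)$ gives $\Tr Mp=\ve$ and $\Tr Mq=\mu$, hence $D^{1-\ve}_H(p\|q)\geq\log\frac1\mu$; combined with the upper bound itself this forces equality for every $\mu\in(0,\ve]$. The monotonicity argument you invoke does not apply: for this pair one computes $\wt{D}^{\delta}_{\max}(p\|q)=\log\frac{\ve-\delta}{\mu}$ for $\delta\in[0,\ve-\mu]$, so $\delta\mapsto\wt{D}^{\delta}_{\max}(p\|q)-\log(\ve-\delta)$ is \emph{constant} equal to $\log\frac1\mu$ on that interval, and the infimum in Theorem~\ref{thm:equivalence_DH_Dtilde} is attained at $\delta=\ve-\mu$ (among other points). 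Your three-atom construction does also saturate the bound (I checked it: $\norm{p-q}{+}=\ve-\mu$ and the test $M=(1,\frac{\mu}{1-\ve+\mu},0)$ gives $\Tr Mp=\ve$, $\Tr Mq=\mu$), so the tightness claim is still fully established; the extra atoms are simply unnecessary.
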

\end{boxed}

\begin{proof}
The lower bounds on the hypothesis testing relative entropy in Eqs.~\eqref{eq:wsc_trace}--\eqref{eq:wsc_purified} follow directly by combining Lemma~\ref{lem:DH_Dtilde} and Corollary~\ref{cor:DR_dmax}. The upper bound in~\eqref{eq:wsc_trace} is from Lemma~\ref{lem:dtilde_upper_bound_with_dmax}.

{%
The upper bound in Eq.~\eqref{eq:wsc_purified} requires more elaboration. Combining Lemma~\ref{lem:DH_Dtilde} with Lemma~\ref{lem:purified_Dmax_lowerbound} gives, for any $\delta \in (0,\mu)$, that
\bb
D_H^{1-\e}(\rho\|\sigma) &\leq \widetilde{D}_{\max}^{\e+\delta-\mu}(\rho\|\sigma) + \log\frac{1}{\mu-\delta} \\
&\leq \Dmax{P}{s}[\sqrt{\e-\mu}](\rho\|\sigma) + \log\frac{(\e+\delta-\mu)(1-\e-\delta+\mu)}{\delta(\mu-\delta)}\, .
\ee
For the choice of $\delta = \mu \left( 1 + \sqrt{\frac{\ve(1-\ve)}{(\ve-\mu)(1-\ve+\mu)}} \right)^{-1}$, 
as in the proof of Lemma~\ref{lem:pure_state_formulas} we get exactly~\eqref{eq:wsc_purified}, which can be verified to be minimal.
}%

For the classical case, we can use Lemma~\ref{lem:subnormalised_normalised_main} as well as the equivalence between $\Dmax{T}{n}(p \| q)$ and $\wt{D}^\ve_{\max}(p\|q)$ in Lemma~\ref{lem:classical_dmax_equivalence} to~get
\begin{equation}\begin{aligned}
    \Dmax{T}{n}(p\|q)  +\frac1\ve = \max \left\{ \wt{D}_{\max}^\ve (p\|q) + \log \frac1\ve,\; \log \frac1\ve \right\} \leq D^{1-\ve}_H(p\|q),
\end{aligned}\end{equation}
where the inequality follows from Lemma~\ref{lem:DH_Dtilde} as well as the fact that $D^{1-\ve}_H(\rho\|\sigma) \geq \log \frac{1}{\ve}$ for all states, easily verified by choosing $M = \ve \id$ in the definition~\eqref{eq:dh_def}.
\end{proof}

{Almost all of the inequalities in Theorem~\ref{cor:wsc} improve over state-of-the-art bounds, often significantly so.} The first inequalities in Eqs.~\eqref{eq:wsc_trace} and~\eqref{eq:wsc_purified} are a major improvement over the best known bound of~\cite[Theorem~4]{anshu_2019}: the left-hand side of our bound is larger by an additive term of $\log\frac{1}{\ve(1-\ve)}$.
This result improves also on bounds stated in~\cite[Proposition~4.1]{dupuis_2012} and~\cite[Theorem~11]{datta_2013-1}, which had tighter error terms but worse smoothing terms than the bound of~\cite{anshu_2019}.
{The second inequality in~\eqref{eq:wsc_trace} recovers a result of~\cite[Theorem~11]{datta_2013-1}. Our upper bound for the purified distance in~\eqref{eq:wsc_purified} improves  previously known results, the strongest of which was the upper bound of~\cite[Theorem~4]{anshu_2019}.}

{Once again, the bounds are 
the strongest possible.} Even in the simplest case $\rho=\sigma$, previous results did not give tight bounds; in contrast, our bounds are, to the best of our knowledge, the first statement of the weak/strong converse duality between $D^\ve_H$ and $D^\ve_{\max}$ that gives tight error terms in this sense.
{This follows since $\Dmax{T}{n}[\sqrt\ve](\rho\|\rho) = \Dmax{P}{n}[\sqrt\ve](\rho\|\rho) = 0$ while $D^{1-\ve}_H(\rho\|\rho) = \log \frac1\ve$, which immediately yields a saturation of the lower bounds on $D^{1-\ve}_H$ in Eqs.~\eqref{eq:wsc_trace}--\eqref{eq:wsc_purified}.} 
The lower bounds also give a tight constraint on the asymptotic error exponent of~$\Dmax{P}{s}$~\cite{li_2023} and on the exponent of $\Dmax{T}{s}$ for classical systems, which we will discuss in more detail in the next section.

{%
The upper bounds 
on the hypothesis testing relative entropy in~\eqref{eq:wsc_trace} and~\eqref{eq:wsc_classical} are saturated, for example, by the classical distributions $p = (\ve, 1\!-\!\ve)$ and $q = (\mu, 1\!-\!\mu)$, since one easily verifies that $\|p - q\|_+ = \epsilon\!-\!\mu$ and so $\Dmax{T}{n}[\epsilon-\mu](p\|q) = 0$, while the test $M = (1,0)$ satisfies $\Tr M p = \epsilon$ and $\Tr M q = \mu$, meaning that $D^{1-\epsilon}_H(p\|q) \geq - \log \mu$. 
For the upper bound in~\eqref{eq:wsc_purified}, consider two pure states $\psi = \proj{\psi}$ and $\phi = \proj{\phi}$ such that $P(\psi,\phi) = \sqrt{\vphantom{k}\ve-\mu}$ and hence $\Dmax{P}{n}[\sqrt{\ve-\mu}] (\psi\|\phi) = 0$. By Lemma~\ref{lem:pure_state_formulas}, $D^{1-\ve}_H(\psi\|\phi) = - \log (1- F_2(\ve, \ve-\mu))$, which exactly matches the rightmost side of the bound in~\eqref{eq:wsc_purified}.

The upper bound in~\eqref{eq:wsc_classical} is also tight for any $p$, $q$, and $\ve \leq \norm{p-q}{+}$, in the following sense: 
by Theorem~\ref{thm:equivalence_DH_Dtilde} and Lemma~\ref{lem:subnormalised_normalised_main}, 
minimising the rightmost side of~\eqref{eq:wsc_trace} or~\eqref{eq:wsc_classical} over $\mu$ gives exact equality in the bound. This means that the upper bound can be saturated for all classical distributions with the optimal choice of~$\mu$.

We observe that the bounds 
in~\eqref{eq:wsc_trace} have a mismatch 
in the functional dependence of the smoothing parameter: the lower bound on $D^{1-\ve}_H$ features smooth max-relative entropy with smoothing of order $\sqrt{\ve}$ (effectively due to the use of the gentle measurement lemma in Theorem~\ref{tightened_DR_lemma}), while the smoothing parameter in the upper bounds is of order $\ve$. 
This contrasts with the bounds for purified distance in~\eqref{eq:wsc_purified}, which both share the same order of $\sqrt\ve$ --- this is a crucial property that allowed e.g.\ 
the computation of the second-order expansion of the max-relative entropy for the purified distance~\cite{tomamichel_2013} and the evaluation of the error and strong converse exponents for $\Dmax{P}{sub}$~\cite{li_2023,li_2024-1}.
It is clear that the trace distance smoothing exhibits a different behaviour than the purified distance, as indicated by the fact that the classical case of~\eqref{eq:wsc_classical} features matching scaling of order $\ve$ rather than $\sqrt\ve$. Indeed, because of this classical case, we know that the term $\ve$ in the upper bound cannot be improved to $\sqrt\ve$. However, one could still ask: is it possible that we could instead improve the lower bound for the trace distance, e.g.\ by establishing that $\Dmax{T}{n} + \log \frac1\ve \leq D^{1-\ve}_H$ holds for all quantum states? This is in fact impossible, as we now argue. Consider two pure states $\psi$ and $\phi$ with trace distance $\norm{\psi-\phi}{+} = \sqrt{1-\left|\braket{\psi|\phi}\right|^2} = \sqrt\ve$. From Lemma~\ref{lem:pure_state_formulas} we know that $D^{\delta}_{H}(\psi \| \phi)$ is infinite iff $\delta \geq 1-\ve$, while it is not difficult to verify that $\Dmax{T}{n}[\delta](\psi\|\phi)$ is infinite iff $\delta < \sqrt\ve$. Hence, any relation of the form
\begin{equation}\begin{aligned}
    \Dmax{T}{n}[\delta](\psi\|\phi) + g(\delta) \leq D^{f(\delta)}_H(\psi\|\phi)
\end{aligned}\end{equation}
must be such that $f(\delta) \geq 1-\ve$ for all $\delta < \sqrt\ve$. The best case scenario is therefore $f(\sqrt\ve) = 1-\ve$, which precisely corresponds to the bound in~\eqref{eq:wsc_trace}. The fact that the choice of $g(\sqrt\ve) = \log \frac{1}{\ve}$ is optimal in general can be verified by considering the trivial case $\psi = \phi$, where $\Dmax{T}{n}(\psi\|\psi) = 0$ but $D^{1-\ve}_H(\psi\|\psi) = \log\frac{1}{\ve}$. Our bounds are therefore tight also in this sense, and the smoothing of $\Dmax{T}{n}$ exhibits an inherent difference in the scaling of the error with respect to $D^{1-\ve}_H$ between pure states and classical states.
}%

We additionally note that, in light of Lemma~\ref{lem:subnormalised_normalised_main}, potentially tighter one-shot restrictions can be obtained by using the subnormalised smoothing variants of the max-relative entropy. However, the scaling and asymptotic behaviour of these bounds is essentially the same as the ones given in Theorem~\ref{cor:wsc}. We state the bounds here for completeness.
\begin{boxed}
\begin{corollary}
\label{cor:wsc_sb}
    For all quantum states $\rho$ and $\sigma$ and all $\ve \in (0,1)$, it holds that
\begin{align}
\Dmax{T}{sub}[\sqrt{\ve\left(1-\frac{3\ve}{4}\right)} + \frac\ve2](\rho\|\sigma) + \log\frac{1}{\ve(1-\ve)} \leq\; & D^{1-\ve}_H(\rho\|\sigma), \label{eq:wsc_trace_sub}\\
\Dmax{P}{sub}[\sqrt{\ve(2-\ve)}](\rho\|\sigma) + \log \frac{1}{\ve(1-\ve)} \leq\;& D^{1-\ve}_H(\rho\|\sigma).\label{eq:wsc_purified_sub}
\end{align}
For classical systems or commuting quantum states, we also have
\begin{equation}\begin{aligned}\label{eq:wsc_classical_sub}
    \Dmax{T}{sub} (p \| q) + \log\frac{1}{\ve(1-\ve)} \leq D^{1-\ve}_H(p \| q).
\end{aligned}\end{equation}
\end{corollary}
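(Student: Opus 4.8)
The plan is to derive all three inequalities of the corollary by directly composing two facts already established above: the lower bound $\wt{D}^\ve_{\max}(\rho\|\sigma) + \log\frac{1}{\ve(1-\ve)} \leq D^{1-\ve}_H(\rho\|\sigma)$ from Lemma~\ref{lem:DH_Dtilde}, together with the lower bounds on $\wt{D}^\ve_{\max}$ in terms of the subnormalised smooth max-relative entropies supplied by Corollary~\ref{cor:DR_dmax}. No new technique is needed; the entire content lies in the chain of implications, so this is a short deduction.

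Concretely, for the trace-distance bound~\eqref{eq:wsc_trace_sub} I would start from the second inequality of Corollary~\ref{cor:DR_dmax}, namely $\Dmax{T}{sub}[\sqrt{\ve\left(1-\frac{3\ve}{4}\right)} + \frac\ve2](\rho\|\sigma) \leq \wt{D}^\ve_{\max}(\rho\|\sigma)$, add $\log\frac{1}{\ve(1-\ve)}$ to both sides, and then invoke the first inequality of Lemma~\ref{lem:DH_Dtilde} to pass from $\wt{D}^\ve_{\max}$ to $D^{1-\ve}_H$. The purified-distance bound~\eqref{eq:wsc_purified_sub} follows along the identical route, using instead the fourth inequality of Corollary~\ref{cor:DR_dmax}, $\Dmax{P}{sub}[\sqrt{\ve(2-\ve)}](\rho\|\sigma) \leq \wt{D}^\ve_{\max}(\rho\|\sigma)$. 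For the classical (commuting) case~\eqref{eq:wsc_classical_sub}, I would replace the appeal to Corollary~\ref{cor:DR_dmax} by the exact identity $\wt{D}^\ve_{\max}(p\|q) = \Dmax{T}{sub}(p\|q)$ of Lemma~\ref{lem:classical_dmax_equivalence}, and again feed it into the first inequality of Lemma~\ref{lem:DH_Dtilde}, which immediately gives $\Dmax{T}{sub}(p\|q) + \log\frac{1}{\ve(1-\ve)} \leq D^{1-\ve}_H(p\|q)$.

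I do not expect any genuine obstacle. The one point worth a line of care is that the smoothing radii match exactly: the quantities on the left-hand sides of~\eqref{eq:wsc_trace_sub} and~\eqref{eq:wsc_purified_sub} are precisely the radii produced by the gentler measurement lemma (Lemma~\ref{tighter_gentle_measurement_lemma}) within the tightened Datta--Renner lemma (Theorem~\ref{tightened_DR_lemma}), so transplanting the estimates of Corollary~\ref{cor:DR_dmax} into the statement incurs no loss. If desired, one could append the remark that, by Lemma~\ref{lem:subnormalised_normalised_main}, these subnormalised bounds are never weaker than the normalised versions in Theorem~\ref{cor:wsc} whenever the relevant subnormalised divergence is positive, but that observation is a comment rather than part of the proof.
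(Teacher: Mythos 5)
Your proposal is correct and matches the paper's own (one-line) proof exactly: the paper likewise obtains these bounds by chaining the subnormalised inequalities of Corollary~\ref{cor:DR_dmax} (and, for the classical case, the identity of Lemma~\ref{lem:classical_dmax_equivalence}) with the lower bound of Lemma~\ref{lem:DH_Dtilde}. Nothing further is needed.
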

\end{boxed}
These bounds are a direct consequence of Lemma~\ref{lem:DH_Dtilde} and Corollary~\ref{cor:DR_dmax}, with the classical result using Lemma~\ref{lem:classical_dmax_equivalence}. They improve on previously known inequalities that used subnormalised smoothing, e.g.~\cite{dupuis_2012}.

{We remark that an expression of the same exact form as our upper bound on the rightmost side of~\eqref{eq:wsc_purified} previously appeared in~\cite[Lemma~III.8]{ramakrishnan_2023} as a tight upper bound on the smooth min-relative entropy (sandwiched R\'enyi divergence of order $1/2$). It is unclear to us if either of the bounds implies the other or if they are incomparable in general.}


\section{Consequences and other inequalities}

\subsection{Inequalities with R\'enyi relative entropies}\label{sec:bounds_renyi}

The Petz--R\'enyi relative entropies $D_\alpha$~\cite{petz_1986} and the sandwiched R\'enyi relative entropies $\wt{D}_\alpha$~\cite{muller-lennert_2013,wilde_2014} are defined, respectively, as
\begin{align}
D_\alpha (\rho \| \sigma) &\coloneqq \frac{1}{\alpha-1} \log \Tr \left( \rho^\alpha \sigma^{1-\alpha}\right), \label{Petz_Renyi} \\
\wt{D}_\alpha (\rho \| \sigma) &\coloneqq \frac{1}{\alpha-1} \log \Tr \left(\sigma^{\frac{1-\alpha}{2\alpha}} \rho \sigma^{\frac{1-\alpha}{2\alpha}}\right)^\alpha.
\label{sandwiched_Renyi}
\end{align}
{For $\alpha \in (0,1)$, it is sufficient that $\rho$ and $\sigma$ not be orthogonal to ensure that $D_\alpha(\rho\|\sigma)$ and $\wt{D}_\alpha(\rho\|\sigma)$ are finite; for $\alpha > 1$, the quantities are set to infinity whenever $\supp(\rho) \not\subseteq \supp(\sigma)$.}
Both of the relative entropies are additive under tensor products: $D_\alpha (\rho^{\otimes n} \| \sigma^{\otimes n}) = n D_\alpha(\rho\|\sigma)$ and $\wt{D}_\alpha (\rho^{\otimes n} \| \sigma^{\otimes n}) = n \wt{D}_\alpha(\rho\|\sigma)$.

We will also employ the measured variant of the quantities, given, in analogy with~\eqref{eq-17}, by
\bb
D_{\alpha,\,\MM}(\rho\|\sigma) \coloneqq \sup \lset D_{\alpha}\! \left( p_{\rho,M} \| p_{\sigma,M} \right) \bar M = (M_i)_{i=1}^n \in \MM,\; n \in \NN_+ \rset,
\label{measured_Renyi}
\ee
where $p_{\rho,M}(i) = \Tr M_i \rho$. Note that the two different quantum definitions~\eqref{Petz_Renyi} and~\eqref{sandwiched_Renyi} lead to the same notion of measured R\'enyi relative entropy~\eqref{measured_Renyi}, because they coincide for all classical (commuting) states. 

An important property of $\wt{D}_\alpha$ is that it is asymptotically attained by measurements (see~\cite[Corollary~III.8]{mosonyi_2015} and~\cite[Corollary~4]{hayashi_2016-1}):
\begin{equation}\begin{aligned}\label{eq:sandwich_attained}
    \wt{D}_\alpha (\rho \| \sigma) = \lim_{n\to\infty} \frac1n D_{\alpha,\MM}\left(\left.\rho^{\otimes n} \right\| \sigma^{\otimes n}\right).
\end{aligned}\end{equation}

Several bounds were given in the literature that connect smooth entropies with R\'enyi $\alpha$ divergences. Here we discuss how they can be improved using the relations established in this work.

First, we obtain upper bounds.
\begin{boxed}
\begin{corollary}\label{cor:var-mrel}
For all $\e\in (0,1)$ and all $\alpha > 1$, it holds that
\begin{align}
    \wt{D}^\ve_{\max}(\rho \| \sigma) + \log\frac{1}{1-\ve} &\leq D_{\alpha,\MM} (\rho \| \sigma) + \frac{1}{\alpha-1} \log\frac{1}{\ve} \leq \wt{D}_{\alpha} (\rho \| \sigma) + \frac{1}{\alpha-1} \log\frac{1}{\ve}.\label{eq:renyi_dtilde}
\end{align}
As a result,
\begin{equation}\begin{aligned}
\Dmax{T}{n}(\rho \| \sigma) \leq \Dmax{P}{n}(\rho \| \sigma) &\leq D_{\alpha,\MM} (\rho \| \sigma) + \frac{1}{\alpha-1} \log \frac{1}{\ve^2}\\ &\leq \wt{D}_{\alpha} (\rho \| \sigma) + \frac{1}{\alpha-1} \log \frac{1}{\ve^2} .\label{eq:renyi_dmax}
\end{aligned}\end{equation}
\end{corollary}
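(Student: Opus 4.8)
The plan is to derive the first inequality of \eqref{eq:renyi_dtilde} not by a direct Chernoff-type estimate on $\wt{D}^\ve_{\max}$ but by routing through the hypothesis testing relative entropy. The key auxiliary statement I would establish is the converse-type bound
\begin{equation}
D^{1-\ve}_H(\rho\|\sigma) \;\leq\; D_{\alpha,\MM}(\rho\|\sigma) + \frac{\alpha}{\alpha-1}\log\frac1\ve ,
\end{equation}
valid for all $\ve\in(0,1)$ and $\alpha>1$. Granting this, one combines it with the lower bound $\wt{D}^\ve_{\max}(\rho\|\sigma) + \log\frac{1}{\ve(1-\ve)} \leq D^{1-\ve}_H(\rho\|\sigma)$ from Lemma~\ref{lem:DH_Dtilde}, so that, using $\frac{\alpha}{\alpha-1}-1 = \frac{1}{\alpha-1}$,
\begin{equation}
\wt{D}^\ve_{\max}(\rho\|\sigma) \;\leq\; D_{\alpha,\MM}(\rho\|\sigma) + \frac{\alpha}{\alpha-1}\log\frac1\ve - \log\frac{1}{\ve(1-\ve)} \;=\; D_{\alpha,\MM}(\rho\|\sigma) + \frac{1}{\alpha-1}\log\frac1\ve - \log\frac{1}{1-\ve},
\end{equation}
which is exactly the first inequality of \eqref{eq:renyi_dtilde}. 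The second inequality there, $D_{\alpha,\MM}(\rho\|\sigma)\leq \wt{D}_\alpha(\rho\|\sigma)$, is just the data-processing inequality for $\wt{D}_\alpha$ applied to the measurement channels $X\mapsto \sum_i (\Tr M_i X)\proj{i}$ with $M\in\MM$, combined with the fact that the Petz and sandwiched Rényi divergences coincide on commuting states.

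To prove the auxiliary bound I would take an optimal test $M^*$ for $D^{1-\ve}_H(\rho\|\sigma)$, so that $0\leq M^*\leq\id$, $\Tr M^*\rho\geq\ve$, and $\Tr M^*\sigma = 2^{-D^{1-\ve}_H(\rho\|\sigma)}=:\beta$; the degenerate case $\beta=0$ (i.e.\ $D^{1-\ve}_H=\infty$) is trivial, since then $\Tr \Pi_\sigma^\perp\rho\geq\ve>0$ and already the two-outcome measurement $\{\Pi_\sigma,\Pi_\sigma^\perp\}$ forces $D_{\alpha,\MM}(\rho\|\sigma)=\infty$ for $\alpha>1$. Feeding the binary POVM $\{M^*,\id-M^*\}$ into the definition of $D_{\alpha,\MM}$, dropping the nonnegative second term of the binary Petz--Rényi sum (legitimate because $\alpha>1$), and using $\Tr M^*\rho\geq\ve$ together with $-\log\beta = D^{1-\ve}_H(\rho\|\sigma)$ gives
\begin{equation}
D_{\alpha,\MM}(\rho\|\sigma) \;\geq\; \frac{1}{\alpha-1}\log\!\left[(\Tr M^*\rho)^\alpha\,\beta^{\,1-\alpha}\right] \;=\; \frac{\alpha}{\alpha-1}\log\Tr M^*\rho - \log\beta \;\geq\; \frac{\alpha}{\alpha-1}\log\ve + D^{1-\ve}_H(\rho\|\sigma),
\end{equation}
which is a rearrangement of the claimed bound. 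This step is nothing more than data processing of $D_\alpha$ under a binary measurement channel plus the explicit value of the binary divergence.

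For the ``as a result'' statement \eqref{eq:renyi_dmax} I would chain three already-available inputs: \eqref{eq:theoneweuse} of Corollary~\ref{cor:DR_dmax}, invoked with smoothing parameter $\ve^2$, gives $\Dmax{P}{n}[\ve](\rho\|\sigma)\leq \wt{D}^{\ve^2}_{\max}(\rho\|\sigma)+\log\frac{1}{1-\ve^2}$; the first inequality of \eqref{eq:renyi_dtilde}, just proved and applied at smoothing parameter $\ve^2$, bounds the right-hand side by $D_{\alpha,\MM}(\rho\|\sigma)+\frac{1}{\alpha-1}\log\frac{1}{\ve^2}$; finally $\Dmax{T}{n}(\rho\|\sigma)\leq\Dmax{P}{n}(\rho\|\sigma)$ from \eqref{standard_vs_infidelity} on the left and $D_{\alpha,\MM}(\rho\|\sigma)\leq\wt{D}_\alpha(\rho\|\sigma)$ on the right produce \eqref{eq:renyi_dmax} verbatim. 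The only thing requiring care --- and the point I find conceptually notable --- is the bookkeeping of the logarithmic error terms: a naive Chernoff estimate would yield the first inequality of \eqref{eq:renyi_dtilde} only with $\frac{1}{\alpha-1}\log\frac1\ve$ in place of $\frac{1}{\alpha-1}\log\frac1\ve - \log\frac{1}{1-\ve}$, and the extra improvement materialises precisely because the $\log\frac{1}{\ve(1-\ve)}$ slack supplied by Lemma~\ref{lem:DH_Dtilde} cancels the surplus $\log\frac1\ve$ inside $\frac{\alpha}{\alpha-1}\log\frac1\ve$. For this cancellation to go through it is essential that the converse bound be stated against the \emph{measured} Rényi divergence $D_{\alpha,\MM}$, which is why the whole derivation works with the measurement-channel form of data processing rather than with data processing at the operator level.
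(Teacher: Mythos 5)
Your proposal is correct and follows essentially the same route as the paper: both chain the lower bound of Lemma~\ref{lem:DH_Dtilde} with the converse bound $D^{1-\ve}_H \leq D_{\alpha,\MM} + \frac{\alpha}{\alpha-1}\log\frac1\ve$, then invoke Eq.~\eqref{eq:theoneweuse} of Corollary~\ref{cor:DR_dmax} at smoothing parameter $\ve^2$ and data processing for $\wt{D}_\alpha$. The only difference is that you spell out the binary-measurement derivation of the converse bound (including the degenerate case $\beta=0$), which the paper simply cites as a standard argument.
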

\end{boxed}
The bound in~\eqref{eq:renyi_dmax} improves over previously known bounds in~\cite[Theorem~3]{anshu_2019} (see also~\cite[Proposition~6]{wang_2019}) and in~\cite[Proposition~6.22]{tomamichel_2016}, losing a superfluous additive factor of $\log\frac{1}{1-\ve^2}$ in the former and tightening the smoothing term in the latter. We remark here that 
some of the previous bounds were stated in terms of the looser sandwiched R\'enyi relative entropies, but it is clear from their proofs that they apply to $D_{\alpha,\MM}$ too.

\begin{proof}[Proof of Corollary~\ref{cor:var-mrel}]
By Lemma~\ref{lem:DH_Dtilde} we have that
\begin{equation}\begin{aligned}\label{eq:DH_upper_Dalpha}
    \wt{D}^\ve_{\max}(\rho \| \sigma) + \log \frac{1}{\ve(1-\ve)} \leq D^{1-\ve}_H (\rho\|\sigma)\leq D_{\alpha,\MM} (\rho \| \sigma) + \frac{\alpha}{\alpha-1} \log \frac{1}{\ve}
\end{aligned}\end{equation}
for all $\alpha > 1$, where the second inequality is a standard argument based on the data processing of the R\'enyi divergences (see e.g.~\cite[Lemma~IV.7]{mosonyi_2015}). 
Using the Datta--Renner lemma {(in particular, Eq.~\eqref{eq:theoneweuse} in Corollary~\ref{cor:DR_dmax}) then gives the bound for the smooth max-relative entropy, noting that in~\eqref{eq:renyi_dmax} we changed the smoothing parameter from $\ve$ to $\ve^2$. }
The fact that $D_{\alpha,\MM} (\rho \| \sigma) \leq \wt{D}_\alpha (\rho \| \sigma)$ is a consequence of the data processing inequality for the sandwiched R\'enyi divergence for all $\alpha>1$~\cite{frank_2013,beigi_2013}.
\end{proof}

To investigate the tightness of the bounds, {we turn to a large-deviation--style analysis in terms of error exponents. Here we will use this latter term in an abstract, rather than operational, sense: we do not study any particular information-theoretic task, but rather assume that the smoothing parameter $\ve$ of $\wt{D}^\ve_{\max}$ can be understood as the error of some relevant problem. We can then look at the `error exponent'}
of the quantity $\wt{D}^\ve_{\max}$, that is, 
the largest exponent $E$ such that $\wt{D}^{\exp\left({-nE + o(n)}\right)}_{\max}(\rho^{\otimes n} \| \sigma^{\otimes n}) = nR$ for some fixed rate $R > 0$. 

Plugging $\ve = \exp\left({-nE + o(n)}\right)$ into~\eqref{eq:renyi_dtilde} and dividing by $n$, we have in the limit $n\to\infty$ that
\begin{equation}\begin{aligned}\label{eq:renyi_bound_upper_exponent}
    E &\geq \sup_{\alpha > 1}\, (\alpha-1) \left( R - \lim_{n\to\infty}\frac1n D_{\alpha,\MM} (\rho^{\otimes n} \| \sigma^{\otimes n}) \right)\\
    &= \sup_{\alpha > 1}\, (\alpha-1) \left( R - \wt{D}_\alpha(\rho \| \sigma) \right),
\end{aligned}\end{equation}
where we recalled~\eqref{eq:sandwich_attained}. 
This asymptotic bound is in fact known to be tight: this was established in~\cite[Theorem~IV.4]{mosonyi_2015} as a key step in the derivation of the strong converse exponent of quantum hypothesis testing.

For the smooth max-relative entropy, \eqref{eq:renyi_dmax}~gives an asymptotically tight bound on the error exponent of $\Dmax{P}{sub}$ (and hence, by Lemma~\ref{lem:subnormalised_normalised_main}, also of $\Dmax{P}{n}$). {To see this, notice that combining Corollary~\ref{cor:DR_dmax} with Lemma~\ref{lem:purified_Dmax_lowerbound} says that for any constant $c \in (1, \frac1\ve)$ we have
\begin{equation}\begin{aligned}
    \Dmax{P}{s}[\sqrt{2c\ve}] (\rho\|\sigma) \leq \wt{D}^{c \ve}_{\max}(\rho\|\sigma) \leq \Dmax{P}{s}[\sqrt{\ve}](\rho\|\sigma) + \log \frac{c}{c-1}.
\end{aligned}\end{equation}
As constant factors do not affect the asymptotic behaviour when the error vanishes exponentially fast, this shows that the error exponent of $\Dmax{P}{s}$ is exactly half that of $\wt{D}^\ve_{\max}$, i.e.\ half of~\eqref{eq:renyi_bound_upper_exponent}, as previously shown in~\cite{li_2023}.
} 
From Lemma~\ref{lem:classical_dmax_equivalence} and the above discussion, we also know that the bound of~\eqref{eq:renyi_bound_upper_exponent} gives exactly the error exponent of $\Dmax{T}{sub}$ when $\rho$ and $\sigma$ commute.

We can also give a lower bound.
\begin{boxed}
\begin{corollary}\label{cor:renyi_lower}
For all states  $\rho,\sigma$, all $\ve \in (0,1)$, and all $\alpha \in (0,1)$, it holds that
\begin{align}\label{eq:Dmax_lower_alpha}
    \Dmax{P}{sub}(\rho\|\sigma) \geq \Dmax{T}{sub} (\rho\|\sigma) \geq \wt{D}^\ve_{\max} (\rho\|\sigma) \geq D_\alpha(\rho \| \sigma) - \frac{1}{1-\alpha} \log \frac{1}{1-\ve}.
    \end{align}
    Consequently,
    \begin{equation}\begin{aligned}\label{eq:DH_lower_alpha}
    D^\ve_H(\rho\|\sigma) \geq D_\alpha(\rho \| \sigma) - \frac{\alpha}{1-\alpha} \log \frac{1}{\ve} + \log \frac{1}{1-\ve}.
\end{aligned}\end{equation}
\end{corollary}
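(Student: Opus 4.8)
The plan is to derive the whole chain in~\eqref{eq:Dmax_lower_alpha} from the definition of $\wt{D}^\ve_{\max}$ in~\eqref{Datta_Leditzky} together with the classical trace-norm inequality of Audenaert \emph{et al.}\ that underlies the quantum Chernoff bound, and then obtain~\eqref{eq:DH_lower_alpha} by feeding the result into Lemma~\ref{lem:DH_Dtilde}. The first two inequalities of~\eqref{eq:Dmax_lower_alpha}, $\Dmax{P}{sub}(\rho\|\sigma) \geq \Dmax{T}{sub}(\rho\|\sigma) \geq \wt{D}^\ve_{\max}(\rho\|\sigma)$, are already contained in Lemma~\ref{lem:dtilde_upper_bound_with_dmax}, so the only substantive claim is the last one, $\wt{D}^\ve_{\max}(\rho\|\sigma) \geq D_\alpha(\rho\|\sigma) - \frac{1}{1-\alpha}\log\frac{1}{1-\ve}$ for $\alpha\in(0,1)$.

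To prove this I would fix any $\lambda$ feasible for $\wt{D}^\ve_{\max}(\rho\|\sigma)$, i.e.\ $\Tr(\rho-\lambda\sigma)_+\leq\ve$; rewriting the positive part as $\Tr(\rho-\lambda\sigma)_+ = \tfrac12(1-\lambda+\norm{\rho-\lambda\sigma}{1})$, this is equivalent to $\norm{\rho-\lambda\sigma}{1} \leq \lambda-1+2\ve$. I would then apply the Audenaert \emph{et al.}\ inequality $\Tr[A^sB^{1-s}] \geq \tfrac12(\Tr A+\Tr B-\norm{A-B}{1})$, valid for $A,B\geq 0$ and $s\in[0,1]$, with $A=\rho$, $B=\lambda\sigma$, $s=\alpha$. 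This gives $\lambda^{1-\alpha}\Tr[\rho^\alpha\sigma^{1-\alpha}] \geq \tfrac12(1+\lambda-\norm{\rho-\lambda\sigma}{1}) \geq 1-\ve$ after inserting the feasibility bound on $\norm{\rho-\lambda\sigma}{1}$. Rearranging yields $\lambda \geq \big((1-\ve)/\Tr[\rho^\alpha\sigma^{1-\alpha}]\big)^{1/(1-\alpha)}$, and taking logarithms together with the identity $\frac{1}{\alpha-1}\log\Tr[\rho^\alpha\sigma^{1-\alpha}]=D_\alpha(\rho\|\sigma)$ gives $\log\lambda \geq D_\alpha(\rho\|\sigma)-\frac{1}{1-\alpha}\log\frac{1}{1-\ve}$. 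Taking the infimum over all feasible $\lambda$ proves the bound; the cases where no $\lambda$ is feasible, or where $\rho\perp\sigma$, are trivial since both sides are then $+\infty$.

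For~\eqref{eq:DH_lower_alpha} I would invoke the first inequality of Lemma~\ref{lem:DH_Dtilde} with smoothing parameter $1-\ve$, namely $\wt{D}^{1-\ve}_{\max}(\rho\|\sigma)+\log\frac{1}{(1-\ve)\ve}\leq D^\ve_H(\rho\|\sigma)$, and combine it with the inequality just proved, applied with smoothing parameter $1-\ve$ so that $\log\frac{1}{1-(1-\ve)}=\log\frac1\ve$. This gives $D^\ve_H(\rho\|\sigma)\geq D_\alpha(\rho\|\sigma)-\frac{1}{1-\alpha}\log\frac1\ve+\log\frac{1}{\ve(1-\ve)}$, and collecting the two $\log\frac1\ve$ terms using $1-\frac{1}{1-\alpha}=-\frac{\alpha}{1-\alpha}$ reproduces~\eqref{eq:DH_lower_alpha} exactly.

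I do not expect a genuine obstacle: the only nonelementary ingredient is the Audenaert \emph{et al.}\ inequality, and the rest is bookkeeping. The step requiring the most care is keeping track of signs and of the direction of monotonicity when moving between $\Tr[\rho^\alpha\sigma^{1-\alpha}]$ and $D_\alpha$ (recall $\alpha-1<0$ for $\alpha\in(0,1)$), and verifying that the feasibility constraint $\Tr(\rho-\lambda\sigma)_+\leq\ve$ translates correctly into the bound $\norm{\rho-\lambda\sigma}{1}\leq\lambda-1+2\ve$ used above.
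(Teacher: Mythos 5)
Your proof is correct and follows essentially the same route as the paper: the key inequality $\wt{D}^\ve_{\max}(\rho\|\sigma) \geq D_\alpha(\rho\|\sigma) - \frac{1}{1-\alpha}\log\frac{1}{1-\ve}$ is obtained from the Audenaert et al.\ bound applied to $(\rho,\lambda\sigma)$ — you use the trace-norm form $\Tr[A^\alpha B^{1-\alpha}]\geq\frac12(\Tr A+\Tr B-\|A-B\|_1)$ while the paper uses the equivalent positive-part form $\Tr(A-B)_+\geq\Tr A-\Tr A^\alpha B^{1-\alpha}$, which differ only by the identity $\Tr(A-B)_+=\frac12(\Tr A-\Tr B+\|A-B\|_1)$ — and the $D^\ve_H$ bound then follows by the same combination with Lemma~\ref{lem:DH_Dtilde}.
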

\end{boxed}

As a bound on the smooth max-relative entropy, Eq.~\eqref{eq:Dmax_lower_alpha} improves on the bound given in~\cite[Proposition~4]{wang_2019}.  The bound on the hypothesis testing relative entropy in~\eqref{eq:DH_lower_alpha} is in general incomparable with the best previously known bound of~\cite[Proposition~3.2]{audenaert_2012}, which lacks the final $\log\frac{1}{1-\ve}$ term but instead features an extra $\alpha$-dependent term. {The additional $\log\frac{1}{1-\ve}$ term can, for instance, improve results that employ $D^\ve_H$ as an intermediate quantity for connections with R\'enyi divergences, such as a bound on the smooth divergence $\wt{D}_{1/2}$ in~\cite[Remark~4]{nuradha_2024-1}.}

\begin{proof}[Proof of Corollary~\ref{cor:renyi_lower}]
{When $\rho \perp \sigma$, all of the terms diverge, and the statement is trivial; let us thus assume otherwise.}
The first two inequalities are then immediate from the definitions (see Lemma~\ref{lem:dtilde_upper_bound_with_dmax}). The last one 
is an application of the inequality of Audenaert et al.~\cite{audenaert_2007}, which states that $\Tr (A-B)_+ \geq \Tr A - \Tr A^{\alpha} B^{1-\alpha}$ for all operators $A,B\geq 0$ and all $\alpha \in (0,1)$. Specifically,
\begin{equation}\begin{aligned}
    \wt{D}^\ve_{\max} (\rho\|\sigma) &= \inf \lset \log \lambda \bar \Tr (\rho - \lambda \sigma)_+ \leq \ve \rset\\
    &\geq \inf \lset \log \lambda \bar 1 - \lambda^{1-\alpha} \Tr \rho^{\alpha} \sigma^{1-\alpha} \leq \ve \rset\\
    &= \inf \lsetr \log \lambda \barr \log \lambda\! \geq \frac{1}{1-\alpha}\log(1-\ve) - \frac{1}{1-\alpha} \log \Tr \rho^{\alpha} \sigma^{1-\alpha} \rsetr\\
    &= \frac{1}{\alpha-1} \log \Tr \rho^{\alpha} \sigma^{1-\alpha} - \frac{1}{1-\alpha} \log \frac{1}{1-\ve}.
\end{aligned}\end{equation}
The bound on $D^\ve_H$ in~\eqref{eq:DH_lower_alpha} then follows by combining this result with Lemma~\ref{lem:DH_Dtilde}.
\end{proof}

Once again, to study the tightness of the bound, let us look at exponents --- now the strong converse exponent of $\wt{D}^\ve_{\max}$, namely the least $E_{\rm sc}$ such that $\wt{D}^{1- \exp\left({-nE_{\rm sc} + o(n)}\right)}_{\max}(\rho^{\otimes n} \| \sigma^{\otimes n}) = nR$ for some fixed $R>0$. Corollary~\ref{cor:renyi_lower} gives a lower bound on this exponent as
\begin{equation}\begin{aligned}\label{eq:renyi_lower_bound_exponent}
    E_{\rm sc} \geq \sup_{\alpha \in (0,1)} (\alpha-1)\left( R - D_\alpha (\rho \| \sigma) \right).
\end{aligned}\end{equation}
This is indeed tight, {as can be shown by contradiction. 
Take $\wt{D}^{1- \exp\left(-nE_{\rm sc} + o(n)\right)}_{\max}(\rho^{\otimes n} \| \sigma^{\otimes n}) = nR $ and assume 
that the inequality in~\eqref{eq:renyi_lower_bound_exponent} is strict, namely that $ E_{\rm sc} \geq (\alpha-1)( R - D_\alpha (\rho \| \sigma)) + \delta$ for all $\alpha \in(0,1)$ and some fixed $\delta > 0$. Relaxing Lemma~\ref{lem:DH_Dtilde} gives us the inequality $\wt{D}_{\max}^{1-\ve} (\rho \| \sigma) + \log \frac{1}{\ve} \leq D^{\ve}_H(\rho \| \sigma)$, which implies that $n R + n E_{\rm sc} + o(n) \leq D^{\exp\left(-nE_{\rm sc} + o(n)\right)}_H(\rho^{\otimes n} \| \sigma^{\otimes n})$. But then, the converse for the error exponent of hypothesis testing~\cite{nagaoka_2006,audenaert_2007} gives
\begin{equation}\begin{aligned}
E_{\rm sc} &\leq \sup_{\alpha\in(0,1)} \frac{\alpha-1}{\alpha} \Big( R + E_{\rm sc} - D_\alpha(\rho\|\sigma) \Big) + o(1) \\
&= \sup_{\alpha \in (0,1)} \frac1\alpha \Big( (\alpha-1) (R - D_\alpha(\rho\|\sigma) + (\alpha-1) E_{\rm sc} \Big) + o(1)\\
&\leq \sup_{\alpha \in (0,1)} \frac1\alpha \big( E_{\rm sc} + (\alpha-1) E_{\rm sc} - \delta \big) + o(1)\\
&= E_{\rm sc} - \delta + o(1),
\end{aligned}\end{equation}
which, in the limit $n\to\infty$, 
leads to a contradiction.}

For the smooth max-relative entropy, the bound from Corollary~\ref{cor:renyi_lower} does not give a tight lower bound on the strong converse exponent of~$\Dmax{P}{sub}$ in general~\cite{li_2024-1}. 
However, since we have seen the bound to be tight for the strong converse exponent of $\wt{D}^\ve_{\max}$, by Lemma~\ref{lem:classical_dmax_equivalence} it is also tight for the strong converse exponent of $\Dmax{T}{sub}$ when $\rho$ and $\sigma$ are commuting states --- this was already shown earlier in~\cite{salzmann_2022}.

{
We also remark that an application of Corollary~\ref{cor:renyi_lower} together with the tighter upper bound of Lemma~\ref{lem:purified_Dmax_lowerbound} gives another variant of a lower bound for the purified smoothing: for any $c \in (0,1)$, we have
\begin{align}\label{eq:Dmax_lower_alpha_purified}
    \Dmax{P}{sub}[\sqrt{c\ve}](\rho\|\sigma) \geq D_\alpha(\rho \| \sigma) - \frac{\alpha}{1-\alpha} \log \frac{1}{1- \ve} - \log \frac{1}{1-c}.
\end{align}

For clarity, let us give a formal statement of the asymptotic results discussed in this section.
}

\begin{boxed}
\begin{corollary}\label{cor:exponents}
{%
For any two states $\rho$, $\sigma$ and any $R > 0$, let
\begin{equation}\begin{aligned}
    \ve_n (R) \coloneqq \inf \lsetr \ve \in (0,1) \barr \wt{D}^\ve_{\max} (\rho^{\otimes n} \| \sigma^{\otimes n}) \leq n R \rsetr.
\end{aligned}\end{equation}
Then
\begin{equation}\begin{aligned}
    \lim_{n\to\infty} - \frac1n \log \ve_n(R) &= \sup_{\alpha > 1}\, (\alpha-1) \left( R - \wt{D}_\alpha(\rho \| \sigma) \right),\\
    \lim_{n\to\infty} - \frac1n \log \big(1-\ve_n(R)\big) &=\! \sup_{\alpha \in (0,1)} (\alpha-1)\big( R - D_\alpha (\rho \| \sigma) \big).
\end{aligned}\end{equation}
}%
\end{corollary}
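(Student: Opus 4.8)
The plan is to derive both identities by combining the one-shot bounds already established in Corollaries~\ref{cor:var-mrel} and~\ref{cor:renyi_lower} with the known error and strong converse exponents of quantum hypothesis testing; no new estimate is needed, only a careful asymptotic reduction. First I would record the two structural facts that make $\ve_n(R)$ well behaved: $\ve\mapsto\wt{D}^\ve_{\max}(\cdot\|\cdot)$ is non-increasing and right-continuous in $\ve$ (Appendix~\ref{app:continuity}), so the set defining $\ve_n(R)$ is an interval with left endpoint $\ve_n(R)$, and by right-continuity $\wt{D}^{\ve_n(R)}_{\max}(\rho^{\otimes n}\|\sigma^{\otimes n})\leq nR$. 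I would also assume throughout the non-degeneracy condition $\supp\rho\subseteq\supp\sigma$ (which the first identity needs already, and which entails $\rho$ not orthogonal to $\sigma$), so that the Rényi divergences appearing are finite; the remaining degenerate cases are checked directly, both sides collapsing to $0$ or to $+\infty$ together.

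For the first identity, the lower bound on the exponent is immediate: by Eq.~\eqref{eq:renyi_dtilde} of Corollary~\ref{cor:var-mrel} and additivity of $\wt{D}_\alpha$, for every $\alpha>1$ one has $\wt{D}^\ve_{\max}(\rho^{\otimes n}\|\sigma^{\otimes n})\leq n\wt{D}_\alpha(\rho\|\sigma)+\tfrac1{\alpha-1}\log\tfrac1\ve$, so taking $\ve=\exp\!\big(-n(\alpha-1)(R-\wt{D}_\alpha(\rho\|\sigma))\big)$ (which lies in $(0,1)$ when $R>\wt{D}_\alpha$) makes the right side equal to $nR$, whence $\ve_n(R)\leq\exp\!\big(-n(\alpha-1)(R-\wt{D}_\alpha)\big)$; applying $-\tfrac1n\log(\cdot)$ and optimising over $\alpha$ gives $\liminf_n-\tfrac1n\log\ve_n(R)\geq\sup_{\alpha>1}(\alpha-1)(R-\wt{D}_\alpha)$. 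For the matching upper bound on the exponent I would pass to the hypothesis testing relative entropy, using Theorem~\ref{thm:equivalence_DH_Dtilde} (or its relaxation Lemma~\ref{lem:DH_Dtilde}) to express $\wt{D}^{\ve_n(R)}_{\max}$ in terms of $D^{1-\delta}_H$, and then invoke the strong converse exponent of quantum hypothesis testing~\cite[Theorem~IV.4]{mosonyi_2015}; this is precisely the tightness claim discussed around Eq.~\eqref{eq:renyi_bound_upper_exponent}, and the optimisation over the auxiliary parameter $\delta$ (equivalently $\mu$) is what turns the prefactor $\tfrac{\alpha-1}{\alpha}$ of the hypothesis-testing exponent into the prefactor $\alpha-1$ appearing here.

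The second identity is handled symmetrically. For the lower bound on the exponent I would use Eq.~\eqref{eq:Dmax_lower_alpha} of Corollary~\ref{cor:renyi_lower} (which rests on the inequality of Audenaert et al.) together with additivity of $D_\alpha$: for every $\alpha\in(0,1)$, $\wt{D}^{1-\eta}_{\max}(\rho^{\otimes n}\|\sigma^{\otimes n})\geq nD_\alpha(\rho\|\sigma)-\tfrac1{1-\alpha}\log\tfrac1\eta$, so $\wt{D}^{1-\eta}_{\max}(\rho^{\otimes n}\|\sigma^{\otimes n})\leq nR$ forces $\eta\leq\exp\!\big(-n(\alpha-1)(R-D_\alpha(\rho\|\sigma))\big)$; applying this at $\eta=1-\ve_n(R)$ (admissible because $\wt{D}^{\ve_n(R)}_{\max}(\rho^{\otimes n}\|\sigma^{\otimes n})\leq nR$ by right-continuity) and optimising over $\alpha$ yields $\liminf_n-\tfrac1n\log(1-\ve_n(R))\geq\sup_{\alpha\in(0,1)}(\alpha-1)(R-D_\alpha)$. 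For the matching upper bound I would run the contradiction argument already sketched in the paragraph before Eq.~\eqref{eq:renyi_lower_bound_exponent}: set $E_{\mathrm{sc}}=\limsup_n-\tfrac1n\log(1-\ve_n(R))$, pass to a subsequence attaining it, assume $E_{\mathrm{sc}}\geq(\alpha-1)(R-D_\alpha)+\delta$ for all $\alpha\in(0,1)$ and some $\delta>0$, relax Lemma~\ref{lem:DH_Dtilde} to $\wt{D}^{1-\ve}_{\max}(\rho\|\sigma)+\log\tfrac1\ve\leq D^\ve_H(\rho\|\sigma)$, and evaluate at $\ve=1-\ve_n(R)$ to obtain $D^{1-\ve_n(R)}_H(\rho^{\otimes n}\|\sigma^{\otimes n})\geq n(R+E_{\mathrm{sc}})+o(n)$; the converse for the error exponent of hypothesis testing~\cite{nagaoka_2006,audenaert_2007} then forces $E_{\mathrm{sc}}\leq\sup_{\alpha\in(0,1)}\tfrac{\alpha-1}{\alpha}(R+E_{\mathrm{sc}}-D_\alpha(\rho\|\sigma))+o(1)$, which upon solving the fixed point is equivalent to $E_{\mathrm{sc}}\leq\sup_{\alpha\in(0,1)}(\alpha-1)(R-D_\alpha)$, contradicting strictness.

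The hard part will not be any individual estimate but the asymptotic plumbing. In particular I expect the main obstacle to be making rigorous the step that exchanges the optimisation over the auxiliary parameter with the limit $n\to\infty$ (this is what transmutes $\tfrac{\alpha-1}{\alpha}$ into $\alpha-1$), keeping the $o(n)$ remainders under control so that the fixed-point manipulation in the contradiction argument is airtight rather than formal, including passing to subsequences on which the relevant $\limsup$ is attained, and cleanly treating the boundary regimes $R<D(\rho\|\sigma)$ versus $R\geq D(\rho\|\sigma)$ (and the reverse in the second identity), where the exponents degenerate to zero and one must verify that both sides of the identity vanish.
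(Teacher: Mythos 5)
Your proposal is correct and follows essentially the same route as the paper's (implicit) proof: the corollary is precisely the formalisation of the surrounding discussion, namely achievability of the first exponent from Corollary~\ref{cor:var-mrel}, tightness from~\cite[Theorem~IV.4]{mosonyi_2015}, achievability of the second exponent from Corollary~\ref{cor:renyi_lower}, and tightness via the contradiction argument combining the relaxed Lemma~\ref{lem:DH_Dtilde} with the hypothesis-testing error-exponent converse of~\cite{nagaoka_2006,audenaert_2007}. One simplification worth adopting for the converse of the first identity: since $\wt{D}^\ve_{\max}$ is the generalised inverse of the continuous, non-increasing map $\gamma\mapsto\Tr(\rho^{\otimes n}-\gamma\,\sigma^{\otimes n})_+$, one has $\ve_n(R)=\Tr\big(\rho^{\otimes n}-2^{nR}\sigma^{\otimes n}\big)_+$ exactly, so the matching upper bound on the exponent follows by citing~\cite[Theorem~IV.4]{mosonyi_2015} for this quantity directly, avoiding the detour through $D^\ve_H$ and the $\tfrac{\alpha-1}{\alpha}\mapsto(\alpha-1)$ conversion that you correctly identify as the delicate part of your route.
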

\end{boxed}

{As discussed above, understanding the asymptotic properties of $\wt{D}^\ve_{\max}$ can find direct use in characterising $D^\ve_{\max}$, which then can be used to study tasks whose performance is connected to this quantity, such as quantum privacy amplification~\cite{li_2023,li_2024-1,salzmann_2022}. As another example of how the results here can be related with the analysis of the performance of operational problems, let us consider again the one-shot achievability bounds of~\cite{cheng_2023-1}. Recall from~\eqref{eq:haochung} that~\cite{cheng_2023-1} gives lower (achievable) bounds on code size of the form $\log M \geq \wt{D}^{1-\ve}_{\max}(\rho_{XB} \| \rho_X \otimes \rho_B)$ where $\ve$ represents the error. The error exponent of $\wt{D}^{\ve}_{\max}$ then corresponds to an achievability result for the strong converse exponent of the tasks studied in~\cite{cheng_2023-1}, and, vice versa, the strong converse exponent of $\wt{D}^\ve_{\max}$ gives achievability of the error exponents for the tasks. The bound of Corollary~\ref{cor:renyi_lower} then exactly matches the one-shot bound on the error exponent of classical-quantum channel coding given in~\cite[Proposition~1]{cheng_2023-1}, namely
\begin{equation}\begin{aligned}
    \log \frac1\ve \geq n \sup_{\alpha \in (0,1)} \,(\alpha-1) \left( \log M - D_\alpha (\rho_{XB} \| \rho_X \otimes \rho_B) \right).
\end{aligned}\end{equation}
On the other hand, the fact that the upper bound of Corollary~\ref{cor:var-mrel} is asymptotically achievable, as per~\cite[Theorem~IV.4]{mosonyi_2015}, then yields an achievable strong converse exponent:
\begin{equation}\begin{aligned}
    \log \frac{1}{1-\ve} \geq n \sup_{\alpha > 1}\, (\alpha-1) \left( \log M - \wt{D}_\alpha (\rho_{XB} \| \rho_X \otimes \rho_B) \right) + o(n).
\end{aligned}\end{equation}
This result did not appear in~\cite{cheng_2023-1} and in fact it was asked there whether the one-shot achievability result can be used to deduce achievable strong converse exponents, which we see to indeed be possible.  
Although the exponents obtained in this way do not match the optimal ones, the achievability result is significantly simpler than the more involved proof approaches needed to establish the optimal error~\cite{renes_2025,li_2025,cheng_2025} and strong converse exponents~\cite{mosonyi_2017} of classical-quantum channel coding. The fact that this achievability result is not sufficient to give the optimal exponential behaviour of the tasks can be seen from Lemma~\ref{lem:DH_Dtilde}:  if either $\varepsilon \sim \exp(-nE)$ or $1 - \ve \sim \exp(-nE)$ for some positive exponent $E$, there is always a non-vanishing gap between the asymptotic expansions of $\wt{D}^{1-\ve}_{\max}$ and $D^\ve_H$ due to the additive factor $\log\frac{1}{\ve(1-\ve)}$ in the lemma; however, the optimal asymptotic exponents of coding tasks  generally match those of $D^\ve_H$ . Whether one-shot approaches based on smooth relative entropies can be tightened to give optimal achievability is an interesting open problem.
}


{%
\subsection[Relation with information spectrum divergence\texorpdfstring{ $D^\ve_s$}{}]{Relation with information spectrum divergence $\boldsymbol{D^\ve_s}$}\label{sec:info_spec}

The divergence $\wt{D}^\ve_{\max}$ was originally introduced in~\cite{datta_2015} as an alternative to the \emph{information spectrum divergence}\footnote{Despite its commonly used name, $D^\ve_s$ is not actually a divergence, as it does not obey the data processing inequality.} $D^\ve_s$, with the latter defined as~\cite{tomamichel_2013}
\begin{equation}\begin{aligned}
    D^\ve_s(\rho\|\sigma) &= \log \sup \lset \gamma \bar \Tr \rho \{ \rho \leq \gamma \sigma \} \leq \ve \rset\\
                          &= \log \sup \lset \gamma \bar \Tr \rho \{ \rho > \gamma \sigma \} \geq 1- \ve \rset,
\end{aligned}\end{equation}
where $\{\rho \leq \gamma \sigma\}$ (respectively, $\{\rho > \gamma \sigma\}$) denotes the projector onto the eigenspace of $\gamma \sigma - \rho$ corresponding to the non-negative (respectively, negative) eigenvalues. The quantity $D^\ve_s$ was formalised in~\cite{tomamichel_2013} to study the precise relations between the information spectrum method~\cite{han_2003,nagaoka_2007} and quantum information-theoretic tasks such as quantum hypothesis testing. Here we improve on the known quantitative connections between the different functions.
}

{
\begin{boxed}
\begin{proposition}\label{lem:info_spec}
For all $\rho$, $\sigma$, all $\ve\in(0,1)$, and all $\mu \in (0,1-\ve)$, it holds that
\begin{equation}\begin{aligned}\label{eq:Ds_Dmax_bound}
  \wt{D}^{1-\ve}_{\max} (\rho\|\sigma) \leq D^{\ve}_s(\rho\|\sigma) \leq \wt{D}^{1-\ve-\mu}_{\max}(\rho\|\sigma) + \log\frac{1-\ve}{\mu}.
\end{aligned}\end{equation}
Consequently,
\begin{equation}\begin{aligned}\label{eq:Ds_DH_bound}
  D^{\ve}_s(\rho\|\sigma) + \log\frac{1}{1-\ve} \leq D^{\ve}_H(\rho\|\sigma) \leq D^{\ve+\delta}_{s}(\rho\|\sigma) + \log\frac1\delta.
\end{aligned}\end{equation}
\end{proposition}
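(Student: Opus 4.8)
The plan is to first establish the two inequalities of~\eqref{eq:Ds_Dmax_bound} relating $\wt{D}^{1-\ve}_{\max}$ and $D^\ve_s$, and then obtain~\eqref{eq:Ds_DH_bound} by composing them with the bounds of Lemma~\ref{lem:DH_Dtilde} between $\wt{D}^\ve_{\max}$ and $D^{1-\ve}_H$. Write $\gamma^\star$ for the optimal value in the definition of $D^\ve_s(\rho\|\sigma)$, so that $D^\ve_s(\rho\|\sigma)=\log\gamma^\star$, and recall that a scalar $\gamma$ is admissible there iff $\Tr\rho\{\rho\le\gamma\sigma\}\le\ve$, equivalently $\Tr\rho\{\rho>\gamma\sigma\}\ge1-\ve$. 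For $\gamma\ge0$ denote by $P_\gamma\coloneqq\{\rho>\gamma\sigma\}$ the spectral projector of $\rho-\gamma\sigma$ onto its strictly positive part.

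For the first inequality of~\eqref{eq:Ds_Dmax_bound}, I would show that every $\gamma>\gamma^\star$ is admissible in the definition~\eqref{Datta_Leditzky} of $\wt{D}^{1-\ve}_{\max}$. Such a $\gamma$ violates the constraint defining $D^\ve_s$, so $\Tr\rho P_\gamma<1-\ve$; since $(\rho-\gamma\sigma)_+=P_\gamma(\rho-\gamma\sigma)P_\gamma$, discarding the non-positive term $-\gamma P_\gamma\sigma P_\gamma$ gives $\Tr(\rho-\gamma\sigma)_+\le\Tr\rho P_\gamma<1-\ve$. Hence $\gamma$ is admissible for $\wt{D}^{1-\ve}_{\max}$, and letting $\gamma\downarrow\gamma^\star$ yields $\wt{D}^{1-\ve}_{\max}(\rho\|\sigma)\le\log\gamma^\star=D^\ve_s(\rho\|\sigma)$.

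The heart of the argument is the intermediate estimate $D^\ve_H(\rho\|\sigma)\ge D^\ve_s(\rho\|\sigma)+\log\frac{1}{1-\ve}$, which is already the left inequality of~\eqref{eq:Ds_DH_bound}. I would fix any $\gamma<\gamma^\star$, so that $\Tr\rho P_\gamma\ge1-\ve>0$, and use as a test the rescaled projector $M\coloneqq\frac{1-\ve}{\Tr\rho P_\gamma}\,P_\gamma$, which satisfies $0\le M\le\id$ (using $\Tr\rho P_\gamma\ge1-\ve$) and $\Tr M\rho=1-\ve$, hence is feasible in~\eqref{eq:dh_def}. On the range of $P_\gamma$ the strict operator inequality $\gamma\sigma<\rho$ holds, so $\gamma\Tr\sigma P_\gamma<\Tr\rho P_\gamma$ and therefore $\Tr M\sigma=\frac{1-\ve}{\Tr\rho P_\gamma}\Tr\sigma P_\gamma<\frac{1-\ve}{\gamma}$; taking $-\log$ and then the supremum over $\gamma<\gamma^\star$ gives $D^\ve_H(\rho\|\sigma)\ge\log\gamma^\star+\log\frac{1}{1-\ve}$. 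Conceptually this is nothing but the quantum Neyman--Pearson lemma specialised to a threshold test, with the rescaling serving only to bring the type-I error down to exactly $\ve$.

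It then remains to assemble the pieces. The upper bound of Lemma~\ref{lem:DH_Dtilde}, applied with smoothing parameter $1-\ve$, reads $D^\ve_H(\rho\|\sigma)\le\wt{D}^{1-\ve-\mu}_{\max}(\rho\|\sigma)+\log\frac1\mu$; combining it with the intermediate estimate above gives $D^\ve_s(\rho\|\sigma)\le\wt{D}^{1-\ve-\mu}_{\max}(\rho\|\sigma)+\log\frac{1-\ve}{\mu}$, the second inequality of~\eqref{eq:Ds_Dmax_bound}. Feeding into the same consequence of Lemma~\ref{lem:DH_Dtilde} (now with $\mu=\delta$) the already-proved first inequality of~\eqref{eq:Ds_Dmax_bound} at parameter $\ve+\delta$, namely $\wt{D}^{1-\ve-\delta}_{\max}(\rho\|\sigma)\le D^{\ve+\delta}_s(\rho\|\sigma)$, yields $D^\ve_H(\rho\|\sigma)\le D^{\ve+\delta}_s(\rho\|\sigma)+\log\frac1\delta$, the remaining inequality of~\eqref{eq:Ds_DH_bound}. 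I expect the only real obstacle to be bookkeeping rather than conceptual: tracking strict versus non-strict spectral inequalities, handling the supremum defining $D^\ve_s$, and treating the degenerate case $\gamma^\star=\infty$ separately (there $\Tr\rho\Pi_\sigma\le\ve$ with $\Pi_\sigma$ the support projector of $\sigma$, and $D^\ve_H(\rho\|\sigma)=\infty$ is seen directly via the test $M=\id-\Pi_\sigma$).
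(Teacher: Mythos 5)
Your proof is correct, but it reverses the logical order of the paper's argument in an interesting way. The paper proves the upper bound $D^{\ve}_s\leq \wt{D}^{1-\ve-\mu}_{\max}+\log\frac{1-\ve}{\mu}$ \emph{directly}, by adapting a contradiction argument of Hayashi (assuming $\gamma=k\lambda$ with $k=\frac{1-\ve}{\mu-\zeta}$ were feasible for $D^\ve_s$ and deriving $\Tr\rho\{\rho>k\lambda\sigma\}<1-\ve$), and only then obtains the first inequality of~\eqref{eq:Ds_DH_bound} by taking the infimum over $\mu$ and invoking Theorem~\ref{thm:equivalence_DH_Dtilde}. You instead prove the first inequality of~\eqref{eq:Ds_DH_bound} directly, via the explicit rescaled Neyman--Pearson test $M=\frac{1-\ve}{\Tr\rho P_\gamma}P_\gamma$ with $\Tr M\rho=1-\ve$ and $\Tr M\sigma\leq\frac{1-\ve}{\gamma}$, and then deduce the second inequality of~\eqref{eq:Ds_Dmax_bound} by composing with the upper bound of Lemma~\ref{lem:DH_Dtilde}. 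Because the upper bound in Lemma~\ref{lem:DH_Dtilde} holds with equality after optimising over $\mu$ (Theorem~\ref{thm:equivalence_DH_Dtilde}), the two routes yield identical constants; yours is arguably more elementary and exhibits an explicit near-optimal test, while the paper's stays entirely within the hockey-stick/$\wt{D}_{\max}$ framework. The first inequality of~\eqref{eq:Ds_Dmax_bound} and the derivation of the second inequality of~\eqref{eq:Ds_DH_bound} are the same in both treatments.

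One small point to tighten: you write ``fix any $\gamma<\gamma^\star$, so that $\Tr\rho P_\gamma\geq 1-\ve$,'' which tacitly assumes the set of admissible $\gamma$ in the definition of $D^\ve_s$ is downward closed. For non-commuting $\rho,\sigma$ the monotonicity of $\gamma\mapsto\Tr\rho\{\rho>\gamma\sigma\}$ is not obvious, so this is not automatic. It is also not needed: since $\gamma^\star$ is a supremum over admissible values, you can simply run the test construction on a sequence of admissible $\gamma_n\to\gamma^\star$, which gives the same conclusion.
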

\end{boxed}
}

{
The first of the inequalities in~\eqref{eq:Ds_Dmax_bound} is known~\cite[Proposition~4.3]{datta_2015}, as is the second inequality in~\eqref{eq:Ds_DH_bound}, originally observed in~\cite[Lemma~12]{tomamichel_2013}. 
Our upper bound of~\eqref{eq:Ds_Dmax_bound} and lower bound of~\eqref{eq:Ds_DH_bound} improve on the previous bounds by the additive factor of $\log\frac{1}{1-\ve}$.
\begin{proof}[Proof of Proposition~\ref{lem:info_spec}]
For the first inequality in~\eqref{eq:Ds_Dmax_bound}, take any $\gamma$ such that $\log \gamma > D^\ve_s$, so that $\Tr \rho \{\rho \leq \gamma \sigma\} > \ve$. This means that
\begin{equation}\begin{aligned}
    1-\ve &\geq \Tr \rho \{ \rho > \gamma \sigma \}
    \geq \Tr (\rho - \gamma\sigma) \{ \rho > \gamma \sigma \}
    = \Tr (\rho - \gamma \sigma)_+
\end{aligned}\end{equation}
and so $\gamma$ is a feasible solution for the minimisation in the definition of $\wt{D}^\ve_{\max}(\rho\|\sigma)$, yielding the claimed inequality upon taking an infimum over such $\gamma$.

This immediately implies the second inequality in~\eqref{eq:Ds_DH_bound} through the upper bound on $D^\ve_H$ stated in Lemma~\ref{lem:DH_Dtilde}.

For the second inequality in~\eqref{eq:Ds_Dmax_bound}, we adapt a proof approach of Hayashi from~\cite[Lemma~19]{hayashi_2016-2}. Take any $\lambda$ such that $\Tr (\rho - \lambda \sigma)_+ \leq 1-\ve-\mu$. Now, fix an arbitrary $\zeta \in (0,\mu)$, let $k = \frac{1-\ve}{\mu-\zeta}$ and assume towards a contradiction that $\gamma = k \lambda$ is a feasible solution for the maximisation in the definition of $D^\ve_s(\rho\|\sigma)$, meaning that $\Tr \rho \{ \rho > k \lambda \sigma \} \geq 1- \ve >0$. But
\begin{align}
  \frac{1-\ve-\mu}{1-\ve} \Tr \rho \{ \rho > k \lambda \sigma \} &< \frac{1-\ve-\mu+\zeta}{1-\ve} \Tr \rho \{ \rho > k \lambda \sigma \} \nonumber \\
  &= \left(1-\frac1k\right) \Tr \rho \{ \rho > k \lambda \sigma \} \nonumber \\
  &= \Tr \rho \{ \rho > k \lambda \sigma \} - \frac1k \Tr \rho \{ \rho > k \lambda \sigma \} \\
  &\leq \Tr \rho \{ \rho > k \lambda \sigma  \} - \Tr \lambda \sigma \{ \rho > k \lambda \sigma \} \nonumber \\
  &= \Tr (\rho - \lambda \sigma) \{ \rho > k \lambda \sigma  \} \nonumber \\
  &\leq \Tr (\rho - \lambda \sigma)_+ \nonumber \\
  &\leq 1-\ve-\mu, \nonumber
\end{align}
where the penultimate inequality follows from the variational form of $\Tr(\cdot)_+$~\eqref{eq:trace_variational}. 
We thus get that $\Tr \rho \{ \rho > k \lambda \sigma \} < 1-\ve $, contradicting our assumption of the feasibility of $\gamma = k\lambda$; hence it must be the case that $\log( k \lambda) \geq D^\ve_s(\rho\|\sigma)$, which upon minimisation over $\lambda$ and taking $\zeta \to 0$ gives  exactly~\eqref{eq:Ds_Dmax_bound}.

Taking an infimum of both sides of the second inequality in~\eqref{eq:Ds_Dmax_bound} over all $\mu \in (0, 1-\ve)$, by Theorem~\ref{thm:equivalence_DH_Dtilde} we get exactly the first inequality in~\eqref{eq:Ds_DH_bound}, which was the last missing piece in the proof.
\end{proof}
}%

{%
The result of Proposition~\ref{lem:info_spec} can help clarify the relations between various one-shot achievability bounds in quantum information, which often appear in terms of either $D^\ve_s$ or $D^\ve_H$ (or indeed in terms of $\wt{D}^{1-\ve}_{\max}$, as in~\eqref{eq:haochung}). For instance, we get that $\wt{D}^{1-\ve}_{\max} \geq D^{\ve-\delta}_H - \log\frac1\delta \geq D^{\ve-\delta}_{s} - \log\frac{1-\ve}{\delta} + O(\delta)$, tightening the known relations between the one-shot achievability result of~\cite{cheng_2023-1} and the prior bound of~\cite{beigi_2014}, of relevance to their higher-order asymptotic expansion~\cite[\S 3.1]{cheng_2023-1}.

Our bounds on $D^s_\ve$ in Proposition~\ref{lem:info_spec} are also tight at the level of the asymptotic exponents, when $\ve$ converges exponentially fast to 0 or to 1. Indeed, the asymptotics of $D^\ve_s$ are nothing but the asymptotics of the quantum Neyman--Pearson tests, well studied in hypothesis testing~\cite{nagaoka_2007,audenaert_2012,mosonyi_2015}. To see how these results are implied by our bounds, observe that combining the bounds of Proposition~\ref{lem:info_spec} with the data-processing upper bound on $D^\ve_H$ recalled in~\eqref{eq:DH_upper_Dalpha} gives
\begin{equation}\begin{aligned}
    \wt{D}^{1-\ve}_{\max} (\rho\|\sigma) \leq D^{\ve}_s(\rho\|\sigma) \leq \wt{D}_\alpha (\rho\|\sigma) + \frac{1}{\alpha-1} \log \frac{1}{1-\ve}.
\end{aligned}\end{equation}
As we discussed in Section~\ref{sec:bounds_renyi}, this upper bound is asymptotically tight for $\wt{D}^{1-\ve}_{\max}$ when $\ve \to 0$ by~\cite[Theorem~IV.4]{mosonyi_2015}, making it asymptotically tight also  for $D^{\ve}_s$. On the other hand, invoking~\eqref{eq:Ds_Dmax_bound} with the choice $\mu = (1-c) \ve$ for some $c \in (0,1)$ gives
\begin{equation}\begin{aligned}
    \wt{D}^{\ve}_{\max} (\rho\|\sigma) \leq D^{1-\ve}_s(\rho\|\sigma) \leq \wt{D}^{c \ve}_{\max}(\rho\|\sigma) + \log\frac{1}{1-c},
\end{aligned}\end{equation}
which means that the asymptotic behaviour of $D^{1-\ve}_s$ when $\ve \to 0$ matches that of $\wt{D}^\ve_{\max}$. The exponents given by Corollary~\ref{cor:exponents} thus hold also when $\wt{D}^\ve_{\max}$ is replaced by $D^{1-\ve}_s$.

Although in this work we have emphasised the close connections between $\wt{D}^\ve_{\max}$ and the smooth max-relative entropy $D^\ve_{\max}$, the insights discussed here strengthen also the interpretation of $\wt{D}^\ve_{\max}$ as a quantity equivalent, in a sense, to the information spectrum divergence: not only do $\wt{D}^{1-\ve}_{\max}$ and $D^\ve_s$ have the same second-order asymptotic expansion~\cite{datta_2015}, they also share the same asymptotic behaviour in terms of error and strong converse exponents and similar, tight one-shot bounds.
}


\subsection{Simultaneous smoothing} \label{sec:simultaneous}

Let us now consider a multi-partite generalisation of the Datta--Renner lemma (Theorem~\ref{tightened_DR_lemma}). This involves smoothing all marginals of a global state at the same time, and it is reminiscent of the `simultaneous smoothing' conjecture of Drescher and Fawzi~\cite[Conjecture~III.1]{drescher_2013}. While the authors there were only concerned with min-entropy smoothing, intrinsically linked to smoothing of the max-relative entropy with respect to the maximally mixed state, we consider the more general case where the second state is arbitrary. However, an important limitation of our result is that we achieve simultaneous smoothing over non-overlapping sub-systems only, while Drescher and Fawzi are interested in smoothing simultaneously over \emph{all} sets of sub-systems (cf.~\cite[Theorem~VI.2]{drescher_2013}).
The same joint smoothing setting that we consider here appeared in~\cite{anshu_2019}.

\begin{boxed}
\begin{prop} \label{multi_partite_DR_thm}
For some positive integer $m\in \NN_+$ and some choice of parameters $\ve_1,\ldots,\ve_m\in [0,1]$ with $\sum_j \ve_j < 1$, let $\rho_{1\ldots m}$ be an $m$-partite state such that $\rho_i \leq A_i + Q_i$ holds for all $i=1,\ldots, m$, for some choice of operators $A_i,Q_i\geq 0$ with $\Tr Q_i \leq \ve_i$.  Here, $\rho_i$ denotes the marginal of $\rho$ on the $i^\text{th}$ system. Then there exists an $m$-partite normalised state $\rho'_{1\ldots m}$ such that 
\bb
&\rho'_i \leq \frac{A_i}{1-\sum_j \ve_j}\quad \forall\ i=1,\ldots, m\\
\text{and}\quad& F(\rho,\rho') \geq 1 - \sumno_j \ve_j, \quad \frac12\norm{\rho-\rho'}{1} &\leq \sqrt{\sumno_j \ve_j}.
\ee
\end{prop}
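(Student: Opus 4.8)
The plan is to mimic the single-partite proof of the tightened Datta--Renner lemma (Theorem~\ref{tightened_DR_lemma}), replacing the single operator geometric mean with a tensor product of local geometric means acting on the respective subsystems. First I would set $\ve \coloneqq \sum_j \ve_j < 1$ and, up to projecting each system $i$ onto the support of $A_i + Q_i$, assume $A_i + Q_i > 0$ for all $i$. For each $i$ define the local operator
\bb
G_i \coloneqq A_i \gm \left((A_i+Q_i)^{-1}\right),
\ee
so that, exactly as in~\eqref{gm_proof_eq2}, $0 \leq G_i \leq \id$, and set $G \coloneqq G_1 \otimes \cdots \otimes G_m$, which also satisfies $0 \leq G \leq \id$. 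The candidate smoothing is $\widehat{\rho} \coloneqq G\, \rho_{1\ldots m}\, G$, a subnormalised state; conjugating the local inequality $\rho_i \leq A_i + Q_i$ by $G_i$ gives, as in~\eqref{gm_proof_eq3},
\bb
\widehat{\rho}_i = G_i \rho_i G_i \leq G_i (A_i + Q_i) G_i = A_i,
\ee
so the local domination constraints hold for the unnormalised $\widehat{\rho}$.

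The key quantitative step is a union-bound-type estimate on $1 - \Tr G^2 \rho$. For each $i$, reproducing~\eqref{gm_proof_eq4} with the marginal $\rho_i$ yields $\Tr \rho_i (\id - G_i^2) \leq \Tr Q_i \leq \ve_i$, i.e.\ $\Tr \rho\, (\id_1 \otimes \cdots \otimes (\id - G_i^2)_i \otimes \cdots \otimes \id_m) \leq \ve_i$. Then I would use the operator inequality $\id - G_1^2 \otimes \cdots \otimes G_m^2 \leq \sum_{i=1}^m \id \otimes \cdots \otimes (\id - G_i^2)_i \otimes \cdots \otimes \id$, which holds because each $G_i^2 \in [0,\id]$ (a standard telescoping argument: $\id - \bigotimes_i X_i = \sum_i (\id \otimes \cdots \otimes X_{i-1}) \otimes (\id - X_i) \otimes (\id \otimes \cdots \otimes \id) \leq \sum_i \id \otimes \cdots \otimes (\id - X_i) \otimes \cdots$ when $0 \leq X_i \leq \id$). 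Taking the trace against $\rho$ gives
\bb
1 - \Tr G^2 \rho \;\leq\; \sum_{i=1}^m \ve_i \;=\; \ve \;<\; 1.
\ee
Now $G$ is a legitimate measurement operator with $0 \leq G \leq \id$ and $\Tr G^2 \rho \geq 1 - \ve$, so the gentler measurement lemma (Lemma~\ref{tighter_gentle_measurement_lemma}) applied to $M = G^2$ and the state $\rho = \rho_{1\ldots m}$ gives $F(\rho, \widehat{\rho}) \geq (1-\ve)^2$ and $\frac12\|\rho - \widehat{\rho}\|_1 \leq \sqrt{\ve}$ for the unnormalised $\widehat\rho$. Finally, set $\rho' \coloneqq \widehat{\rho}/\Tr \widehat{\rho}$: by~\eqref{gentle_measurement_fidelity_norm} and~\eqref{gentle_measurement_trace_distance} we obtain $F(\rho, \rho') \geq 1-\ve$ and $\frac12\|\rho-\rho'\|_1 \leq \sqrt{\ve}$, and since $\Tr\widehat\rho = \Tr G^2\rho \geq 1-\ve$ we get $\rho'_i = \widehat{\rho}_i / \Tr\widehat\rho \leq A_i/(1-\ve)$, as claimed.

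The main obstacle — and the only genuinely new ingredient beyond the single-partite case — is the operator inequality $\id - \bigotimes_i G_i^2 \leq \sum_i \id \otimes\cdots\otimes(\id - G_i^2)\otimes\cdots\otimes\id$ and the verification that the error parameters add up linearly rather than, say, multiplicatively; everything else is a direct transcription of the proof of Theorem~\ref{tightened_DR_lemma}. One should double-check that the tensor-product structure survives the support-projection step (projecting system $i$ onto $\supp(A_i+Q_i)$ is a local operation, so it does), and that $G$ being a product of commuting-across-tensor-factors operators still makes $\widehat\rho$ a well-defined subnormalised $m$-partite state (it does, as $G$ is positive semidefinite with $\Tr G^2 \rho \leq 1$). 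I do not expect the purified-distance vs.\ trace-distance bookkeeping to cause trouble here since only the two stated bounds (normalised fidelity and normalised trace distance) are claimed.
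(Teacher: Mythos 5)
Your proposal follows the paper's proof essentially verbatim: the same tensor product of local geometric means $G=\bigotimes_i G_i$, the same union-bound operator inequality $\id - \bigotimes_i G_i^2 \leq \sum_i (\id - G_i^2)\otimes \id_{i^c}$ established by telescoping, and the same application of the gentle measurement lemma to $M=G^2$ followed by normalisation. The only slip is the claimed equality $\widehat{\rho}_i = G_i\rho_i G_i$: the true marginal is $G_i \Tr_{i^c}\!\big[\rho\,(\id_i\otimes\bigotimes_{j\neq i}G_j^2)\big]G_i$, which is not equal to $G_i\rho_i G_i$ but is upper-bounded by it since $\bigotimes_{j\neq i}G_j^2\leq \id$; as only the inequality $\widehat{\rho}_i\leq A_i$ is needed, the argument goes through unchanged (the paper states this step with the correct inequality).
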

\end{boxed}

Analogous bounds hold for smoothing involving a subnormalised state $\rho'_{1\ldots m}$ and  its marginals  (as in Theorem~\ref{tightened_DR_lemma})  but we omit them for brevity. 

\begin{proof}
We generalise the proof of Theorem~\ref{tightened_DR_lemma}. Setting $G_i \coloneqq A_i \# \big((A_i+Q_i)^{-1}\big)$ and $G\coloneqq \bigotimes_i G_i$, by monotonicity of the matrix geometric mean, as in~\eqref{gm_proof_eq2}, we have that $0\leq G_i\leq (A_i+Q_i) \# \big((A_i+Q_i)^{-1}\big) = \id$. By the same reasoning as in~\eqref{gm_proof_eq4}, $\Tr \rho \big(G^2_i \otimes \id_{i^c}\big) = \Tr \rho_i G^2_i \geq 1-\ve_i$, where $\id_{i^c}$ denotes the identity operator over all sub-systems except the $i^\text{th}$. Since $\big[G^2_i \otimes \id_{i^c},\,G^2_j \otimes \id_{j^c}\big] = 0$ for all $i,j$, it is not difficult to verify by induction that
\bb
G^2 = \bigotimes_i G^2_i \geq \id - \sum_i (\id - G^2_i) \otimes \id_{i^c}\, ,
\ee
from which it follows that
\bb
\Tr \rho G^2 \geq 1 - \sum_i (1 - \Tr \rho_i G^2_i) \geq 1 - \sum_i \ve_i\, .
\ee
See also a very similar argument in~\cite[Eq.~(1.3)]{khabbazioskouei_2019}. 

Since $0\leq G_i \leq \id$, by taking tensor products and squaring we infer that $0\leq G^2 \leq \id$. Thus, from the gentle measurement lemma we conclude that the post-measurement state
\bb
\rho' \coloneqq \frac{G\rho G}{\Tr G^2 \rho}
\ee
satisfies that $F(\rho,\rho') \geq 1 - \sumno_i \ve_i$.
Also,
\bb
\rho'_i = \frac{1}{\Tr  G^2\rho}\, G_i \Tr_{i^c}\!\left[ \rho \left(\bigotimes\nolimits_{j:\, j\neq i} G_j^2\right) \right] G_i \leq \frac{1}{\Tr G^2\rho}\, G_i \rho_i G_i \leq \frac{A_i}{\Tr G^2\rho} \leq \frac{A_i}{1-\sum_i \ve_i}\, ,
\ee
concluding the proof.
\end{proof}

Combined with our improvements in the weak/strong converse duality, analogously to Theorem~\ref{cor:wsc}, we obtain the following improvement over the bound between $D^\ve_{\max}$ and $D^{1-\ve}_H$ given in~\cite[Theorem~5]{anshu_2019}.

\begin{corollary}\label{cor:joint_smoothing}
Let ${\rho\vphantom{'}}_{1\ldots m}$ be an $m$-partite state with marginals $\rho_i$. For all states $\sigma_i$ acting on the $i^{\rm th}$ system, and for all $\ve_i \in [0,1)$ such that $\sum_i \ve_i < 1$, there exists a state $\rho'_{1\ldots m}$ such that
\begin{equation}\begin{aligned}
    \frac12\norm{{\rho\vphantom{'}}_{1\ldots m} - \rho'_{1\ldots m}}{1} \leq P({\rho\vphantom{'}}_{1\ldots m}, \rho'_{1\ldots m}) \leq \sqrt{\sum_i \ve_i}
\end{aligned}\end{equation}
and
\begin{equation}\begin{aligned}
    D_{\max}(\rho'_i \| \sigma_i) + \log \frac{1 - \sum_j \ve_j}{\ve_i (1-\ve_i)} \leq D^{1-\ve_i}_H(\rho_i \| \sigma_i) \quad \forall i \in \{1,\ldots,m\}.
\end{aligned}\end{equation}
\end{corollary}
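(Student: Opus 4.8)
The plan is to compose the two results already proved in the excerpt: Proposition~\ref{multi_partite_DR_thm} supplies the joint smoothing $\rho'_{1\ldots m}$, and the lower bound in Lemma~\ref{lem:DH_Dtilde} converts the resulting per-marginal max-relative entropies into hypothesis testing relative entropies. First I would dispose of the degenerate indices: if $\ve_i=0$ then $D^1_H(\rho_i\|\sigma_i)=\infty$ (take $M=0$ in~\eqref{eq:dh_def}), and if $\wt{D}^{\ve_i}_{\max}(\rho_i\|\sigma_i)=\infty$ then $D^{1-\ve_i}_H(\rho_i\|\sigma_i)=\infty$ by Lemma~\ref{lem:DH_Dtilde}; in either case the $i$-th claimed inequality is vacuous and the corresponding system can be left untouched (e.g.\ by setting $G_i=\id$ in the proof of Proposition~\ref{multi_partite_DR_thm}, which only relaxes the remaining estimates). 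So assume $\ve_i\in(0,1)$ and $\wt{D}^{\ve_i}_{\max}(\rho_i\|\sigma_i)<\infty$ for every $i$. Since $\lambda\mapsto\Tr(\rho_i-\lambda\sigma_i)_+$ is continuous and non-increasing and equals $1>\ve_i$ at $\lambda=0$, the infimum defining $\wt{D}^{\ve_i}_{\max}$ in~\eqref{Datta_Leditzky} is attained at some $\lambda_i>0$ with $\log\lambda_i=\wt{D}^{\ve_i}_{\max}(\rho_i\|\sigma_i)$; then $Q_i\coloneqq(\rho_i-\lambda_i\sigma_i)_+$ satisfies $Q_i\geq 0$, $\Tr Q_i\leq\ve_i$, and $\rho_i\leq\lambda_i\sigma_i+Q_i$.

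Next I would invoke Proposition~\ref{multi_partite_DR_thm} with $A_i\coloneqq\lambda_i\sigma_i$ and parameters $\ve_i$ (the hypothesis $\sum_i\ve_i<1$ being the one required there), obtaining a normalised $m$-partite state $\rho'_{1\ldots m}$ with $\rho'_i\leq\frac{\lambda_i}{1-\sum_j\ve_j}\,\sigma_i$ for all $i$ and $F(\rho_{1\ldots m},\rho'_{1\ldots m})\geq 1-\sum_j\ve_j$. The distance part of the corollary is then immediate: $P(\rho_{1\ldots m},\rho'_{1\ldots m})=\sqrt{1-F(\rho_{1\ldots m},\rho'_{1\ldots m})}\leq\sqrt{\sum_j\ve_j}$, and since both states are normalised, the Fuchs--van de Graaf inequality~\eqref{eq:fvdg} gives $\frac12\norm{\rho_{1\ldots m}-\rho'_{1\ldots m}}{1}=\norm{\rho_{1\ldots m}-\rho'_{1\ldots m}}{+}\leq\sqrt{1-F(\rho_{1\ldots m},\rho'_{1\ldots m})}=P(\rho_{1\ldots m},\rho'_{1\ldots m})$.

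For the entropic bound, from $\rho'_i\leq\frac{\lambda_i}{1-\sum_j\ve_j}\sigma_i$ I read off $D_{\max}(\rho'_i\|\sigma_i)\leq\log\lambda_i+\log\frac{1}{1-\sum_j\ve_j}=\wt{D}^{\ve_i}_{\max}(\rho_i\|\sigma_i)+\log\frac{1}{1-\sum_j\ve_j}$, while the first inequality of Lemma~\ref{lem:DH_Dtilde} (applicable since $\ve_i\in(0,1)$) gives $\wt{D}^{\ve_i}_{\max}(\rho_i\|\sigma_i)\leq D^{1-\ve_i}_H(\rho_i\|\sigma_i)-\log\frac{1}{\ve_i(1-\ve_i)}$. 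Chaining these and rearranging yields $D_{\max}(\rho'_i\|\sigma_i)+\log\frac{1-\sum_j\ve_j}{\ve_i(1-\ve_i)}\leq D^{1-\ve_i}_H(\rho_i\|\sigma_i)$ for every $i$, which is the assertion.

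Since both building blocks are already in hand, I do not expect a genuine obstacle; the only points needing care are the attainment of the infimum and the bookkeeping of the degenerate indices handled above, plus the arithmetic of the logarithmic terms in the last step, which I would sanity-check by specialising to $m=1$: there $\log\frac{1-\sum_j\ve_j}{\ve_i(1-\ve_i)}$ collapses to $\log\frac1\ve$, so the statement reduces exactly to the lower bounds $\Dmax{T}{n}[\sqrt{\ve}](\rho\|\sigma)+\log\frac1\ve\leq D^{1-\ve}_H(\rho\|\sigma)$ and $\Dmax{P}{n}[\sqrt{\ve}](\rho\|\sigma)+\log\frac1\ve\leq D^{1-\ve}_H(\rho\|\sigma)$ of Theorem~\ref{cor:wsc}, as it should.
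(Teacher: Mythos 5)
Your proposal is correct and follows essentially the same route as the paper's proof: apply Proposition~\ref{multi_partite_DR_thm} with $A_i=\lambda_i\sigma_i$ coming from a (near-)optimal feasible point of $\wt{D}^{\ve_i}_{\max}(\rho_i\|\sigma_i)$, then chain the resulting bound $D_{\max}(\rho'_i\|\sigma_i)\leq\wt{D}^{\ve_i}_{\max}(\rho_i\|\sigma_i)+\log\frac{1}{1-\sum_j\ve_j}$ with the lower bound of Lemma~\ref{lem:DH_Dtilde}. The extra care you take with degenerate indices and with attainment of the infimum (the paper instead works with an arbitrary feasible solution and optimises at the end) is harmless and only makes the argument more explicit.
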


\begin{proof}
Applying Proposition~\ref{multi_partite_DR_thm} with the choices of $A_i = \lambda_i \sigma_i$, where $\rho_i \leq \lambda_i \sigma_i + Q_i$ is any feasible solution for $\wt{D}^{\ve_i}_{\max}(\rho_i \| \sigma_i)$, guarantees the existence of the suitable state $\rho'$ satisfying
\begin{equation}\begin{aligned}
    D_{\max}(\rho'_i \| \sigma_i) \leq \wt{D}^{\ve_i}_{\max} (\rho_i \| \sigma_i) + \log\frac{1}{1- \sum_i \ve_i}.
\end{aligned}\end{equation}
From Lemma~\ref{lem:DH_Dtilde} we then get that $\wt{D}^{\ve_i}_{\max}(\rho_i \| \sigma_i) + \log\frac{1}{\ve_i (1-\ve_i)} \leq D^{1-\ve_i}_H(\rho_i \| \sigma_i)$ holds for each $i$.
\end{proof}

\subsection{Quantum substate theorem}

Another  result that can be immediately strengthened by using our bounds is the so-called {\em{quantum substate theorem}}~\cite{jain_2002,jain_2012}, which provides an  upper bound on the smooth max-relative entropy in terms of the quantum relative entropy $D(\rho\|\sigma)$.

\begin{boxed}
\begin{corollary}[(Tighter quantum substate theorem)]\label{cor:qsstate}
For all states $\rho, \sigma$, and all $\ve \in (0,1)$ we have
\begin{equation}\begin{aligned}
   \Dmax{T}{n}(\rho\|\sigma) \leq \Dmax{P}{n}(\rho\|\sigma) \leq \frac{D(\rho\|\sigma)+1}{\ve^2} - \log\frac{1}{\ve^2}.
\end{aligned}\end{equation}
\end{corollary}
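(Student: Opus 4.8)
The strategy is to combine the lower bound on $D^{1-\ve}_H$ in terms of $\wt{D}^\ve_{\max}$ (or directly $\Dmax{P}{n}$) from Theorem~\ref{cor:wsc} with a known bound that controls $D^{\ve}_H$ in terms of the Umegaki relative entropy $D$. Concretely, the inequality in~\eqref{eq:wsc_purified} of Theorem~\ref{cor:wsc} reads $\Dmax{P}{n}[\sqrt{\ve}](\rho\|\sigma) + \log\frac1\ve \leq D^{1-\ve}_H(\rho\|\sigma)$, so if we set $\ve \mapsto \delta$ and relabel, we get $\Dmax{P}{n}[\sqrt{\delta}](\rho\|\sigma) \leq D^{1-\delta}_H(\rho\|\sigma) + \log\delta$. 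The second ingredient is the standard upper bound on the hypothesis testing relative entropy by the quantum relative entropy: for all $\eta \in (0,1)$, $D^{\eta}_H(\rho\|\sigma) \leq \frac{1}{1-\eta}\big(D(\rho\|\sigma) + h_2(\eta)\big)$ or, in the cruder form that suffices here, $D^{\eta}_H(\rho\|\sigma) \leq \frac{D(\rho\|\sigma)+1}{1-\eta}$. This follows from data processing of $D$ under the binary measurement $\{M, \id-M\}$ achieving the optimum in $D^\eta_H$, together with $\log\frac{1}{\Tr M\sigma} \le \frac{1}{\Tr M\rho}\big(D(\rho\|\sigma) + h_2(\Tr M\rho)\big)$ and $h_2 \le 1$.

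**Assembling the bound.** Putting the two pieces together with $\eta = 1-\delta$:
\begin{equation}\begin{aligned}
\Dmax{P}{n}[\sqrt\delta](\rho\|\sigma) \leq D^{1-\delta}_H(\rho\|\sigma) + \log\delta \leq \frac{D(\rho\|\sigma)+1}{\delta} + \log\delta.
\end{aligned}\end{equation}
Now substitute $\delta = \ve^2$ (so that the smoothing parameter on the left becomes $\sqrt{\ve^2} = \ve$, matching $\Dmax{P}{n}[\ve]$... wait, we want $\Dmax{P}{n}$ with smoothing $\ve$, so we need $\sqrt\delta = \ve$, i.e.\ $\delta = \ve^2$): this yields
\begin{equation}\begin{aligned}
\Dmax{P}{n}(\rho\|\sigma) = \Dmax{P}{n}[\ve](\rho\|\sigma) \leq \frac{D(\rho\|\sigma)+1}{\ve^2} + \log \ve^2 = \frac{D(\rho\|\sigma)+1}{\ve^2} - \log\frac{1}{\ve^2},
\end{aligned}\end{equation}
which is exactly the claimed inequality. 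The first inequality $\Dmax{T}{n}(\rho\|\sigma) \leq \Dmax{P}{n}(\rho\|\sigma)$ is immediate from~\eqref{standard_vs_infidelity}.

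**Main obstacle.** There is no serious obstacle; the proof is essentially a two-line combination of Theorem~\ref{cor:wsc} with the elementary relative-entropy upper bound on $D^\eta_H$. The only point requiring a little care is the bookkeeping of the smoothing parameter: one must start from the version of the weak-converse lower bound phrased with smoothing $\sqrt{\ve}$ (the purified-distance case of~\eqref{eq:wsc_purified}), relabel the error as $\ve^2$ so that $\sqrt{\ve^2}=\ve$ lands on the desired $\Dmax{P}{n}$, and then track the resulting $\log\ve^2 = -\log\frac{1}{\ve^2}$ term. One should also double-check that the crude form $D^{\eta}_H(\rho\|\sigma) \leq \frac{D(\rho\|\sigma)+1}{1-\eta}$ is being invoked with the binary entropy bounded by $1$; this is where the additive ``$+1$'' in the numerator comes from. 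A subtlety worth a remark is that the bound is only meaningful when $D(\rho\|\sigma) < \infty$, i.e.\ $\supp\rho \subseteq \supp\sigma$, in which case $D^{1-\delta}_H$ is finite for all $\delta > 0$ and the chain of inequalities is valid; otherwise the right-hand side is $+\infty$ and the statement is trivial.
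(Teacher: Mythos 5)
Your proof is correct and follows essentially the same route as the paper: apply the lower bound of Eq.~\eqref{eq:wsc_purified} in Theorem~\ref{cor:wsc} with the smoothing parameter set to $\ve^2$, and combine it with the standard weak-converse bound $D^{1-\ve^2}_H(\rho\|\sigma)\leq \frac{D(\rho\|\sigma)+h(\ve^2)}{\ve^2}\leq\frac{D(\rho\|\sigma)+1}{\ve^2}$ from the quantum Stein's lemma. The parameter bookkeeping and the handling of the $\supp\rho\not\subseteq\supp\sigma$ case are both fine.
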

\end{boxed}
\begin{proof}
From~\eqref{eq:wsc_purified} in Theorem~\ref{cor:wsc} we get that $\Dmax{P}{n}(\rho\|\sigma) + \log\frac{1}{\ve^2} \leq D^{1-\ve^2}_H(\rho\|\sigma)$, and
\begin{equation}\begin{aligned}
    D^{1-\ve^2}_H(\rho\|\sigma) \leq \frac{D_{\MM}(\rho\|\sigma)+h(\ve^2)}{\ve^2} \leq \frac{D(\rho\|\sigma)+1}{\ve^2}
\end{aligned}\end{equation}
is the standard weak converse bound in the quantum Stein's lemma~\cite{hiai_1991}, with $h(\ve^2)$ denoting the binary entropy function.
\end{proof}
This improves over the inequality between $\Dmax{P}{n}$ and $D(\rho\|\sigma)$ that follows from~\cite[Theorem~1]{jain_2012} by an additive term of $\log\frac{1}{\ve^2(1-\ve^2)}$. Note that the quantum substate theorem was originally stated in terms of the `observational divergence' $D_{\rm obs} (\rho \| \sigma) \coloneqq \sup \lset \Tr M \rho \log \frac{\Tr M \rho}{\Tr M \sigma} \bar 0 \leq M \leq \id \rset$. Using the dual form of $\wt{D}^\ve_{\max}$ in~\eqref{eq:Dtilde_dual_tighter} one can also straightforwardly show the relation $\wt{D}^\ve_{\max} (\rho \| \sigma) \leq \frac1\ve D_{\rm obs}(\rho\|\sigma)$, which recovers exactly the statement in~\cite[Theorem~1]{jain_2012} by Corollary~\ref{cor:DR_dmax}. However, this leads to a weaker bound between $\Dmax{P}{n}$ and $D(\rho\|\sigma)$.
Another variant of an upper bound involving $D(\rho\|\sigma)$ was also given in~\cite[Proposition~5]{wang_2019}.

\section{An alternative integral representation of the Umegaki relative entropy by rotating Frenkel's integral}
\label{sec:Frenkel}

Recently, a new integral representation of the Umegaki quantum relative entropy $D(\rho\|\sigma)$ was proposed by Frenkel~\cite{Frenkel2023}, sparking a wave of interest that resulted in new fundamental insights on quantum relative entropies and their properties~\cite{Jencova2023, Hirche2023, continuity-via-integral, Beigi2025}. Here, we show that it is possible to `rotate' Frenkel's integral formula so as to obtain a fundamentally new integral representation of the Umegaki relative entropy $D(\rho\|\sigma)$ that involves the modified version $\wt{D}^\ve_{\max} (\rho \| \sigma)$ of the smooth max-relative entropy (see~\eqref{Datta_Leditzky} for a definition). Our result is as follows.

\begin{boxed}
\begin{thm}
\label{thm:Frenkel-extn}
For all pairs of quantum states $\rho$ and $\sigma$ on the same finite-dimensional system, it holds that
\bb
D(\rho\|\sigma) &= \int_0^1 \dd\e\ \left(\wt{D}^\ve_{\max} (\rho \| \sigma) + (\log e) \left(1 - \exp\left[ - \wt{D}^\ve_{\max} (\rho \| \sigma) \right]\right) \right)_+ \\
&= \int_0^{\frac12 \|\rho-\sigma\|_1} \dd\e\ \left(\wt{D}^\ve_{\max} (\rho \| \sigma) + (\log e) \left(1 - \exp\left[ - \wt{D}^\ve_{\max} (\rho \| \sigma) \right]\right) \right) .
\ee
These identities hold for every choice of logarithm base, provided that $\exp$ is taken to be the inverse of the $\log$ function.
\end{thm}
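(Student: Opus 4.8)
The plan is to obtain the identity by ``rotating'' the recent integral representation of the Umegaki relative entropy due to Frenkel~\cite{Frenkel2023}. Working throughout in natural logarithms (the prefactor $\log e$ then restores an arbitrary base), Frenkel's formula can be brought to the shape
\[
D(\rho\|\sigma) = \int_0^\infty \Tr(\rho - e^t\sigma)_+\,\dd t \;+\; \int_0^\infty \Tr(e^{-t}\sigma - \rho)_+\,\dd t ,
\]
that is, an integral over the logarithmic threshold $t = \log\gamma$ of the hockey-stick divergence $E_\gamma \coloneqq \Tr(\rho - \gamma\sigma)_+$ and of its reversed counterpart $\Tr(\gamma\sigma-\rho)_+$. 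The pivot of the argument is the observation that $\wt D^\ve_{\max}(\rho\|\sigma) = \log\inf\{\gamma\ge 0 : E_\gamma \le \ve\}$ is, by construction, exactly the generalised inverse of the non-increasing map $\gamma\mapsto E_\gamma$; this is what allows us to trade the integral over $t$ (equivalently $\gamma$) for an integral over the error level $\ve$.

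For the first term I would insert the layer-cake identity $\Tr(\rho - e^t\sigma)_+ = \int_0^1 \mathbbm 1\big[\ve < E_{e^t}\big]\,\dd\ve$ and exchange the order of integration by Tonelli. Since $\gamma\mapsto E_\gamma$ is non-increasing, $\ve < E_{e^t} \iff e^t < \inf\{\gamma : E_\gamma \le \ve\} \iff t < \wt D^\ve_{\max}(\rho\|\sigma)$, so the inner integral over $t\ge 0$ of the indicator equals $\big(\wt D^\ve_{\max}(\rho\|\sigma)\big)_+$, producing $\int_0^1 \big(\wt D^\ve_{\max}(\rho\|\sigma)\big)_+\,\dd\ve$. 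For the second term I would first substitute $\mu = e^t$ and use the homogeneity $\Tr(\gamma\sigma - \rho)_+ = \gamma\,\Tr(\sigma - \gamma^{-1}\rho)_+$ to rewrite it as $\int_1^\infty \mu^{-2}\,\Tr(\sigma - \mu\rho)_+\,\dd\mu$; the same layer-cake/Tonelli step, now with the hockey-stick divergence of the reversed pair, turns the inner integral into $\int_1^{\exp[\wt D^\ve_{\max}(\sigma\|\rho)]}\mu^{-2}\,\dd\mu = 1 - \exp[-\wt D^\ve_{\max}(\sigma\|\rho)]$ whenever $\wt D^\ve_{\max}(\sigma\|\rho) > 0$, and $0$ otherwise, yielding $\int_0^{\frac12\|\rho-\sigma\|_1} \big(1 - \exp[-\wt D^\ve_{\max}(\sigma\|\rho)]\big)\,\dd\ve$. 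Restoring the base and adding the two contributions gives the second displayed identity; the first then follows from it because, by the monotonicity of $\wt D^\ve_{\max}$ (Appendix~\ref{app:continuity}), both summands inside the bracket are $\ge 0$ for $\ve < \frac12\|\rho-\sigma\|_1$ — so the positive part is inert there — while both are $\le 0$ for $\ve > \frac12\|\rho-\sigma\|_1$, where the positive part annihilates them.

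The step I expect to require the most care is the legitimacy of the rotation itself. The map $\gamma\mapsto E_\gamma$ is merely non-increasing (piecewise linear), not strictly monotone, so its inverse $\ve\mapsto\wt D^\ve_{\max}$ may have jumps and plateaus; the layer-cake formulation sidesteps this entirely, since it uses only monotonicity — to identify $\{t\ge 0 : E_{e^t} > \ve\}$ as a half-open interval $[0,\wt D^\ve_{\max}(\rho\|\sigma))$ — together with the defining infimum, and never any derivative. One must also check joint measurability (immediate, as every quantity is monotone in each argument) and the behaviour at the endpoints of the $\ve$-integral: as $\ve\to 1$ the bracket tends to $-\infty$ but is killed by $(\cdot)_+$, and as $\ve\to 0$ the integrand stays bounded precisely when $\supp\rho\subseteq\supp\sigma$, the complementary case forcing both sides to $+\infty$ consistently. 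The only ingredient external to the present paper is pinning down the precise form of Frenkel's identity and verifying that it coincides with the two half-line integrals above after the substitution $\gamma = e^t$.
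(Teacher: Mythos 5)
Your proof is essentially the paper's own: the same starting point (Frenkel's formula in the Hirche--Tomamichel form, substitution $\gamma=e^t$) and the same key step of converting each half-line integral of the hockey-stick divergence into an integral over the error level of its generalised inverse --- your layer-cake/Tonelli phrasing is just a more explicit rendering of the paper's ``horizontal strips'' computation, and correctly handles the plateaus, the endpoints, and the support caveat. One remark: what you (and the paper) actually derive is $\int_0^{\frac12\|\rho-\sigma\|_1}\dd\e\,\big[\wt{D}^{\e}_{\max}(\rho\|\sigma)+1-\exp(-\wt{D}^{\e}_{\max}(\sigma\|\rho))\big]$ in natural units, with the \emph{reversed} pair inside the exponential, so your closing assertion that this equals the displayed identity --- which has $\wt{D}^{\e}_{\max}(\rho\|\sigma)$ there --- is an unjustified swap of arguments (a two-dimensional commuting example, e.g.\ $\rho=\mathrm{diag}(0.9,0.1)$, $\sigma=\mathrm{diag}(0.5,0.5)$, shows the two versions genuinely differ); the paper's proof makes the identical leap, so the discrepancy lies in the theorem statement rather than in your derivation.
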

\end{boxed}

\begin{proof}
It is clear that the claim holds for one choice of (positive) logarithm base if and only if it holds for all such choices. Therefore, we can choose without loss of generality the most convenient base to work with, i.e.\ the Euler's number $e$. 

If $\supp \rho \not\subseteq \supp \sigma$ then all expressions diverge --- in the integrals, we have that $\widetilde{D}_{\max}^\e(\rho\|\sigma) = +\infty$ for sufficiently small $\e$. Therefore, from now on we are going to assume that $\supp \rho \subseteq \supp \sigma$. We start from Frenkel's integral formula~\cite{Frenkel2023}, written here \`a la Hirche and Tomamichel~\cite[Corollary~2.3]{Hirche2023}:
\bb
D(\rho\|\sigma) &= \int_1^\infty \frac{\dd \gamma}{\gamma}\, \Tr (\rho - \gamma \sigma)_+ + \int_1^\infty \frac{\dd \gamma}{\gamma^2}\, \Tr (\sigma - \gamma\rho)_+ \\
&= \int_0^\infty \dd t\ \Tr \big(\rho - e^t \sigma\big)_+ + \int_0^1 \dd u\, \Tr \big(\sigma - \tfrac1u \rho\big)_+\, .
\ee
Here, in the second line we introduced two changes of variable: $\gamma = e^t$ for the first integral, and $\gamma = 1/u$ for the second. The function $[0,\infty) \ni t \mapsto \Tr \big(\rho - e^t \sigma\big)_+$ is monotonically non-increasing, it evaluates to $\Tr (\rho-\sigma)_+ = \frac12 \|\rho-\sigma\|_1$ for $t=0$, and it becomes identically $0$ for sufficiently large $t$. The integral $\int_0^\infty \dd t\ \Tr \big(\rho - e^t \sigma\big)_+$ is simply the area under its curve. This area can be computed also in another way, by adding up the areas of horizontal strips instead of vertical ones. For $0\leq \e\leq \frac12 \|\rho-\sigma\|_1$ define
\bb
f(\e) \coloneqq \min\left\{ t:\ \Tr \big(\rho - e^t \sigma\big)_+ \leq \e\right\} = \widetilde{D}^\e_{\max}(\rho\|\sigma) \geq 0\, .
\ee
This is just the inverse function to $t\mapsto \Tr \big(\rho - e^t \sigma\big)_+$. Then clearly
\bb
\int_0^\infty \dd t\ \Tr \big(\rho - e^t \sigma\big)_+ = \int_0^{\frac12 \|\rho-\sigma\|_1} \dd\e\ f(\e) = \int_0^{\frac12 \|\rho-\sigma\|_1} \dd\e\ \widetilde{D}^\e_{\max}(\rho\|\sigma)\, .
\ee

A similar reasoning can be repeated for the second integral. This time $[0,1] \ni u \mapsto \Tr \big(\sigma - \tfrac1u \rho\big)_+$ is monotonically non-decreasing, it evaluates to $\Tr (\sigma - \rho)_+ = \frac12 \|\rho-\sigma\|_1$ for $u=1$, and its inverse function is
\bb
g(\e) \coloneqq&\ \max\big\{ u\in [0,1]:\  \Tr \big(\sigma - \tfrac1u \rho\big)_+ \leq \e\big\} \\
=&\ \left(\min\big\{ s\in [1,\infty]:\  \Tr \big(\sigma - s \rho\big)_+ \leq \e\big\}\right)^{-1} \\
=&\ \exp\left[ - \widetilde{D}^\e_{\max}(\sigma\|\rho) \right] .
\ee
Hence,
\bb
\int_0^1 \dd u\, \Tr \big(\sigma - \tfrac1u \rho\big)_+ &= \int_0^{\frac12 \|\rho-\sigma\|_1} \dd \e\ \left(1-\exp\left[ - \widetilde{D}^\e_{\max}(\sigma\|\rho) \right]\right) .
\ee
The claim is proved once one adds up these two integrals, observing that both integrands are non-negative in the same range $0\leq \e\leq \frac12 \|\rho-\sigma\|_1$.
\end{proof}


\bigskip

\textbf{Acknowledgments.} --- We acknowledge helpful discussions with Mario Berta, Marco Tomamichel, and Mark M.\ Wilde. B.R. acknowledges the support of the Japan Science and Technology Agency (JST) PRESTO grant no.\ JPMJPR25FB. L.L. acknowledges support from MIUR (Ministero dell'Istruzione, dell'Universit\`a e della Ricerca) through the project `Dipartimenti di Eccellenza 2023--2027' of the `Classe di Scienze' department at the Scuola Normale Superiore and from the European Union under the European Research Council (ERC Grant
Agreement No.\ 101165230). N.D. is supported by the Engineering and Physical Sciences Research Council [Grant Ref: EP/Y028732/1].


 \bibliographystyle{apsca}
 \bibliography{main}

\appendix


\counterwithin{thm}{section}

\section[\texorpdfstring{Equivalent definitions of ${\wt{D}^\ve_{\max}}$}{Equivalent definitions of modified max-relative entropy}]{Equivalent definitions of $\boldsymbol{\wt{D}^\ve_{\max}}$}\label{app:definitions}

Recall that we followed~\cite{datta_2015,nuradha_2024} in defining
\begin{equation}\begin{aligned}
    \wt{D}^\ve_{\max} (\rho \| \sigma) &= \log \inf \lset \lambda \bar \rho \leq \lambda \sigma + Q,\; Q \geq 0,\; \Tr Q \leq \ve \rset\\
    &= \log \sup \lset \frac{\Tr W' \rho - \ve}{\Tr W' \sigma} \bar 0 \leq W' \leq \id \rset.
\end{aligned}\end{equation}
This also mirrors the definition of the classical $\ve$-approximate max-divergence in~\cite{dwork_2010}.

In~\cite{hirche_2023}, a quantity that we will call the \emph{max-relative entropy smoothed over unnormalised positive operators} was defined as
\begin{equation}\begin{aligned}
    D_{\max}^{\ve,\, \rm pos} (\rho \| \sigma) \coloneqq \log \inf \lset \lambda \bar Z \leq \lambda \sigma,\; Z \geq 0,\; \Tr( \rho - Z)_+ \leq \ve \rset.
\end{aligned}\end{equation}
The trace of the operators $Z$ here is completely unconstrained.

The \emph{max-relative entropy smoothed over Hermitian operators} made an inadvertent and implicit appearance in~\cite{zhou_2017} (cf.\ the discussion of the smoothing issue in~\cite{hirche_2023}). We can define it as
\begin{equation}\begin{aligned}
    D_{\max}^{\ve,\,\rm Herm,\, \leq} (\rho \| \sigma) \coloneqq \log \inf \lset \lambda \bar X \leq \lambda \sigma,\; X = X^\dagger,\; \Tr X \leq 1,\; \norm{\rho - X}{+} \leq \ve  \rset.
\end{aligned}\end{equation}
This resembles the definition of $\Dmax{T}{sub}$, but the operators $X$ are not required to be positive semidefinite.

\begin{boxed}
\begin{lemma}
For all quantum states and all $\ve \in [0,1]$, it holds that
\begin{equation}\begin{aligned}
    D_{\max}^{\ve,\, \rm pos} (\rho \| \sigma) =  D_{\max}^{\ve,\,\rm Herm,\, \leq} (\rho \| \sigma) =  \wt{D}^\ve_{\max} (\rho \| \sigma).
\end{aligned}\end{equation}
\end{lemma}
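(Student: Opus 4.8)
The plan is to prove both equalities by showing mutual inequalities, using the dual (variational) form of $\wt{D}^\ve_{\max}$ from~\eqref{eq:Dtilde_dual} as the pivot. First I would establish that $\wt{D}^\ve_{\max} \leq D_{\max}^{\ve,\,\rm pos}$ and $\wt{D}^\ve_{\max} \leq D_{\max}^{\ve,\,\rm Herm,\,\leq}$: given any feasible $Z$ (resp.\ $X$) for these problems with $Z \leq \lambda\sigma$ and $\Tr(\rho - Z)_+ \leq \ve$, set $Q \coloneqq (\rho - Z)_+ \geq 0$. Then $\rho \leq Z + (\rho - Z)_+ \leq \lambda\sigma + Q$ with $\Tr Q \leq \ve$, so $\lambda$ is feasible for $\wt{D}^\ve_{\max}$. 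The same argument works verbatim for $X$ Hermitian since positivity of $Z$ was never used in this direction.

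The reverse inequalities are the substantive part. For $D_{\max}^{\ve,\,\rm pos} \leq \wt{D}^\ve_{\max}$: take any feasible $(\lambda, Q)$ for $\wt{D}^\ve_{\max}$, so $\rho \leq \lambda\sigma + Q$, $Q \geq 0$, $\Tr Q \leq \ve$. Set $Z \coloneqq \rho - Q$... but this need not be positive. Instead, I would invoke the Datta--Renner-type construction (Theorem~\ref{tightened_DR_lemma}) with $A = \lambda\sigma$ — actually the cleaner route is: since $\rho \leq \lambda\sigma + Q$, the operator $Z \coloneqq (\rho - \lambda\sigma)$ clipped appropriately, or more simply note that we want $Z \geq 0$ with $Z \leq \lambda\sigma$ and $\Tr(\rho - Z)_+ \leq \ve$. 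Take the spectral decomposition of $\lambda\sigma - \rho$ and let $\Pi$ be the projector onto its non-negative part; then $Z \coloneqq \Pi\rho\Pi + (\text{something})$ — here the operator-monotone/pinching manipulations get delicate. The most robust approach is to go through the common upper bound $\Dmax{T}{sub}$: one shows $D_{\max}^{\ve,\,\rm pos} \leq \Dmax{T}{sub}$ trivially (any subnormalised $\rho'\leq\lambda\sigma$ is a valid $Z$), and $D_{\max}^{\ve,\,\rm Herm,\,\leq} \leq D_{\max}^{\ve,\,\rm pos}$ (any positive $Z$ is Hermitian, and $\norm{\rho-Z}{+} = \Tr(\rho-Z)_+$ when... no, $Z$ may have trace exceeding $1$). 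I think the genuinely clean argument is: prove $D_{\max}^{\ve,\,\rm pos} \leq \wt{D}^\ve_{\max}$ and $D_{\max}^{\ve,\,\rm Herm,\,\leq} \leq \wt{D}^\ve_{\max}$ directly by the clipping construction, using the variational identity $\Tr(\rho - Z)_+ = \min\{\Tr Y : Y \geq 0,\ Y \geq \rho - Z\}$ analogous to~\eqref{eq:trace_variational}.

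Concretely, for $D_{\max}^{\ve,\,\rm Herm,\,\leq} \leq \wt{D}^\ve_{\max}$: given feasible $(\lambda,Q)$ for $\wt{D}^\ve_{\max}$, set $X \coloneqq \rho - Q$. Then $X = X^\dagger$, $X \leq \rho - Q \leq \lambda\sigma$ (using $\rho \leq \lambda\sigma + Q$), $\Tr X = 1 - \Tr Q \leq 1$, and $\norm{\rho - X}{+} = \norm{Q}{+} = \Tr Q \leq \ve$ since $Q \geq 0$. So $\lambda$ is feasible for $D_{\max}^{\ve,\,\rm Herm,\,\leq}$, giving the inequality. For $D_{\max}^{\ve,\,\rm pos} \leq \wt{D}^\ve_{\max}$: the same $X = \rho - Q$ may fail positivity, so I would instead truncate — work in an eigenbasis of $Q$ (WLOG $Q$ and $\rho$ can be simultaneously handled after noting $\rho \leq \lambda\sigma+Q$ is preserved under pinching in the eigenbasis of... no). The honest path: let $Z \coloneqq $ the positive part construction: since $X = \rho - Q$ satisfies $X \leq \lambda\sigma$, and we may replace $X$ by $Z \coloneqq X_+ = (\rho-Q)_+$; then $Z \geq 0$, $Z \geq X$ so — wait, we need $Z \leq \lambda\sigma$. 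Since $X \leq \lambda\sigma$ and $0 \leq \lambda\sigma$, we get $X_+ \leq \lambda\sigma$ by the fact that $X_+$ is the smallest positive operator dominating $X$... that's false in general. The main obstacle, then, is precisely establishing $D_{\max}^{\ve,\,\rm pos} \leq \wt{D}^\ve_{\max}$: one must produce a \emph{positive} $Z$ with $Z \leq \lambda\sigma$ and $\Tr(\rho - Z)_+ \leq \ve$ from the inequality $\rho \leq \lambda\sigma + Q$. I expect the resolution is an application of Theorem~\ref{tightened_DR_lemma} (take $A = \lambda\sigma$, obtaining subnormalised $\rho' \leq \lambda\sigma$ with $\norm{\rho - \rho'}{+} \leq \ve$ — wait, the bound there is $\sqrt{\ve(2-\ve)}$, too weak), or more likely a direct argument: set $Z \coloneqq \lambda\sigma\,\#\,(\lambda\sigma + Q)^{-1/2}\rho\,(\lambda\sigma+Q)^{-1/2}\,\#\,\lambda\sigma$ via the geometric-mean trick of Section~\ref{sec:DR}, which is manifestly positive and $\leq \lambda\sigma$, then bound $\Tr(\rho - Z)_+$. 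Since $\wt{D}^\ve_{\max} = D_{\max}^{\ve,\,\rm Herm,\,\leq} \geq D_{\max}^{\ve,\,\rm pos}$ would follow once we know $D_{\max}^{\ve,\,\rm pos} \geq D_{\max}^{\ve,\,\rm Herm,\,\leq}$ — but positive operators with \emph{unconstrained} trace form a larger feasible set than Hermitian operators with $\Tr X \leq 1$ only after we verify $\norm{\rho - X}{+} \leq \ve \Rightarrow$ some trace control — so in fact I suspect the cleanest logical order is: (i) $\wt{D}^\ve_{\max} \leq D_{\max}^{\ve,\,\rm pos}$ and $\wt{D}^\ve_{\max} \leq D_{\max}^{\ve,\,\rm Herm,\,\leq}$ (clipping, first paragraph); (ii) $D_{\max}^{\ve,\,\rm Herm,\,\leq} \leq \wt{D}^\ve_{\max}$ ($X = \rho - Q$ construction, second paragraph); (iii) $D_{\max}^{\ve,\,\rm pos} \leq D_{\max}^{\ve,\,\rm Herm,\,\leq}$ by replacing a feasible positive $Z$ with $X \coloneqq Z - (\Tr Z - 1)_+ \cdot(\text{correction})$ — but this also needs care. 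Given the delicacy, I would ultimately rely on the geometric-mean construction from the proof of Theorem~\ref{tightened_DR_lemma} to nail $D_{\max}^{\ve,\,\rm pos} \leq \wt{D}^\ve_{\max}$, closing the cycle; that construction is where the real work sits.
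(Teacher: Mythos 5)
Three of the four required inequalities in your proposal are correct and coincide with the paper's argument: the direction $\wt{D}^\ve_{\max} \leq D_{\max}^{\ve,\,\rm pos}$ and $\wt{D}^\ve_{\max} \leq D_{\max}^{\ve,\,\rm Herm,\,\leq}$ via $Q = (\rho - Z)_+$, and the direction $D_{\max}^{\ve,\,\rm Herm,\,\leq} \leq \wt{D}^\ve_{\max}$ via $X = \rho - Q$, are exactly what the paper does. The genuine gap is the remaining inequality $D_{\max}^{\ve,\,\rm pos} \leq \wt{D}^\ve_{\max}$, which you correctly identify as the sticking point but never resolve: the ansatz $Z = (\rho-Q)_+$ fails because, as you note, $X \leq \lambda\sigma$ does not imply $X_+ \leq \lambda\sigma$; routing through $\Dmax{T}{sub}$ gives an inequality in the wrong direction; and both the Datta--Renner lemma and the geometric-mean construction only yield a smoothing parameter of order $\sqrt{\ve}$, which cannot produce the exact equality at the same $\ve$.

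The resolution is a one-line observation that exploits precisely the feature you mention in passing but do not use: the trace of $Z$ in the definition of $D_{\max}^{\ve,\,\rm pos}$ is completely unconstrained, so one may simply take $Z \coloneqq \lambda\sigma$ itself. Given a feasible pair $(\lambda, Q)$ for $\wt{D}^\ve_{\max}$, this $Z$ is positive, satisfies $Z \leq \lambda\sigma$ trivially, and obeys $\rho - Z = \rho - \lambda\sigma \leq Q$ with $Q \geq 0$, whence by the variational form~\eqref{eq:trace_variational} one gets $\Tr(\rho - Z)_+ \leq \Tr Q \leq \ve$. (Equivalently, this is immediate from the original form~\eqref{Datta_Leditzky} of $\wt{D}^\ve_{\max}$ as an inverse of the hockey-stick divergence.) No geometric mean, pinching, or clipping is needed anywhere in the lemma; the entire proof is feasibility bookkeeping of the kind you carried out correctly for the other three inequalities.
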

\end{boxed}
The equivalence of $D_{\max}^{\ve,\, \rm pos}$ with $\wt{D}^\ve_{\max}$ can already be deduced from~\cite[Lemma II.6]{hirche_2023}, which showed that $D_{\max}^{\ve,\, \rm pos}$ corresponds to the inverse function of the hockey-stick divergence $E_\lambda (\rho\|\sigma) = \Tr (\rho-\lambda \sigma)_+$. More generally, all of the quantities can be deduced to be equivalent because they all have been shown to tightly characterise quantum approximate differential privacy~\cite{zhou_2017,hirche_2023,nuradha_2024}. Notwithstanding, we give a concise direct proof.
\begin{proof}
Consider any feasible solution for $\wt{D}^\ve_{\max}$ in the form of $\rho \leq \lambda \sigma + Q$ with $Q \geq 0$ and $\Tr Q \leq \ve$. Then the operator $Z = \lambda\sigma$  satisfies $Z \leq \lambda \sigma$ and $\rho - Z \leq Q$, hence $\Tr (\rho - Z)_+ \leq \Tr Q \leq \ve$, implying that $D_{\max}^{\ve,\, \rm pos} \leq \log \lambda$. Furthermore, the operator $X = \rho - Q$ satisfies $X \leq \lambda \sigma$, $\Tr X \leq \Tr \rho = 1$ and $\norm{\rho - X}{+} = \norm{Q}{+} \leq \ve$, and hence $D_{\max}^{\ve,\,\rm Herm,\, \leq} \leq \log \lambda$.

On the other hand, consider first a feasible solution for $D_{\max}^{\ve,\, \rm pos}$, namely an operator $Z\geq 0$ such that $Z \leq \lambda \sigma$ and $\Tr (\rho - Z)_+ \leq \ve$. Then $\rho \leq Z + (\rho - Z)_+ \leq \lambda \sigma + (\rho - Z)_+$, so defining $Q = (\rho - Z)_+$ we have a feasible solution for $\wt{D}^\ve_{\max}$. Very similarly, any feasible solution $ D_{\max}^{\ve,\,\rm Herm,\, \leq}$, that is, a Hermitian operator $X$ satisfying $X \leq \lambda \sigma$ and $\norm{\rho - X}{+} \leq \ve$, gives a feasible solution for $\wt{D}^\ve_{\max}$ as $\rho \leq \lambda \sigma + (\rho - X)_+$, where $\Tr (\rho - X)_+ \leq \ve$ by definition of the generalised trace distance.
\end{proof}

Note that, in the classical case, all of these smoothing variants reduce to the standard smooth max-relative entropy $\Dmax{T}{sub}$, as per Lemma~\ref{lem:classical_dmax_equivalence}.

One can also define a normalised variant of the Hermitian-smoothed max-relative entropy, $D_{\max}^{\ve,\,\rm Herm,\, =}$, and follow the proof of Lemma~\ref{lem:subnormalised_normalised_main} to show that
\begin{equation}\begin{aligned}
    D_{\max}^{\ve,\,\rm Herm,\, =} (\rho \| \sigma) = \max \left\{  D_{\max}^{\ve,\,\rm Herm,\, \leq} (\rho \| \sigma),\; 0 \right\} = \max \left\{  \wt{D}^\ve_{\max} (\rho \| \sigma),\; 0 \right\}.
\end{aligned}\end{equation}


\section[\texorpdfstring{Continuity and strict monotonicity properties of ${\wt{D}^\ve_{\max}}$}{Continuity and strict monotonicity properties of modified max-relative entropy}]{Continuity and strict monotonicity properties of $\boldsymbol{\wt{D}^\ve_{\max}}$}\label{app:continuity} 

{%
This appendix is devoted to the investigation of some analytical properties of the function $\e\mapsto \widetilde{D}_{\max}^\e(\rho\|\sigma)$ for any two fixed states $\rho,\sigma$. In what follows, we denote with $\Pi_\sigma$ the projector 
onto the support of a state $\sigma$, and with $\{\rho \geq \gamma \sigma\}$ the projector onto the positive part of the operator $\rho - \gamma \sigma$. We start with the following simple result.

\begin{lemma} \label{lem:convergence_projector}
Let $\rho$ and $\sigma$ be two finite-dimensional quantum states. Then
\bb
\lim_{\gamma\to\infty} \{\rho \geq \gamma \sigma\} = \id - \Pi_\sigma.
\ee
\end{lemma}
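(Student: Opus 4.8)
The plan is to rescale the inequality and turn the statement into a perturbation-theoretic one. Since $\gamma>0$, the projector $\{\rho\geq\gamma\sigma\}$ onto the non-negative part of $\rho-\gamma\sigma$ equals the projector onto the non-negative part of $N_\gamma\coloneqq\frac1\gamma\rho-\sigma$, and $N_\gamma\to-\sigma$ in operator norm as $\gamma\to\infty$. Decomposing the Hilbert space as $\supp(\sigma)\oplus\ker(\sigma)$, write $P\coloneqq\id-\Pi_\sigma$ for the (rank $d_0\coloneqq\dim\ker\sigma$) projector onto $\ker\sigma$, and let $\mu>0$ be the smallest nonzero eigenvalue of $\sigma$. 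The naive hope that the spectral projector onto $[0,\infty)$ is continuous at $-\sigma$ fails, since $0\in\operatorname{spec}(-\sigma)$; the core of the argument is therefore to show that for large $\gamma$ the cluster of $d_0$ eigenvalues of $N_\gamma$ near $0$ actually sits on the non-negative side, separated by a gap from the genuinely negative eigenvalues.

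First I would pin down the eigenvalue structure of $N_\gamma$ for $\gamma>2/\mu$. By Weyl's perturbation inequality, every eigenvalue of $N_\gamma$ lies within $\frac1\gamma\|\rho\|_\infty\leq\frac1\gamma$ of $\operatorname{spec}(-\sigma)=\{0\}\cup\{-\mu_i\}$, so $d_1\coloneqq\dim\supp\sigma$ of them lie below $-\mu+\frac1\gamma<-\frac\mu2$ while the remaining $d_0$ lie in $[-\frac1\gamma,\frac1\gamma]$. On the other hand, applying the Courant--Fischer min-max principle with the test subspace $\ker\sigma$, on which $\langle v,N_\gamma v\rangle=\frac1\gamma\langle v,\rho v\rangle\geq 0$, shows that $N_\gamma$ has at least $d_0$ non-negative eigenvalues. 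Combining the two facts forces exactly $d_0$ eigenvalues of $N_\gamma$ to be $\geq 0$, all lying in $[0,\frac1\gamma]$, with the other $d_1$ lying below $-\frac\mu2$. Writing $Q_\gamma\coloneqq\{\rho\geq\gamma\sigma\}=\{N_\gamma\geq 0\}$, this means $Q_\gamma$ has rank $d_0=\operatorname{rank}P$, that $0\leq Q_\gamma N_\gamma Q_\gamma\leq\frac1\gamma Q_\gamma$, and that $(\id-Q_\gamma)N_\gamma(\id-Q_\gamma)\leq-\frac\mu2(\id-Q_\gamma)$.

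Next I would bound $\|(\id-P)Q_\gamma\|$ and $\|(\id-Q_\gamma)P\|$. For a unit vector $v\in\ran Q_\gamma$, splitting $v=Pv+\Pi_\sigma v$, using $\langle Pv,N_\gamma Pv\rangle\leq\frac1\gamma$, $\langle\Pi_\sigma v,N_\gamma\Pi_\sigma v\rangle\leq(\frac1\gamma-\mu)\|\Pi_\sigma v\|^2$, Cauchy--Schwarz on the cross term $\frac1\gamma\langle Pv,\rho\,\Pi_\sigma v\rangle$, and the fact that $0\leq\langle v,N_\gamma v\rangle$, one gets $\frac\mu2\|\Pi_\sigma v\|^2\leq\frac3\gamma$, hence $\|(\id-P)Q_\gamma\|^2\leq\frac{6}{\mu\gamma}$; symmetrically, for a unit $w\in\ker\sigma$, splitting $w$ along $\ran Q_\gamma\oplus(\ran Q_\gamma)^\perp$ and using the spectral bounds above inside $0\leq\langle w,N_\gamma w\rangle$ gives $\|(\id-Q_\gamma)w\|^2\leq\frac{2}{\mu\gamma}$. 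Finally, from $Q_\gamma-P=Q_\gamma(\id-P)-(\id-Q_\gamma)P$ together with $\|Q_\gamma(\id-P)\|=\|(\id-P)Q_\gamma\|$ (taking adjoints), we conclude $\|Q_\gamma-P\|\to 0$, that is, $\{\rho\geq\gamma\sigma\}\to\id-\Pi_\sigma$. As flagged above, the one delicate point is the discontinuity of the $[0,\infty)$-spectral projector at $-\sigma$; it is circumvented precisely by the interplay of the Weyl bound and the Courant--Fischer count, which forces the near-zero eigenvalue cluster of $N_\gamma$ to be non-negative.
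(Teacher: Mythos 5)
Your proof is correct, and it takes a genuinely different route from the paper's. The paper works with $\sigma - x\rho$ for $x = 1/\gamma \to 0^+$ and invokes Kato's analytic perturbation theory to get continuity of the total spectral projector associated with the eigenvalue cluster near $0$, then uses Weyl's \emph{monotonicity} theorem ($\sigma - x\rho \leq \sigma$ eigenvalue-wise) to conclude that for small $x>0$ this cluster sits on the non-positive side, so that the total projector coincides with $\{\sigma - x\rho \leq 0\}$. You instead give a self-contained, quantitative argument: Weyl's perturbation \emph{inequality} isolates a cluster of $d_0$ eigenvalues of $N_\gamma = \tfrac1\gamma\rho - \sigma$ in $[-\tfrac1\gamma,\tfrac1\gamma]$ separated by a gap of order $\mu$ from the rest, and the Courant--Fischer principle with test subspace $\ker\sigma$ (on which $N_\gamma$ acts as $\tfrac1\gamma P\rho P \geq 0$) forces that cluster onto the non-negative side --- playing the same role as the paper's appeal to Weyl monotonicity. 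Your subsequent direct estimates on $\|(\id-P)Q_\gamma\|$ and $\|(\id-Q_\gamma)P\|$ (using that the cross term vanishes because $Q_\gamma$ is a spectral projector of $N_\gamma$) replace the black-box continuity statement from Kato and have the added benefit of an explicit $O(1/\sqrt{\mu\gamma})$ convergence rate, which the paper's qualitative argument does not provide. The trade-off is length: the paper's proof is shorter because it outsources the analytic content to a standard reference, while yours is entirely elementary. Both correctly identify and circumvent the same delicate point, namely that the spectral projector onto $[0,\infty)$ is discontinuous at $-\sigma$ because $0$ is an eigenvalue of $-\sigma$.
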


\begin{proof}
If $\Pi_\sigma = \id$, so that $\sigma$ has no zero eigenvalues, then $\rho \leq \gamma \sigma$ for all sufficiently large $\gamma$, and there is nothing to prove. Therefore, let us assume that $\sigma$ has at least one zero eigenvalue. 

Setting $x\coloneqq 1/\gamma$, we have $\{\rho \geq \gamma \sigma\} = \{ \sigma - x \rho \leq 0\}$. Imagine sorting the spectra of $\sigma$ and $\sigma - x \rho$ in (say) non-increasing order, repeating the eigenvalues according to their multiplicities. By~\cite[Theorem~5.1 and Eq.~(5.2)]{KATO}, every eigenvalue of $\sigma - x \rho$ must converge to the corresponding eigenvalue of $\sigma$ as $x\to 0$ (note that~\cite[Eq.~(5.2)]{KATO} prohibits changes in multiplicities). Now, consider an interval $I = [-\delta,\delta]$, so that $\delta$ is strictly smaller than the smallest non-zero eigenvalue of $\sigma$. Invoking again~\cite[Theorem~5.1]{KATO}, we see that the total projector $P_I(x)$ onto the span of all the eigenspaces of $\sigma - x\rho$ whose eigenvalues lie in $I$ is a continuous function of $x$ at $x=0$. That is, $P_I(0) = \lim_{x\to 0^+} P_I(x) = \id - \Pi_\sigma$. By Weyl's monotonicity theorem~\cite[Corollary~III.2.3]{bhatia_1996}, and sorting the spectra in non-increasing order as before, every eigenvalue of $\sigma - x\rho$ must be smaller than or equal to the corresponding eigenvalue of $\sigma$, when $x>0$. This means that $P_I(x) = \{\sigma - x\rho \leq 0\}$ for all sufficiently small $x>0$. Therefore, $\lim_{x\to 0^+} \{\sigma - x\rho \leq 0\} = \lim_{x\to 0^+} P_I(x) = P_I(0) = \id - \Pi_\sigma$.
\end{proof}

We are now ready to prove the following.

\begin{boxed}
\begin{lemma} 
Let $\rho$ and $\sigma$ be two finite-dimensional quantum states. Then:
\begin{enumerate}[(a)]
\item If $\big[\rho,\Pi_\sigma\big] \neq 0$, the function $\gamma \mapsto \Tr (\rho - \gamma \sigma)_+$ is continuous and strictly decreasing on the whole half-positive real line $\gamma\in [0,\infty)$; furthermore, it holds that
\bb
\lim_{\gamma\to\infty} \Tr (\rho - \gamma\sigma)_+ = 1 - \Tr \rho \Pi_\sigma\, .
\ee

\item If $\big[\rho,\Pi_\sigma\big] = 0$, the function $\gamma \mapsto \Tr (\rho - \gamma \sigma)_+$ is continuous, strictly decreasing on the interval $\gamma\in \big[0,\,\exp[D_{\max}(\rho \Pi_\sigma \|\sigma)] \big]$, and equal to $1 - \Tr \rho \Pi_\sigma$ for all $\gamma \geq \exp[D_{\max}(\rho \Pi_\sigma \|\sigma)]$.
\end{enumerate}
Consequently:
\begin{enumerate}[(a)] \setcounter{enumi}{2}
\item If $\big[\rho,\Pi_\sigma\big] \neq 0$, the function $\e \mapsto \widetilde{D}_{\max}^\e(\rho\|\sigma) \in [0,+\infty]$ is equal to $+\infty$ for $\e\in \big[0,\, 1 - \Tr \rho \Pi_\sigma \big]$, continuous and strictly decreasing for $\e\in \big(1 - \Tr \rho \Pi_\sigma,\, 1 \big]$, and such that
\bb
\lim_{\e \to \left(1 - \Tr \rho \Pi_\sigma\right)^+} \widetilde{D}_{\max}^\e(\rho\|\sigma) = +\infty\, .
\ee

\item If $\big[\rho,\Pi_\sigma\big] = 0$, the function $\e \mapsto \widetilde{D}_{\max}^\e(\rho\|\sigma)\in [0,+\infty]$ is equal to $+\infty$ if $\e\in \big[0,\, 1 - \Tr \rho \Pi_\sigma \big)$ (this latter interval is understood to be empty if $\supp(\rho)\subseteq \supp(\sigma)$), continuous and strictly decreasing for $\e\in \big[1 - \Tr \rho \Pi_\sigma,\, 1 \big]$, and such that
\bb
\lim_{\e \to \left(1 - \Tr \rho \Pi_\sigma\right)^+} \widetilde{D}_{\max}^\e(\rho\|\sigma) = D_{\max}\big(\rho \Pi_\sigma\big\|\sigma\big)\, .
\ee

\item In particular, $\e \mapsto \widetilde{D}_{\max}^\e(\rho\|\sigma) \in [0,+\infty]$ is always continuous from the right.
 
\end{enumerate}
\end{lemma}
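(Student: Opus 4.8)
The plan is to reduce everything to the behaviour of the single scalar function $\phi(\gamma)\coloneqq\Tr(\rho-\gamma\sigma)_+$ on $\gamma\in[0,\infty)$, and then transfer the conclusions to $\e\mapsto\widetilde{D}^\e_{\max}(\rho\|\sigma)$ by inversion, using that $\widetilde{D}^\e_{\max}(\rho\|\sigma)=\log\inf\{\gamma\ge 0:\phi(\gamma)\le\e\}$ with the convention $\inf\emptyset=\infty$. First I would collect the structural facts about $\phi$: via the variational formula $\Tr(\cdot)_+=\max_{0\le M\le\id}\Tr[M\,\cdot\,]$, $\phi$ is a supremum of functions affine in $\gamma$, hence convex; it is continuous ($\gamma\mapsto\rho-\gamma\sigma$ is affine and $X\mapsto\Tr X_+$ is continuous), non-increasing, and $\phi(0)=\Tr\rho=1$. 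Next I would pin down its infimum: plugging $M=\id-\Pi_\sigma$ into the variational formula gives $\phi(\gamma)\ge\Tr[(\rho-\gamma\sigma)(\id-\Pi_\sigma)]=1-\Tr\rho\Pi_\sigma$ for all $\gamma$; conversely $\phi(\gamma)=\Tr[(\rho-\gamma\sigma)\{\rho\ge\gamma\sigma\}]\le\Tr[\rho\{\rho\ge\gamma\sigma\}]$, and letting $\gamma\to\infty$ together with Lemma~\ref{lem:convergence_projector} yields $\limsup_{\gamma\to\infty}\phi(\gamma)\le\Tr[\rho(\id-\Pi_\sigma)]=1-\Tr\rho\Pi_\sigma$. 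Hence $\lim_{\gamma\to\infty}\phi(\gamma)=\inf_{\gamma\ge 0}\phi(\gamma)=1-\Tr\rho\Pi_\sigma\eqqcolon m$. A continuous convex non-increasing function with finite infimum $m$ is strictly decreasing on $[0,\gamma^\ast)$ and constantly equal to $m$ on $[\gamma^\ast,\infty)$, where $\gamma^\ast\coloneqq\sup\{\gamma:\phi(\gamma)>m\}\in(0,\infty]$, and $\gamma^\ast<\infty$ exactly when the infimum is attained. So the whole problem reduces to deciding, in each of the two cases, whether $\gamma^\ast$ is finite and, if so, computing it.

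Case (b), $[\rho,\Pi_\sigma]=0$, is the easy one. Decomposing the space as $\supp\sigma\oplus\ker\sigma$, both $\rho$ and $\sigma$ are block diagonal: $\rho=\rho_S\oplus\rho_{\bar S}$ and $\sigma=\sigma_S\oplus 0$ with $\sigma_S>0$. Then $\phi(\gamma)=\Tr(\rho_S-\gamma\sigma_S)_+ +\Tr\rho_{\bar S}=\Tr(\rho_S-\gamma\sigma_S)_+ +(1-\Tr\rho\Pi_\sigma)$, and because $\sigma_S$ is invertible on $\supp\sigma$, the first summand is zero precisely when $\rho_S\le\gamma\sigma_S$, i.e.\ when $\gamma\ge\exp D_{\max}(\rho_S\|\sigma_S)=\exp D_{\max}(\rho\Pi_\sigma\|\sigma)$. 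Hence $\gamma^\ast=\exp D_{\max}(\rho\Pi_\sigma\|\sigma)<\infty$, which is exactly statement (b) (this also covers the degenerate subcase $\rho\perp\supp\sigma$, where $\gamma^\ast=0$).

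The crux is case (a), $[\rho,\Pi_\sigma]\ne 0$: here I must show $\gamma^\ast=\infty$, i.e.\ $\phi(\gamma)>m$ for every finite $\gamma$. The difficulty is that the lower bound $\phi(\gamma)\ge\Tr[(\rho-\gamma\sigma)(\id-\Pi_\sigma)]=m$ uses the test $M=\id-\Pi_\sigma$, which is not strict; one has to perturb $M$, staying inside $0\le M\le\id$, in a direction coupling $\ker\sigma$ to $\supp\sigma$ along the non-vanishing off-diagonal block $\Pi_\sigma\rho(\id-\Pi_\sigma)$. Concretely I would pick a unit vector $\ket u\in\ker\sigma$ with $\Pi_\sigma\rho\ket u\ne 0$, set $\ket x\coloneqq\Pi_\sigma\rho\ket u/\|\Pi_\sigma\rho\ket u\|\in\supp\sigma$ (note $\braket{u|x}=0$ and $\braket{u|\rho|x}=\|\Pi_\sigma\rho\ket u\|\eqqcolon c>0$), and consider the family of orthogonal projectors $M_\theta\coloneqq\big((\id-\Pi_\sigma)-\ketbra u\big)+\ketbra{v_\theta}$ with $\ket{v_\theta}\coloneqq\cos\theta\ket u+\sin\theta\ket x$; these satisfy $0\le M_\theta\le\id$ and $M_0=\id-\Pi_\sigma$. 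Using $\sigma\ket u=0$, a direct computation gives, for each fixed $\gamma$, that $\Tr[M_\theta(\rho-\gamma\sigma)]=m+2c\sin\theta\cos\theta+O_\gamma(\theta^2)$ as $\theta\to 0$. The decisive point is that the coefficient $2c$ of the term linear in $\theta$ is a strictly positive constant \emph{independent of $\gamma$}, so for every finite $\gamma$ there is a small $\theta>0$ with $\Tr[M_\theta(\rho-\gamma\sigma)]>m$, hence $\phi(\gamma)\ge\Tr[M_\theta(\rho-\gamma\sigma)]>m$. Thus $\gamma^\ast=\infty$ and $\phi$ is strictly decreasing on all of $[0,\infty)$, which is statement (a).

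Finally I would invert $\phi$ to read off (c), (d), (e). In case (a), $\phi:[0,\infty)\to(m,1]$ is a continuous strictly decreasing bijection, so $\widetilde{D}^\e_{\max}(\rho\|\sigma)=\log\phi^{-1}(\e)$ is continuous and strictly decreasing on $\e\in(m,1]$, tends to $+\infty$ as $\e\to m^+$ (since $\phi^{-1}(\e)\to\infty$), and equals $\log\inf\emptyset=+\infty$ for $\e\le m$; this is (c). In case (b), $\phi$ decreases strictly from $1$ at $\gamma=0$ to $m$ at $\gamma=\gamma^\ast$ and is then constant, so $\widetilde{D}^\e_{\max}=\log\phi^{-1}(\e)$ is continuous and strictly decreasing on $(m,1]$, equals $\log\inf[\gamma^\ast,\infty)=\log\gamma^\ast=D_{\max}(\rho\Pi_\sigma\|\sigma)$ at $\e=m$, and is $+\infty$ for $\e<m$; this is (d). Statement (e), right-continuity at every $\e$, then follows either directly from (c)--(d), or more cheaply from the general observation (used already in the main text) that $\e\mapsto\widetilde{D}^\e_{\max}$ is an infimum of continuous functions, hence lower semi-continuous, and is non-increasing, and any non-increasing lower semi-continuous function on an interval is continuous from the right. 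The main obstacle, as flagged above, is the construction in case (a): producing a one-parameter family of valid tests that beats the naive bound $m$ to first order uniformly in $\gamma$; the rotation family $M_\theta$ is the device that achieves this, and the remainder is bookkeeping plus routine handling of the degenerate situations ($\Pi_\sigma=\id$; $\supp\rho\subseteq\supp\sigma$, where $m=0$; the endpoints $\e\in\{0,1\}$).
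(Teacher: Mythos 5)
Your proof is correct, and the overall skeleton (establish monotonicity, convexity, continuity and the limit $\lim_{\gamma\to\infty}\Tr(\rho-\gamma\sigma)_+=1-\Tr\rho\Pi_\sigma$ of the hockey-stick curve, handle case (b) by block-diagonalising with respect to $\supp\sigma\oplus\ker\sigma$, then read off (c)--(e) by generalised inversion) matches the paper's. The genuine divergence is at the crux of case (a). The paper argues by contradiction and reuses Lemma~\ref{lem:convergence_projector}: if $E_\gamma$ ever stops decreasing then $\Tr\sigma\{\rho\geq\gamma'\sigma\}=0$ for all large $\gamma'$, whence $[\rho,\{\rho\geq\gamma'\sigma\}]=[\rho-\gamma'\sigma,\{\rho\geq\gamma'\sigma\}]=0$, and letting $\gamma'\to\infty$ forces $[\rho,\Pi_\sigma]=0$. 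You instead give a direct, constructive witness: the rotated projector family $M_\theta$ built from a vector $\ket{u}\in\ker\sigma$ with $\Pi_\sigma\rho\ket{u}\neq 0$, for which $\Tr[M_\theta(\rho-\gamma\sigma)]=m+2c\theta+O_\gamma(\theta^2)$ with $c=\|\Pi_\sigma\rho\ket{u}\|>0$ independent of $\gamma$ (the key point being that $\sigma\ket{u}=0$ kills the linear-in-$\theta$ contribution of the $-\gamma\sigma$ term), so $\phi(\gamma)>m$ for every finite $\gamma$. This is a valid and arguably more informative argument, as it exhibits the explicit test beating the naive bound and quantifies the gap to first order; the paper's route is shorter but relies a second time on the eigenprojector-convergence lemma. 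A second, smaller difference: you deduce the ``strictly decreasing until $\gamma^\ast$, then constant'' dichotomy from convexity of $\gamma\mapsto\Tr(\rho-\gamma\sigma)_+$, whereas the paper derives it directly from the variational formula $\Tr(X)_+=\max_{0\leq M\leq\id}\Tr MX$ evaluated at $M=\{\rho\geq\gamma'\sigma\}$; both are sound, and your convexity observation slightly streamlines that step.
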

\end{boxed}

\begin{proof}
The continuity of the function $\gamma\mapsto \Tr (\rho - \gamma \sigma)_+ \eqqcolon E_\gamma(\rho\|\sigma)$ follows immediately from its convexity~\cite[Lemma~2.1(9)]{Hirche2023}. The fact that this function is monotonically non-increasing is also well known~\cite[eq.~between Eqs.~(17)--(18)]{nagaoka_2007},
but it is helpful to reproduce the proof of this property here. Assume that $\gamma \leq \gamma'$. Then, as a consequence of the variational form of $\Tr(\cdot)_+$~(Eq.~\eqref{eq:trace_variational}),
\bb
E_\gamma(\rho\|\sigma) \geq \Tr (\rho - \gamma \sigma) \{\rho \geq\gamma' \sigma\} = E_{\gamma'}(\rho\|\sigma) + (\gamma'-\gamma) \Tr \sigma \{\rho \geq \gamma'\sigma\} \geq E_{\gamma'}(\rho\|\sigma)\, .
\label{eq:E_gamma_decreasing}
\ee
If $\gamma < \gamma'$, equality is only possible when $\Tr \sigma \{\rho \geq \gamma'\sigma\} = 0$. But then, for any $\gamma''\geq \gamma'$,
\bb
E_{\gamma''}(\rho\|\sigma) \geq \Tr (\rho - \gamma''\sigma) \{\rho \geq \gamma'\sigma\} = \Tr \rho \{\rho \geq \gamma'\sigma\} = E_{\gamma'}(\rho\|\sigma)\, ,
\ee
i.e.\ the function cannot decrease any more. When this happens, the function has reached the value
\bb
\lim_{\gamma\to\infty} E_\gamma(\rho\|\sigma) = \Tr \rho (\id - \Pi_\sigma) = 1 - \Tr \rho \Pi_\sigma\, .
\label{eq:limit_E_gamma}
\ee
To see why the first equality holds, observe that, on the one hand,
\bb
E_\gamma(\rho\|\sigma) = \sup_{0\leq Q\leq \id} \Tr (\rho - \gamma \sigma) Q \geq \Tr (\rho - \gamma \sigma) (\id - \Pi_\sigma) = 1 - \Tr \rho \Pi_\sigma\, ,
\ee
while, on the other,
\bb
E_\gamma(\rho\|\sigma) = \Tr (\rho - \gamma \sigma) \{\rho \geq \gamma \sigma\} \leq \Tr \rho \{\rho \geq \gamma \sigma\}\, ,
\ee
and the rightmost side tends to $1 - \Tr \rho \Pi_\sigma$ as $\gamma\to \infty$, due to Lemma~\ref{lem:convergence_projector}. This proves~\eqref{eq:limit_E_gamma}.

We saw that, when $\gamma < \gamma'$, equality in~\eqref{eq:E_gamma_decreasing} can hold only if $\Tr \sigma \{\rho \geq \gamma'\sigma\} = 0$. Since $\sigma$ and $\{\rho \geq \gamma'\sigma\}$ are positive semi-definite, this can happen if and only if their supports are orthogonal, entailing in particular that $[\sigma, \{\rho \geq \gamma'\sigma\}] = 0$. From this we deduce that
\bb
0 = \big[ \rho - \gamma'\sigma,\, \{\rho \geq \gamma'\sigma\} \big] = \big[ \rho,\, \{\rho \geq \gamma'\sigma\} \big] \tends{}{\gamma'\to \infty} [ \rho,\, \id - \Pi_\sigma ] = - [\rho,\Pi_\sigma]\, .
\ee

Hence, if $[\rho,\Pi_\sigma] \neq 0$, the function $\gamma \mapsto E_\gamma(\rho\|\sigma)$ must be strictly decreasing on the whole half-positive real line, proving~(a). If, on the contrary, $[\rho,\Pi_\sigma] = 0$, then we can decompose $\rho = \rho \Pi_\sigma + \rho (\id - \Pi_\sigma)$, where both addends are (Hermitian and) positive semi-definite. Hence, for an arbitrary $\gamma\geq 0$,
\bb
E_\gamma(\rho\|\sigma) = \Tr (\rho - \gamma \sigma)_+ = \Tr \rho (\id - \Pi_\sigma) + \Tr (\rho \Pi_\sigma - \gamma \sigma)_+ = 1 - \Tr \rho \Pi_\sigma + \Tr (\rho \Pi_\sigma - \gamma \sigma)_+\, ;
\ee
this is equal to $1 - \Tr \rho \Pi_\sigma$ if and only if $\rho \Pi_\sigma \leq \gamma \sigma$, i.e.\ if and only if $\gamma \geq \exp\left[D_{\max}(\rho \Pi_\sigma\|\sigma)\right]$. This completes the proof of~(b).

Claims~(c)--(e) follow from easily from~(a)--(b), once one remembers Eq.~\eqref{Datta_Leditzky}, which lets us express the function $\e \mapsto \widetilde{D}_{\max}^\e(\rho\|\sigma) = \log \inf\lset \gamma\geq 0 \bar E_\gamma(\rho\|\sigma)\leq \e \rset$ as a generalised inverse of the function $\gamma \mapsto E_\gamma(\rho\|\sigma)$.
\end{proof}
}%

\section{A gentler measurement lemma}\label{app:gentle}

{
\renewcommand{\thethm}{\ref{tighter_gentle_measurement_lemma}}
\begin{boxed}
\begin{lemma}
Let $M\in [0,\id]$ be a measurement operator and $\rho$ be an arbitrary state on the same system. If $\Tr M\rho \geq 1-\e$ for some $\e\in [0,1]$, then
\begin{align}
F\Big(\rho,\, \sqrt{M}\rho\sqrt{M}\Big) &\geq (1-\e)^2\, , \label{gentle_measurement_fidelity_app} \\
F\left(\rho,\, \frac{\sqrt{M}\rho \sqrt{M}}{\Tr M\rho}\right) &\geq 1-\e\, , \label{gentle_measurement_fidelity_norm_app} \\
\frac12 \left\|\rho - \frac{\sqrt{M}\rho \sqrt{M}}{\Tr M\rho} \right\|_1 &\leq \sqrt\e\, , \label{gentle_measurement_trace_distance_app} \\
\frac12 \left\|\rho - \sqrt{M}\rho\sqrt{M}\right\|_1 &\leq \begin{cases} \sqrt{\e\left(1-\frac{3\e}{4}\right)} & \text{\rm if $\e\leq 2/3$,} \\[1ex] 1/\sqrt3 & \text{\rm if $\e>2/3$} \end{cases}{\color{green}}\leq \sqrt{\ve} \,, \label{gentle_measurement_unnormalised_trace_distance_app} \\
 \left\|\rho - \sqrt{M}\rho\sqrt{M} \hspace{1pt}\right\|_+ &\leq \sqrt{\e\left(1-\frac{3\e}{4}\right)} + \frac\e2 \leq \sqrt{\ve(2-\ve)} \, . \label{gentle_measurement_gen_trace_distance_app}
\end{align}
All of the primary bounds are tight for all $\e\in [0,1]$.
\end{lemma}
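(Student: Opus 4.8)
The plan is to split the five bounds into the three fidelity-type estimates~\eqref{gentle_measurement_fidelity_app}--\eqref{gentle_measurement_trace_distance_app}, which are soft, and the two (sub)normalised trace-distance estimates~\eqref{gentle_measurement_unnormalised_trace_distance_app}--\eqref{gentle_measurement_gen_trace_distance_app}, where the sharp constants are the real content. First I would dispatch~\eqref{gentle_measurement_fidelity_app} by purification: if $\ket{\psi}_{AB}$ purifies $\rho_A$, then $(\sqrt{M}_A\otimes\id_B)\ket{\psi}$ is a (subnormalised) purification of $\sqrt{M}\rho\sqrt{M}$, so monotonicity of the fidelity under the partial trace gives $F(\rho,\sqrt{M}\rho\sqrt{M})\geq |\bra{\psi}(\sqrt{M}\otimes\id)\ket{\psi}|^2 = (\Tr\rho\sqrt{M})^2$; since $0\leq M\leq\id$ forces $\sqrt{M}\geq M$, this is $\geq(\Tr\rho M)^2\geq(1-\e)^2$. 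Then~\eqref{gentle_measurement_fidelity_norm_app} is immediate from the degree-one homogeneity of $F$ in its second argument, $F(\rho,\sqrt{M}\rho\sqrt{M}/\Tr M\rho)=F(\rho,\sqrt{M}\rho\sqrt{M})/\Tr M\rho\geq\Tr M\rho\geq 1-\e$, and~\eqref{gentle_measurement_trace_distance_app} follows from~\eqref{gentle_measurement_fidelity_norm_app} via the Fuchs--van de Graaf inequality~\eqref{eq:fvdg}.

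The heart of the argument is~\eqref{gentle_measurement_unnormalised_trace_distance_app}--\eqref{gentle_measurement_gen_trace_distance_app}. Writing $P\coloneqq\sqrt{M}$, the key observation is the identity
\begin{equation*}
\rho - P\rho P = \tfrac12\left[(\id-P)\rho(\id+P) + (\id+P)\rho(\id-P)\right],
\end{equation*}
which, combined with the triangle inequality and the Schatten--Hölder bound $\norm{XY}{1}\leq\norm{X}{2}\norm{Y}{2}$, yields
\begin{equation*}
\norm{\rho-P\rho P}{1} \leq \norm{(\id-P)\rho(\id+P)}{1} \leq \sqrt{\Tr\rho(\id-P)^2}\,\sqrt{\Tr\rho(\id+P)^2}.
\end{equation*}
Setting $\delta\coloneqq\Tr M\rho=\Tr\rho P^2$ and $a\coloneqq\Tr\rho P$, one has $\Tr\rho(\id\mp P)^2 = 1\mp 2a+\delta$, so the product above equals $(1+\delta)^2-4a^2$; using $a\geq\delta$ (again $\sqrt{M}\geq M$) and writing $x\coloneqq 1-\delta\in[0,\e]$ gives $(1+\delta)^2-4a^2\leq 4x-3x^2$, hence $\tfrac12\norm{\rho-P\rho P}{1}\leq\sqrt{x(1-\tfrac34 x)}$. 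Since $x\mapsto x-\tfrac34 x^2$ increases on $[0,2/3]$ and decreases afterwards, its maximum over $[0,\e]$ is $\e(1-\tfrac34\e)$ when $\e\leq 2/3$ and $1/3$ otherwise, which is exactly~\eqref{gentle_measurement_unnormalised_trace_distance_app}. For~\eqref{gentle_measurement_gen_trace_distance_app} I would then use $\norm{\rho-P\rho P}{+} = \tfrac12\norm{\rho-P\rho P}{1}+\tfrac12\Tr(\rho-P\rho P)\leq\sqrt{x(1-\tfrac34 x)}+\tfrac{x}{2}$, check by a short derivative computation that $x\mapsto\sqrt{x(1-\tfrac34 x)}+\tfrac{x}{2}$ is non-decreasing on $[0,1]$ (its derivative vanishes exactly at $x=1$) so the bound is largest at $x=\e$, and note that $\sqrt{\e(1-\tfrac34\e)}+\tfrac\e2\leq\sqrt{\e(2-\e)}$ reduces after squaring to $(1-\e)^2\geq 0$.

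For tightness, for each $\e\in[0,1]$ I would exhibit the qubit example $\rho=\ketbra{\psi}$ with $\ket{\psi}=\sqrt{1-\e}\ket{0}+\sqrt{\e}\ket{1}$ and $M=\ketbra{0}$, so that $\sqrt{M}=M$, $\sqrt{M}\rho\sqrt{M}=(1-\e)\ketbra{0}$, and $\Tr M\rho=1-\e$; a direct $2\times2$ computation then saturates~\eqref{gentle_measurement_fidelity_app}--\eqref{gentle_measurement_trace_distance_app} and~\eqref{gentle_measurement_gen_trace_distance_app} for every $\e$, and~\eqref{gentle_measurement_unnormalised_trace_distance_app} for $\e\leq 2/3$; for $\e>2/3$ the same example with parameter $2/3$ in place of $\e$ still satisfies $\Tr M\rho=1/3\geq 1-\e$ and attains the value $1/\sqrt{3}$. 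I expect the main obstacle to be purely computational rather than conceptual: extracting the sharp constant $\sqrt{\e(1-3\e/4)}$ from the Cauchy--Schwarz step (which hinges on the exact identity for $\rho-P\rho P$ and on the inequality $\Tr\rho\sqrt{M}\geq\Tr\rho M$), correctly separating the $\e\leq 2/3$ and $\e>2/3$ regimes, and verifying the monotonicity of the one-variable function that controls~\eqref{gentle_measurement_gen_trace_distance_app}.
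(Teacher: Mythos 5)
Your proposal is correct, and for the two bounds that constitute the real content of the lemma --- \eqref{gentle_measurement_unnormalised_trace_distance_app} and \eqref{gentle_measurement_gen_trace_distance_app} --- it takes a genuinely different route from the paper. The paper purifies both $\rho$ and $\sqrt{M}\rho\sqrt{M}$ against a maximally entangled reference, applies data processing of the (generalised) trace distance, and then uses the exact formula $\|X\|_1=\sqrt{2\Tr(X^2)-(\Tr X)^2}$ for rank-two Hermitian operators with non-positive determinant; this reduces the problem to maximising $\sqrt{\left(\tfrac{1+x}{2}\right)^2-y^2}$ over $x=\Tr\rho M\in[1-\e,1]$ and $x\leq y=\Tr\rho\sqrt{M}\leq\sqrt{x}$. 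Your argument instead stays entirely at the operator level: the symmetrised identity $\rho-P\rho P=\tfrac12[(\id-P)\rho(\id+P)+(\id+P)\rho(\id-P)]$ followed by the triangle inequality and the Schatten--H\"older bound produces $\left(\tfrac{1+\delta}{2}\right)^2-a^2$ with $\delta=\Tr\rho M$ and $a=\Tr\rho\sqrt{M}$, which is literally the same scalar optimisation (and you resolve it the same way, by pushing $a$ down to $\delta$ via $\sqrt{M}\geq M$). What your route buys is self-containedness --- no purification, no appeal to data processing, no eigenvalue formula --- and it makes transparent that the sharp constant $\sqrt{\e(1-3\e/4)}$ is exactly what one gets by symmetrising the classical Winter/Ogawa--Nagaoka decomposition $(\id-P)\rho+P\rho(\id-P)$; what the paper's route buys is consistency with the pure-state machinery (the rank-two formula) reused in Lemma~\ref{lem:purified_Dmax_lowerbound} and Lemma~\ref{lem:pure_state_formulas}. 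Your fidelity arguments and your tightness example (which is the paper's qubit example with the roles of the state and the measurement direction interchanged, hence saturating all five bounds by the same $2\times2$ computation) are in substance identical to the paper's. The only steps you defer to computation --- the monotonicity of $x\mapsto\sqrt{x(1-\tfrac34x)}+\tfrac{x}{2}$ on $[0,1]$ and the inequality $\sqrt{\e(1-\tfrac34\e)}+\tfrac{\e}{2}\leq\sqrt{\e(2-\e)}$ --- do check out as you describe.
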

\end{boxed}
}

\begin{proof}
The two numbers $x\coloneqq \Tr \rho M$ and $y\coloneqq \Tr \rho \sqrt{M}$ satisfy the following constraints: first, $x\in [1-\e,1]$; second, $x\leq y$, because $M\leq \sqrt{M}$ as the eigenvalues of $M$ are between $0$ and $1$; third, $y\leq \sqrt{x}$, which follows easily from the Cauchy--Schwarz inequality. These constraints imply that
\bb
\sqrt{F}\Big(\rho,\, \sqrt{M}\rho\sqrt{M}\Big) &= \Tr \sqrt{\sqrt{\rho} \sqrt{M}\rho\sqrt{M} \sqrt{\rho}} = \Tr \sqrt{\rho} \sqrt{M} \sqrt{\rho} \\
&= \Tr \sqrt{M} \rho = y \geq x \geq 1-\e\, 
\ee
and analogously
\bb
\sqrt{F}\left(\rho,\, \frac{\sqrt{M}\rho \sqrt{M}}{\Tr M\rho}\right) &= \frac{\Tr \sqrt{\sqrt{\rho} \sqrt{M}\rho\sqrt{M} \sqrt{\rho}}}{\sqrt{\Tr M \rho}}  \geq \frac{x}{\sqrt{x}} \geq \sqrt{1-\e}\,. 
\ee
This proves~\eqref{gentle_measurement_fidelity_app} and~\eqref{gentle_measurement_fidelity_norm_app}, which in turn implies~\eqref{gentle_measurement_trace_distance_app} immediately due to the Fuchs--van de Graaf inequalities~\eqref{eq:fvdg}. We therefore move on to the proof of~\eqref{gentle_measurement_unnormalised_trace_distance_app} and~\eqref{gentle_measurement_gen_trace_distance_app}.

Defining the two purifications of $\rho$ and $\sqrt{M}\rho\sqrt{M}$ respectively given by $\ket{\alpha} \coloneqq (\sqrt{\rho} \otimes \id )\ket{\phi}$ (normalised) and $\ket{\beta}\coloneqq \left(\sqrt{\sqrt{M}\rho\sqrt{M}} \otimes \id\right)\ket{\phi}$ (subnormalised), where $\ket{\phi}\coloneqq \sum_{i=1}^d \ket{ii}$ is the un-normalised maximally entangled state, by data processing we see that
\bb
\frac12 \left\|\rho - \sqrt{M}\rho\sqrt{M}\right\|_1 &\leq \frac12 \left\|\alpha - \beta \right\|_1 \\
&= \sqrt{\left( \frac{\Tr[\alpha+\beta]}{2} \right)^2 - \Tr \alpha \beta} \\
&= \sqrt{\left(\frac{1+x}{2}\right)^2 - y^2}\, .
\ee
Here, in the first line we introduced the notation $\alpha \coloneqq \ketbra{\alpha}$ and similarly for $\beta$, and the second one follows from the general formula 
\begin{equation}\begin{aligned}
    \|X\|_1 = \sqrt{2 \Tr(X^2) - (\Tr X)^2}
\end{aligned}\end{equation}
valid for all rank-two Hermitian operators with non-positive determinant (cf.\ proof of Lemma~\ref{lem:purified_Dmax_lowerbound}).
All that remains to do it to prove that
\bb
\max_{x\in [1-\e,1],\ x\leq y\leq \sqrt{x}} \sqrt{\left(\frac{1+x}{2}\right)^2 - y^2} = f(\e) \coloneqq \begin{cases} \sqrt{\e\left(1-\frac{3\e}{4}\right)} & \text{if $\e\leq 2/3$,} \\ 1/\sqrt3 & \text{if $\e>2/3$} \end{cases} .
\ee
The maximum is always achieved when $y=x$. The rest of the calculation is left to the reader.

The case of the generalised trace distance is analogous. We have
\bb
\left\|\rho - \sqrt{M}\rho\sqrt{M}\right\|_+ &\leq \frac12 \left\|\alpha - \beta \right\|_1 + \frac12 \left| \Tr( \alpha - \beta) \right|\\
&=  \sqrt{\left(\frac{1+x}{2}\right)^2 - y^2} + \frac12 (1-x)
\ee
using now the data processing inequality for $\|\cdot\|_+$~\cite[Prop.~3.8]{tomamichel_2016}. An explicit maximisation over the feasible range of $x$ and $y$ yields the stated result.

To see that the bounds are tight, it suffices to consider the single-qubit state $\rho = \ketbra{0}$ and $M = \ketbra{\phi_\e}$, where $\ket{\phi_\e} \coloneqq \sqrt{1-\e} \ket{0} + \sqrt{\e} \ket{1}$. Then on the one hand we have that $\Tr \rho M = 1-\e$, while on the other
\bb
F\Big(\rho,\, \sqrt{M}\rho\sqrt{M}\Big) = \Tr \rho \sqrt{M} = |\braket{0|\phi_\e}|^2 = 1-\e\, ,
\ee
and moreover
\bb
\frac12 \left\|\rho - \sqrt{M}\rho\sqrt{M}\right\|_1 &= \frac12 \left\| \ketbra{0} - (1-\e)\,\ketbra{\phi_\e} \right\|_1 = \sqrt{\e\left(1-\frac{3\e}{4}\right)}\, ,\\
\left\|\rho - \sqrt{M}\rho\sqrt{M}\right\|_+  &= \sqrt{\e\left(1-\frac{3\e}{4}\right)} + \frac\e2.
\ee
When $\e > 2/3$, to achieve the tighter bound for $\frac12\norm{\cdot}{1}$ it suffices to choose $\rho = \ketbra{0}$ and $M = \ketbra{\phi_{2/3}}$. This completes the proof.
\end{proof}


\section{Hilbert projective metric}\label{app:hilbert}

As a curiosity for interested readers who made it all the way to Appendix~\ref*{app:hilbert},  we can show that the weak/strong converse duality studied in this work extends beyond the smooth max-relative entropy to the smooth Hilbert projective metric $D_\Omega$, defined as~\cite{bushell_1973,reeb_2011}
\begin{equation}\begin{aligned}
    D_{\Omega}^{\smoothing{\Delta}{\ve}} (\rho\|\sigma) &\coloneqq \inf_{\rho' \in \B^\ve_{\Delta}(\rho)} D_\Omega(\rho' \| \sigma),\\
    D_\Omega(\rho \| \sigma) &\coloneqq D_{\max}(\rho\|\sigma) + D_{\max}(\sigma\|\rho),
\end{aligned}\end{equation}
where  $\Delta$ stands for one of $(\sTnorm)$, $(\sTsub)$, $(\sPnorm)$, $(\sPsub)$. 

The connection relies on a technical lemma that connects $D_\Omega^{\smoothing{\Delta}{\ve}}$ with $D_{\max}^{\smoothing{\Delta}{\ve}}$.
One can easily notice that $D_\Omega$ is typically much larger than $D_{\max}$, and indeed $D_{\Omega}(\rho \| \sigma) = \infty$ unless $\rho$ and $\sigma$ both have the same support.
Perhaps surprisingly, smoothing makes this difference (asymptotically) negligible. The finding below was originally shown to hold in the asymptotic i.i.d.\ setting~\cite{regula_2023}, although it is not difficult to extract the following one-shot statement from the proof in~\cite{regula_2023}.

\begin{lemma}\label{lem:hilbert} For all $\ve \in (0,1)$ and all $\eta \in (0,\ve)$, it holds that
\begin{equation}\begin{aligned}
    D^{\s{T}{n}[\ve+\eta]}_{\Omega}(\rho \| \sigma) - \log\frac{1}{\eta} \leq \Dmax{T}{n}(\rho \| \sigma) \leq D^{\s{T}{n}}_{\Omega}(\rho \| \sigma),\\
     D^{\s{P}{n}[\ve+\eta]}_{\Omega}(\rho \| \sigma) - \log\frac{1}{\eta^2} \leq \Dmax{P}{n}(\rho \| \sigma) \leq D^{\s{P}{n}}_{\Omega}(\rho \| \sigma).
\end{aligned}\end{equation}
\end{lemma}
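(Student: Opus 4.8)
The plan is to establish the two right-hand (upper-bound) inequalities by a one-line monotonicity argument, and the two lower bounds by mixing a near-optimal smoothing of $\rho$ with $\sigma$. The upper bounds are immediate: for any normalised states $\tau,\sigma$ one has $D_{\max}(\sigma\|\tau) = \log\inf\lset\lambda\geq 0\bar\sigma\leq\lambda\tau\rset\geq 0$, since $\Tr\sigma=\Tr\tau=1$ forces every feasible $\lambda$ to be at least $1$; hence $D_\Omega(\tau\|\sigma) = D_{\max}(\tau\|\sigma)+D_{\max}(\sigma\|\tau)\geq D_{\max}(\tau\|\sigma)$ for every normalised $\tau$, and taking the infimum over $\tau\in\B^\ve_\Delta(\rho)$ gives $\Dmax{T}{n}(\rho\|\sigma)\leq D^{\s{T}{n}}_{\Omega}(\rho\|\sigma)$ and $\Dmax{P}{n}(\rho\|\sigma)\leq D^{\s{P}{n}}_{\Omega}(\rho\|\sigma)$.

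For the lower bounds, fix $\eta\in(0,\ve)$; we may assume $D^\ve_{\max}(\rho\|\sigma)$ on the right-hand side is finite, as otherwise there is nothing to prove. Pick $\rho'\in\B^\ve_\Delta(\rho)$ with $\rho'\leq\lambda\sigma$ where $\log\lambda$ equals $\Dmax{T}{n}(\rho\|\sigma)$ (resp.\ $\Dmax{P}{n}(\rho\|\sigma)$) --- the infimum is attained by compactness of the smoothing ball and lower semicontinuity of $D_{\max}(\cdot\|\sigma)$, or one may simply run the argument with a near-optimal $\rho'$ and pass to a limit. Crucially, $\lambda\geq 1$ because $\rho'$ and $\sigma$ are normalised. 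Now form the mixture $\rho''\coloneqq(1-t)\rho'+t\sigma$, which is again a normalised state, for a parameter $t\in(0,1)$ to be fixed. From $\rho''\geq t\sigma$ we get $D_{\max}(\sigma\|\rho'')\leq\log\frac1t$, and from $\rho''\leq\big((1-t)\lambda+t\big)\sigma\leq\lambda\sigma$ (here is where $\lambda\geq 1$ enters) we get $D_{\max}(\rho''\|\sigma)\leq\log\lambda$; together these give $D_\Omega(\rho''\|\sigma)\leq D^\ve_{\max}(\rho\|\sigma)+\log\frac1t$.

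It remains to choose $t$ so that $\rho''$ lies in the enlarged ball $\B^{\ve+\eta}_\Delta(\rho)$, with $\log\frac1t$ matching the stated penalty. For the trace distance I would take $t=\eta$ and use the triangle inequality together with $\frac12\norm{\rho-\sigma}{1}\leq 1$: then $\frac12\norm{\rho-\rho''}{1}\leq(1-\eta)\,\frac12\norm{\rho-\rho'}{1}+\eta\,\frac12\norm{\rho-\sigma}{1}\leq(1-\eta)\ve+\eta\leq\ve+\eta$, so $\rho''\in\B^{\ve+\eta}_{\sTnorm}(\rho)$ and hence $D^{\s{T}{n}[\ve+\eta]}_{\Omega}(\rho\|\sigma)\leq D_\Omega(\rho''\|\sigma)\leq\Dmax{T}{n}(\rho\|\sigma)+\log\frac1\eta$. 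For the purified distance I would take $t=\eta^2$ and use (joint) concavity of the fidelity: $F(\rho,\rho'')\geq(1-\eta^2)F(\rho,\rho')\geq(1-\eta^2)(1-\ve^2)$, whence $P(\rho,\rho'')^2=1-F(\rho,\rho'')\leq\ve^2+\eta^2(1-\ve^2)\leq\ve^2+2\ve\eta+\eta^2=(\ve+\eta)^2$, so $\rho''\in\B^{\ve+\eta}_{\sPnorm}(\rho)$ and hence $D^{\s{P}{n}[\ve+\eta]}_{\Omega}(\rho\|\sigma)\leq\Dmax{P}{n}(\rho\|\sigma)+\log\frac{1}{\eta^2}$.

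None of this is genuinely hard --- it is essentially the mixing construction implicit in~\cite{regula_2023} --- so there is no real obstacle; the only points requiring a little attention are that the mixture must land in the $(\ve+\eta)$-ball with exactly the advertised penalty (in the purified case this rests on concavity of the fidelity and on the elementary estimate $\eta^2(1-\ve^2)\leq 2\ve\eta+\eta^2$), and that the argument is carried out with normalised smoothing on both sides, since the inequality $\lambda\geq 1$ --- which is what makes $(1-t)\lambda+t\leq\lambda$ and thus keeps $D_{\max}(\rho''\|\sigma)$ under control --- is precisely where normalisation is used.
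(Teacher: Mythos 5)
Your proof is correct and follows essentially the same route as the paper: the upper bounds from non-negativity of $D_{\max}(\sigma\|\cdot)$, and the lower bounds via the mixture $(1-t)\rho'+t\sigma$ with $t=\eta$ (trace) or $t=\eta^2$ (purified), using $\lambda\geq 1$ to keep $D_{\max}(\rho''\|\sigma)\leq\log\lambda$ and $\rho''\geq t\sigma$ for the reverse term. The only cosmetic difference is that you verify ball membership via convexity of the trace norm and separate concavity of the fidelity, where the paper uses the triangle inequality through $\rho'$ together with operator monotonicity of the square root; both give the same $\ve+\eta$ radius.
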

\begin{proof}
The upper bound is obvious from the definition. For the lower bound, consider trace distance first and let $\rhos$ be any feasible state for $\Dmax{T}{n}(\rho\|\sigma)$, that is, one such that $\frac12\norm{\rho - \rhos}{1} \leq \ve$ and $\rhos \leq \lambda \sigma$. Define now
\begin{equation}\begin{aligned}
    \rho'' \coloneqq (1-\eta) \rhos + \eta \, \sigma.
\end{aligned}\end{equation}
Noticing that
\begin{equation}\begin{aligned}
    \frac12\norm{\rho'' - \rho}{1} &\leq \frac12\norm{\rho'' - \rho'}{1} + \frac12\norm{\rho' - \rho}{1} \leq \eta + \ve,
\end{aligned}\end{equation}
we have that $\rho''$ is a feasible state for $D^{\s{T}{n}[\ve+\eta]}_\Omega(\rho\|\sigma)$. 
We then bound
\begin{equation}\begin{aligned}
    \rho'' \leq (1-\eta) \lambda \sigma + \eta \sigma \leq (1-\eta) \lambda \sigma + \eta \lambda \sigma = \lambda \sigma,
\end{aligned}\end{equation}
where we simply used that $\lambda \geq 1$, and
\begin{equation}\begin{aligned}
    \sigma \leq \sigma + \frac{1-\eta}{\eta} \rhos = \frac{1}{\eta} \rho''.
\end{aligned}\end{equation}
This altogether gives
\begin{equation}\begin{aligned}
    D_{\Omega}(\rho'' \| \sigma) = D_{\max}(\rho''\|\sigma) + D_{\max}(\sigma \| \rho'') \leq \log \lambda + \log \frac{1}{\eta}.
\end{aligned}\end{equation}
Optimising over all feasible $\rho'$ concludes the proof of the trace distance case.

The statement for the purified distance is obtained similarly. Given a feasible state $\rho'$ with $P(\rho,\rho')\leq \ve$ we now construct
\begin{equation}\begin{aligned}
    \rho'' \coloneqq (1-\eta^2) \rhos + \eta^2 \, \sigma,
\end{aligned}\end{equation}
which satisfies that
\begin{equation}\begin{aligned}
    P(\rho'', \rho) \leq P(\rho'',\rho') + \ve
\end{aligned}\end{equation}
by the triangle inequality. Now, we can use the operator monotonicity of the square root to get
\begin{equation}\begin{aligned}
    F(\rho'', \rho') &\geq (1-\eta^2) \, F\left( \rho', \rho' \right) = 1-\eta^2
\end{aligned}\end{equation}
which gives $P(\rho'', \rho') \leq \eta$. The rest of the proof is analogous.
\end{proof}
Combined with Theorem~\ref{cor:wsc}, we obtain  a type of weak/strong converse duality relation that relates $D^{1-\ve}_H$ directly with $D^{\ve}_{\Omega}$.

\begin{boxed}
\begin{corollary}\label{cor:hilbert_wsc}
    For all $\ve \in (0,1)$, all $\eta \in (0,1-\sqrt{\ve})$, and all $\mu \in (0,\ve)$ it holds that
\begin{equation}\begin{aligned}
    D^{\s{T}{n}[\sqrt\ve+\eta]}_{\Omega}(\rho \| \sigma) - \log\frac{1}{\eta} + \log\frac{1}{\ve}\leq D^{1-\ve}_H(\rho \| \sigma) \leq D^{\s{T}{n}[\ve-\mu]}_{\Omega}(\rho \| \sigma) + \log\frac{1}{\mu},\\
       D^{\s{P}{n}[\sqrt\ve+\eta]}_{\Omega}(\rho \| \sigma) - \log\frac{1}{\eta^2} + \log\frac{1}{\ve}\leq D^{1-\ve}_H(\rho \| \sigma) \leq D^{\s{P}{n}[\sqrt{\ve-\mu}]}_{\Omega}(\rho \| \sigma) + \log\frac{1}{\mu^2}.
\end{aligned}\end{equation}
\end{corollary}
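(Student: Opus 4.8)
The plan is to derive Corollary~\ref{cor:hilbert_wsc} purely by composing two results already in hand: the weak/strong converse duality between $D^\ve_{\max}$ and $D^{1-\ve}_H$ from Theorem~\ref{cor:wsc}, and the one-shot comparison between the smooth max-relative entropy and the smooth Hilbert projective metric from Lemma~\ref{lem:hilbert}. No new estimate is needed; the entire argument is a chain of inequalities, the only bookkeeping being how the smoothing radii and the additive $\log$-terms accumulate.

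\textbf{Lower bounds.} I would start from the left-hand inequalities of~\eqref{eq:wsc_trace} and~\eqref{eq:wsc_purified}, i.e.\ $\Dmax{T}{n}[\sqrt\ve](\rho\|\sigma)+\log\frac1\ve \leq D^{1-\ve}_H(\rho\|\sigma)$ and the analogous purified statement with $\Dmax{P}{n}[\sqrt\ve]$. Then apply Lemma~\ref{lem:hilbert} with its error parameter set to $\sqrt\ve\in(0,1)$ and its increment set to $\eta$, which gives $D^{\s{T}{n}[\sqrt\ve+\eta]}_{\Omega}(\rho\|\sigma)-\log\frac1\eta \leq \Dmax{T}{n}[\sqrt\ve](\rho\|\sigma)$ and $D^{\s{P}{n}[\sqrt\ve+\eta]}_{\Omega}(\rho\|\sigma)-\log\frac1{\eta^2}\leq \Dmax{P}{n}[\sqrt\ve](\rho\|\sigma)$. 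Substituting these into the two $D^{1-\ve}_H$ inequalities produces exactly the left-hand sides of the corollary. For the upper bounds, the argument is even shorter: from $D_\Omega\geq D_{\max}$ pointwise one gets $\Dmax{T}{n}[\delta](\rho\|\sigma)\leq D^{\s{T}{n}[\delta]}_{\Omega}(\rho\|\sigma)$ (and the purified analogue) after smoothing — this is the trivial half of Lemma~\ref{lem:hilbert}. Combining with the right-hand side of~\eqref{eq:wsc_trace}, $D^{1-\ve}_H(\rho\|\sigma)\leq\Dmax{T}{n}[\ve-\mu](\rho\|\sigma)+\log\frac1\mu$, gives the trace upper bound; for the purified case one uses the looser form~\eqref{eq:wsc_purified2}, $D^{1-\ve}_H(\rho\|\sigma)\leq\Dmax{P}{n}[\sqrt{\ve-\mu}](\rho\|\sigma)+\log\frac1{\mu^2}$, and then passes to $D^{\s{P}{n}[\sqrt{\ve-\mu}]}_{\Omega}$.

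\textbf{Main obstacle.} The only delicate point is the matching of the parameter ranges, and it is minor. Applying Lemma~\ref{lem:hilbert} with error $\sqrt\ve$ as literally stated would demand $\eta\in(0,\sqrt\ve)$, whereas the corollary permits $\eta$ up to $1-\sqrt\ve$; for $\ve<\tfrac14$ the latter interval is strictly larger. This is harmless: inspecting the proof of Lemma~\ref{lem:hilbert}, the construction $\rho'' = (1-\eta)\rho'+\eta\sigma$ only uses $\eta\in(0,1)$, so the stated restriction can be relaxed to any $\eta$ keeping the Hilbert-smoothed quantity non-trivial, and the composition then goes through for all $\eta\in(0,1-\sqrt\ve)$. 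I would either note this relaxation explicitly or, to avoid all ambiguity, simply re-run the three-line proof of Lemma~\ref{lem:hilbert} inline with $\sqrt\ve$ and $\eta$ as in the corollary. One should also remember to invoke the looser purified bound~\eqref{eq:wsc_purified2} rather than~\eqref{eq:wsc_purified}, since the corollary is stated with the clean error term $\log\frac1{\mu^2}$; everything else is routine.
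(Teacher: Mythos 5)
Your proposal is correct and matches the paper's (essentially one-line) argument: the corollary is obtained by chaining Lemma~\ref{lem:hilbert} (applied with smoothing radius $\sqrt{\ve}$ and increment $\eta$) with the lower bounds of Theorem~\ref{cor:wsc}, and the upper bounds follow from $D_{\max}\leq D_\Omega$ together with the right-hand sides of~\eqref{eq:wsc_trace} and~\eqref{eq:wsc_purified2}. Your observation that the restriction $\eta\in(0,\ve)$ in Lemma~\ref{lem:hilbert} is an artefact of its statement — the construction $\rho''=(1-\eta)\rho'+\eta\sigma$ works for any $\eta\in(0,1)$ — is accurate and correctly justifies the wider range $\eta\in(0,1-\sqrt{\ve})$ in the corollary.
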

\end{boxed}

The result of Lemma~\ref{lem:hilbert} furthermore implies that not only do the smooth max-relative entropy and the smooth Hilbert projective metric have the same first-order asymptotics~\cite{regula_2023}, also their second-order asymptotic expansion is the same --- for $D^{\s{P}{n}[\ve]}_{\Omega}$, it is given by the result of~\cite{tomamichel_2013}.


\end{document}